\documentclass[11pt]{article}

\usepackage[utf8]{inputenc}
\usepackage[T1]{fontenc}
\usepackage[letterpaper,margin=1in]{geometry}
\usepackage{setspace}
\usepackage{tgtermes}
\usepackage{newtxtext}
\usepackage{newtxmath}
\usepackage{bm}

\usepackage{amsmath,amssymb,amsthm,mathtools}
\usepackage{bbm}                
\DeclareMathOperator*{\argmin}{arg\,min}

\providecommand{\MANUSCRIPTNO}[1]{}

\usepackage{graphicx}
\usepackage{booktabs}
\usepackage{subfig}
\usepackage[most]{tcolorbox}
\usepackage{tikz}
\usepackage{algorithm}
\usepackage{algpseudocode}
\usepackage[shortlabels]{enumitem}
\usepackage[dvipsnames,table]{xcolor}

\usepackage{endnotes}

\usepackage[english]{babel}
\usepackage[autostyle, english=american]{csquotes}
\MakeOuterQuote{"}

\usepackage{natbib}
\bibpunct[, ]{(}{)}{,}{a}{}{,}%
\usepackage[hidelinks]{hyperref}
\usepackage[nameinlink,noabbrev]{cleveref}

\theoremstyle{plain}
\newtheorem{theorem}{Theorem}
\newtheorem{lemma}[theorem]{Lemma}
\newtheorem{proposition}[theorem]{Proposition}

\theoremstyle{definition}
\newtheorem{definition}[theorem]{Definition}
\newtheorem{assumption}[theorem]{Assumption}
\newtheorem{example}[theorem]{Example}

\theoremstyle{remark}
\newtheorem{remark}[theorem]{Remark}

\crefname{theorem}{Theorem}{Theorems}
\Crefname{theorem}{Theorem}{Theorems}

\crefname{assumption}{Assumption}{Assumptions}
\Crefname{assumption}{Assumption}{Assumptions}

\crefname{definition}{Definition}{Definitions}
\Crefname{definition}{Definition}{Definitions}

\crefname{proposition}{Proposition}{Propositions}
\Crefname{proposition}{Proposition}{Propositions}

\crefname{lemma}{Lemma}{Lemmas}
\Crefname{lemma}{Lemma}{Lemmas}

\crefname{corollary}{Corollary}{Corollaries}
\Crefname{corollary}{Corollary}{Corollaries}

\crefname{example}{Example}{Examples}
\Crefname{example}{Example}{Examples}

\crefname{algorithm}{Algorithm}{Algorithms}
\Crefname{algorithm}{Algorithm}{Algorithms}

\crefname{section}{Section}{Sections}
\Crefname{section}{Section}{Sections}

\crefname{appendix}{Appendix}{Appendices}
\Crefname{appendix}{Appendix}{Appendices}

\usepackage{color}
\usepackage{color-edits}
\addauthor{Will}{red}
\addauthor{Xintong}{blue}

\newcommand{\bI}{\mathbbm{1}}
\newcommand{\cI}{\mathcal{I}}
\newcommand{\cN}{\mathcal{N}}

\newcommand{\cS}{\mathcal{S}}
\newcommand{\hphi}{\hat{\phi}}

\newcommand{\RMSEsoft}{\mathrm{RMSE}^\mathrm{soft}}
\newcommand{\minBF}{\mathsf{minBF}}
\newcommand{\BF}{\mathsf{BF}}
\newcommand{\Mult}{\mathsf{Mult}}

\newcommand{\Npar}{N^{\mathrm{par}}}
\newcommand{\Leaf}{\mathrm{Leaf}}

\usepackage{xspace}
\newcommand{\apple}{\raisebox{-0.185em}{\includegraphics[height=1.1em]{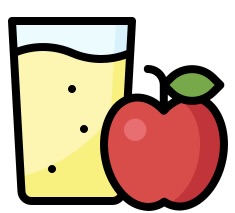}}\xspace}
\newcommand{\orange}{\raisebox{-0.2em}{\includegraphics[height=1.15em]{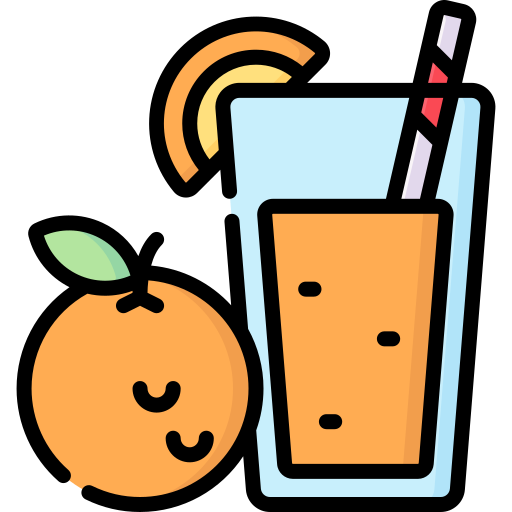}}\xspace}
\newcommand{\milk}{\raisebox{-0.2em}{\includegraphics[height=1.1em]{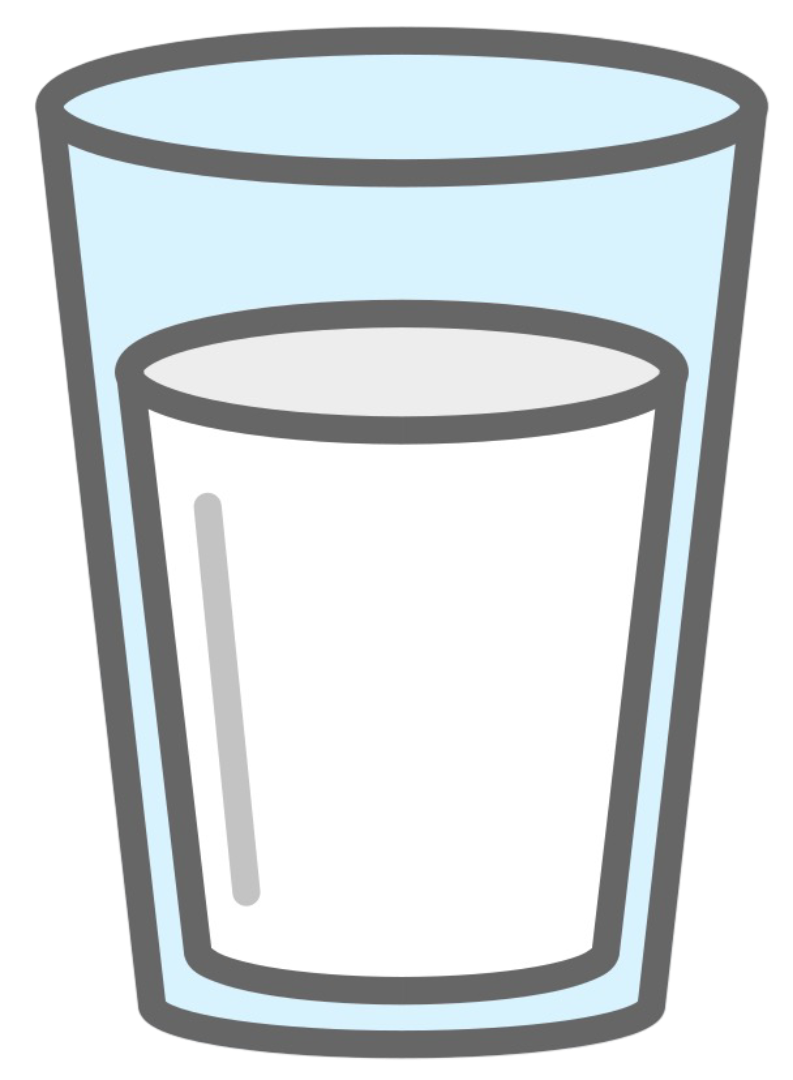}}\xspace}
\newcommand{\boba}{\raisebox{-0.2em}{\includegraphics[height=1.15em]{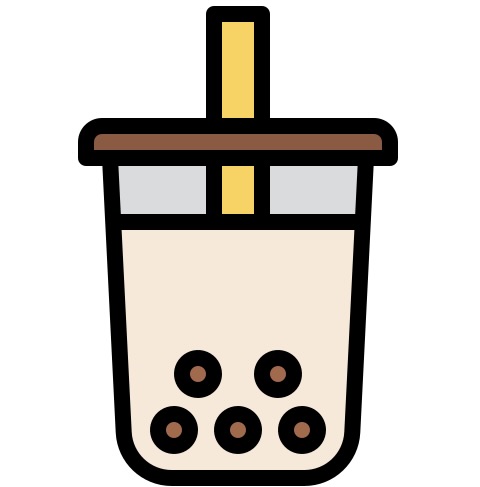}}\xspace}
\newcommand{\tea}{\raisebox{-0.2em}{\includegraphics[height=1.15em]{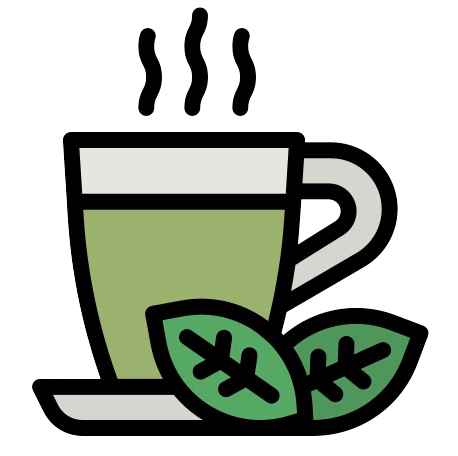}}\xspace}
\newcommand{\wine}{\raisebox{-0.2em}{\includegraphics[height=1.1em]{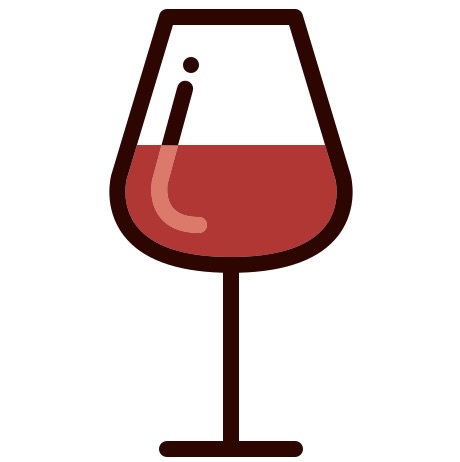}}\xspace}
\newcommand{\coffee}{\raisebox{-0.2em}{\includegraphics[height=1.15em]{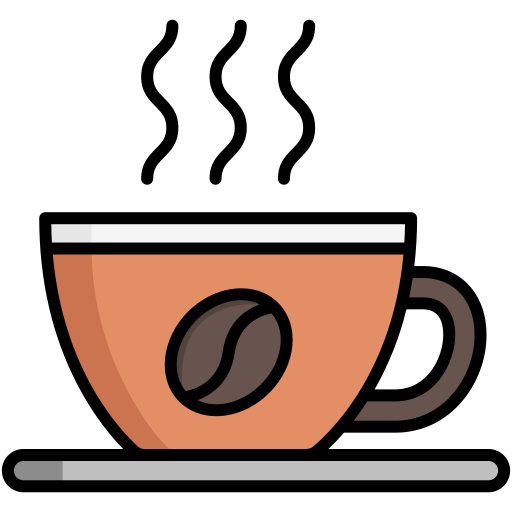}}\xspace}
\newcommand{\beer}{\raisebox{-0.2em}{\includegraphics[height=1.05em]{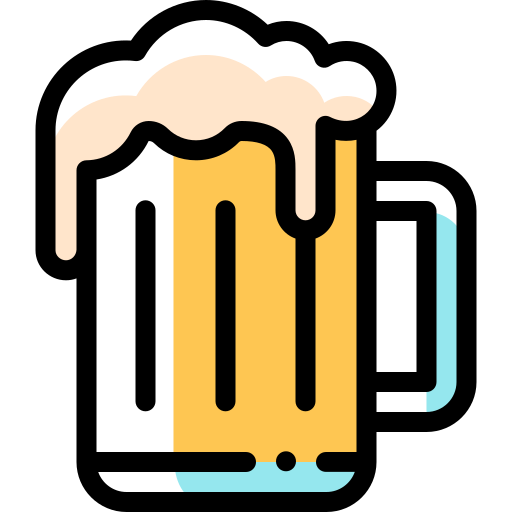}}\xspace}
\newcommand{\emojiCheck}{\raisebox{-0.2em}{\includegraphics[height=1em]{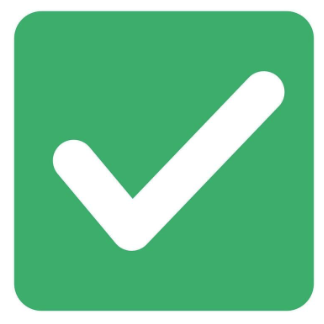}}\xspace}
\newcommand{\cross}{\raisebox{-0.2em}{\includegraphics[height=1.0em]{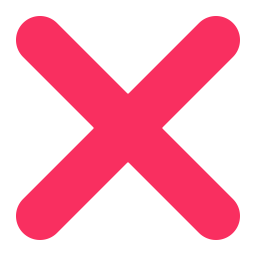}}\xspace}

\begin{document}

\title{Experimental Assortments for Choice Estimation and Nest Identification%
  \thanks{A preliminary version of this paper appeared in the proceedings of EC 2026.}}

\author{%
  Xintong Yu\thanks{Columbia University, \texttt{XYu29@gsb.columbia.edu}}
  \and
  Will Ma\thanks{Columbia University, \texttt{wm2428@gsb.columbia.edu}}
  \and
  Michael Zhao\thanks{Dream Sports, \texttt{michael.zhao@dreamsports.group}}%
}

\date{}                

\maketitle

\begin{abstract}
What assortments (subsets of items) should be offered to collect data for estimating a choice model over $n$ total items? We propose a structured, non-adaptive experiment design requiring only $O(\log n)$ distinct assortments, each offered repeatedly. The design consistently outperforms randomized and other heuristic designs across an extensive numerical benchmark that estimates multiple different choice models under a variety of (possibly mis-specified) ground truths.

We then focus on Nested Logit choice models, which cluster items into \emph{nests} of close substitutes.
Whereas existing Nested Logit estimation procedures assume the nests to be known and fixed, we present a logical deduction algorithm that identifies the nest partition from collected data.
Combined with our experiment design, the algorithm recovers the correct partition from the super-exponentially many possible nest structures using only $O(\log n)$ non-adaptive experimental assortments, a rate that we show to be information-theoretically optimal.

Our experiment design was deployed to collect data from over 70 million users at Dream11, an Indian fantasy sports platform that offers different types of contests, with rich substitution patterns between them. We identify nests based on the collected data, which lead to better out-of-sample choice prediction than ex-ante clustering from contest features. Our identified nests are ex-post justifiable to Dream11 management.
\end{abstract}

\medskip
\noindent\textbf{Key words:} Choice modeling; Assortment experiments; Experimental design; Nested logit; Nest identification; Demand estimation.
\medskip
\maketitle


\section{Introduction}

Understanding agent choice is important in many applications:
a retailer wants to know how customers choose among different brands of a product;
a car dealer wants to know how buyers select from its available models;
a policymaker wants to know how citizens substitute among transportation modes.
The goal is to estimate a choice model, which specifies, for any menu or "assortment" of options,
the expected market share that each option would receive.

Choice models capture how offering a smaller assortment concentrates more market share on each remaining option.
In the simplest Multi-Nomial Logit (MNL) choice model, these market shares are assumed to all increase by the same \%. For example,
this can be illustrated by the sales on Day 2 in \Cref{table:introExample}: when milk (\milk) and boba (\boba) were not offered, the sales of apple juice (\apple) and orange juice (\orange) both increased by 20\% relative to Day 1, where all drinks were offered.
This suggests that on Day 2, some of the customers whose favorite drink would have been \milk or \boba chose \apple or \orange instead, following a 2 to 1 ratio that is consistent with the sales on Day 1.

\begin{table}
\centering
\begin{minipage}[c]{0.62\textwidth}
\centering
\begin{tabular}{c|c|c|c|c}
& \apple Sales & \orange Sales & \milk Sales & \boba Sales \\
\hline
Day 1 & 200 & 100 & 100 & 100 \\
Day 2 & 240 & 120 & not offered & not offered \\
Day 3 & not offered & 200 & 120 & 120 \\
\end{tabular}
\end{minipage}%
\hfill
\begin{minipage}[c]{0.34\textwidth}
\caption{Toy example showing sales of drinks under different assortments offered}
\label{table:introExample}
\end{minipage}
\end{table}

However, the MNL assumption is often violated, e.g.\ on Day 3 in \Cref{table:introExample}: when \apple was not offered, the sales of \orange doubled relative to Day 1 sales compared to the 20\% increases of \milk and \boba.
This calls for richer choice models such as Nested Logit, which partitions the items into "nests" of close substitutes.
Under this model, when a customer's favorite drink (\apple) is not offered, they are more likely to switch to a drink in the same nest (in this case, a different juice \orange) than to drinks in other nests (\milk, \boba).  Note that fewer total sales were lost on Day 3, because at least one item from each nest was offered, justifying the importance of learning the grouping of items into nests.

\paragraph{Experiment design problem.}
Some assortment variation is necessary for learning richer choice models, and this variation may need to be deliberately designed.  In the example in \Cref{table:introExample}, it is in fact impossible to discern whether \milk and \boba are close substitutes, because on each day they were either both offered or both unavailable.
This motivates our primary research question.
\begin{center}
\begin{tcolorbox}[colback=gray!20, colframe=white, width=0.8\textwidth, boxrule=0pt, arc=2mm, auto outer arc]
\itshape
How to deliberately design the assortments offered so that complex substitution patterns can be detected and rich choice models can be estimated?
\end{tcolorbox}
\end{center}
This question is largely ignored in papers that estimate empirical choice models, because they use observational data in which the assortments offered were beyond the researchers' control.
Meanwhile, papers that estimate choice models from synthetic data generally draw observations from randomized assortments, which is not the most efficient approach to data collection.

In this paper, we introduce a combinatorial design that deliberately arranges the items into a small number of experimental assortments, which are much more informative than randomized assortments.
We replicate the numerical framework of \citet{berbeglia2022comparative}, and show that by only replacing their randomized experimental assortments while keeping the ground truths and estimation methods fixed, estimation error is robustly decreased.
That is, even though our experiment design is motivated by Nested Logit, it significantly improves choice model estimation
regardless of whether the true and/or estimated models are Nested Logit.

For 21 days in Spring 2025, our experiment design was deployed across 70 million users at \textit{Dream11}, an Indian fantasy sports platform.  Dream11 offers different types of "contests" for users to join, and wants to understand how users choose between them.

\paragraph{Nest identification problem.}
After our experiments were deployed, the managers at Dream11 wanted to understand how the contests could be classified into nests of close substitutes.
The contests are diverse in many ways,
so a priori, it is difficult to subjectively classify them into nests (the way that one could classify \apple, \orange as "juices").
We instead want to classify items into nests based on the data collected, which motivates our second research question.
\begin{center}
\begin{tcolorbox}[colback=gray!20, colframe=white, width=0.8\textwidth, boxrule=0pt, arc=2mm, auto outer arc]
\itshape
How to automatically identify nests based on sales data, instead of relying on subjective classification?
\end{tcolorbox}
\end{center}

Historically, papers on Nested Logit choice estimation (see \Cref{sec:nestedLogitEstimation}) have assumed the nests are fixed, focusing instead on estimating the other model parameters for a combination of tractability and interpretability reasons \citep[\S4]{train2009discrete}.  Standard statistics packages (e.g. \texttt{nlogit} in Stata) also make this assumption.
Papers that consider nest identification are surprisingly scant, as we discuss in \Cref{sec:nestedLogitEstimation}.

Our second contribution is to propose a new algorithm for nest identification, based on the simple intuition from \Cref{table:introExample}.
To elaborate, for each item in each experimental assortment (Day 2, Day 3), we define its \textit{boost factor} to be the ratio of its sales relative to its sales in the control assortment (Day 1) where all items were offered.
The algorithm then makes the following two types of deductions.
\begin{enumerate}[I.]
\item (\textit{"Small" Boost Factor}) Because \apple had a "small" boost factor ($\frac{240}{200}=1.2$) on Day 2, the unavailable items are not close substitutes.  That is, neither \milk nor \boba are in the same nest as \apple.  We can similarly deduce that neither \milk nor \boba are in the same nest as \orange.
\item (\textit{"Large" Boost Factor}) Because \orange had a "large" boost factor ($\frac{200}{100}=2$) on Day 3, at least one unavailable item is a close substitute.  This item must be \apple, and hence \apple,\orange are in the same nest.
\end{enumerate}
Our theoretical algorithm assumes that sales boosts of different magnitudes can be perfectly distinguished, in which case we prove that correct nest identification is guaranteed (see \Cref{sec:introTheory}).
We also develop an empirical version of the algorithm that handles noise, which we compare to the nest identification algorithm of \citet{benson2016relevance} on synthetic and real data (see \Cref{sec:numericalResultsIntro}).
In general, our nest identification algorithm differs from the literature by leveraging repeated observations under a small number of assortments, as motivated by our experiment design.

\subsection{Our Experiment Design} \label{sec:designDescr}

Our experiment design prescribes a small number of pre-determined assortments, making it easy to deploy in practical experimentation settings.
Indeed, a retail chain may only be able to experiment with one assortment per brick-and-mortar store, and it may want to experiment in parallel, so earlier outcomes cannot be used to adaptively choose later assortments.
In our drinks example where the randomization is over time, each day can be assigned only one assortment, so our design allows the experimentation to be completed in fewer days.

We now explain our design, for which we recall the example from \Cref{table:introExample}.  There, it was possible to deduce that \apple,\orange are in the same nest which does not include \milk or \boba, but not possible to deduce whether \milk,\boba are in the same nest.
To rectify this, our experiment design promises the following property: for any (ordered) pair of items $i,j$, there exists an experimental assortment $S$ containing $i$ but not $j$.
By looking at the assortment $S$ containing \milk but not \boba, we get a sense of how the removal of \boba boosted \milk's sales, which helps deduce whether they are in the same nest.

To satisfy this desired property, our experiment design gives each of the $n$ items 
a unique binary encoding with $L:=\lceil\log_2 n\rceil$ digits.
For each $\ell=1,\ldots,L$, we include two experimental assortments: one that contains all items with a "1" in the $\ell$'th digit of their binary encodings, and the complement assortment that contains all items with a "0" in their $\ell$'th digits.
This experiment design satisfies the desired property, because two distinct items $i,j$ with unique binary encodings must have different digits for at least one position $\ell$, and one of the two experimental assortments $S$ for that $\ell$ will then contain $i$ but not $j$.
An example of our design for $n=8$ is given in \Cref{table:introDesign}.

\begin{table}
\centering
\begin{tabular}{c|c|c|c|c|c|c|c|c}
Offered? (\emojiCheck/\cross) &  \milk 000 &  \apple 001 &  \boba 010 & \coffee 011 & \tea 100 &  \orange 101 & \beer 110 & \wine 111 \\
\hline
Control & \emojiCheck & \emojiCheck & \emojiCheck & \emojiCheck & \emojiCheck & \emojiCheck & \emojiCheck & \emojiCheck \\
Experiment 1 & \cross & \cross & \cross & \cross & \emojiCheck & \emojiCheck & \emojiCheck & \emojiCheck \\
Experiment 2 & \emojiCheck & \emojiCheck & \emojiCheck & \emojiCheck & \cross & \cross & \cross & \cross \\
Experiment 3 & \cross & \cross & \emojiCheck & \emojiCheck & \cross & \cross & \emojiCheck & \emojiCheck \\
Experiment 4 & \emojiCheck & \emojiCheck & \cross & \cross & \emojiCheck & \emojiCheck & \cross & \cross \\
Experiment 5 & \cross & \emojiCheck & \cross & \emojiCheck & \cross & \emojiCheck & \cross & \emojiCheck \\
Experiment 6 & \emojiCheck & \cross & \emojiCheck & \cross & \emojiCheck & \cross & \emojiCheck & \cross \\
\end{tabular}
\caption{Our experiment design for $n=8$ items, requiring $2\lceil\log_2 n\rceil=6$ experimental assortments in addition to the control assortment where all items are offered.  The items with their binary encodings are depicted in the columns.
}
\label{table:introDesign}
\end{table}

\subsection{Theoretical Results for Nest Identification} \label{sec:introTheory}

We now elaborate on the inferences made by our nest identification algorithm, first in a noiseless setting where exact market shares under the control and experimental assortments are observed.
The goal is to identify a hidden partition $\cN$ of the items $[n]:=\{1,\ldots,n\}$ into nests $N$, satisfying $\bigcup_{N\in\cN}N=[n]$ and $N\cap N'=\emptyset$ for all $N\neq N'$.
We make a "general position" assumption that the boost factors of items from different nests cannot coincidentally be the same.

Under this assumption, we show that the following deductions can be made from each experimental assortment $S$, which extend the deductions I and II introduced earlier.
\begin{enumerate}[I.]
\item For each nest $N$ with $N\subseteq S$, we would observe a small boost factor for all items in $N$, which allows us to separate $N$ from being in the same nest as any items in $[n]\setminus S$.
\item For each nest $N$ with $N\nsubseteq S$, we would observe a large and distinct boost factor for all items in $S\cap N$, which allows us to identify that they are all in the same nest, different from the nest of any other items in $S$ (i.e., we can separate $S\cap N$ from $S\setminus N$).
\end{enumerate}
See \Cref{fig:illustrateDeductions} for an example of these deductions.

\begin{figure}
\centering
\begin{minipage}[c]{0.3\textwidth}
\centering
\includegraphics[width=\linewidth]{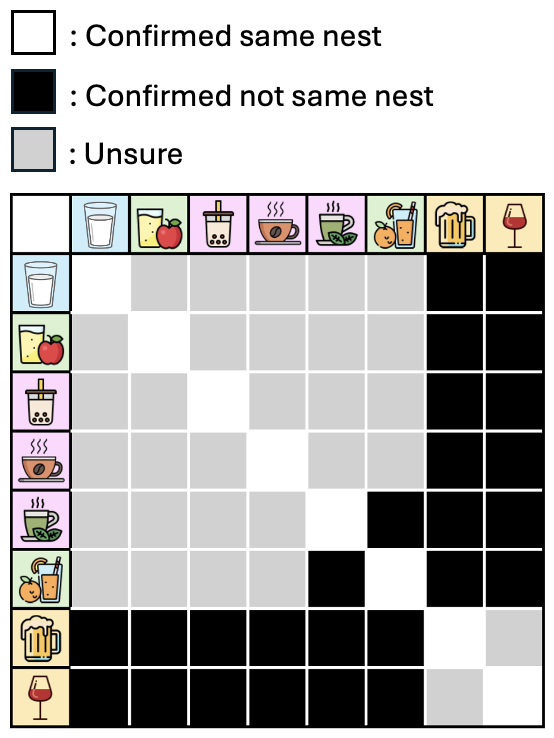}
\end{minipage}%
\hfill
\begin{minipage}[c]{0.69\textwidth}
\caption{Example deductions after Experiment~1 in \Cref{table:introDesign}.
"Small" boost factors were observed for $\beer$ and $\wine$, while "large" and distinct boost factors were observed for $\tea$ and $\orange$.  By deduction I., we know $\beer$ and $\wine$ are not in the same nest as any of the 6 other items, but we \textit{do not know} whether $\beer$ and $\wine$ are in the same nest.  By deduction II., we know $\tea$ and $\orange$ are also not in the same nest.
The table does not record the latent information that both $\tea$ and $\orange$ are in the same nest as at least one item outside $S$, even though this information is necessary for our nest identification. \\
The complete identification of nests is found in \Cref{sssec:algIllustration}.
}
\label{fig:illustrateDeductions}
\end{minipage}
\end{figure}

The hidden partition can be identified using $O(\log n)$ non-adaptive assortments via our experiment design followed by iterating these deductions over its experimental assortments.  That is, given any underlying partition of items into nests, our pipeline is guaranteed to correctly identify the ground truth and hence determine the nests for any Nested Logit choice model.
We also prove a matching lower bound: even adaptive designs require $\Omega(\log n)$ assortments.

Finally, we consider two extensions.
First, relaxing the assumption that exact market shares are observed, we prove finite-sample guarantees for our nest identification algorithms.
Second, the preceding deductions assume that an "outside option" (corresponding to not buying anything) can always be chosen, and that these choices are observed.  
We prove that our experiment design still guarantees correct nest identification without the outside option, by leveraging a more sophisticated nest identification algorithm that operates under weaker forms of deductions I and II.

\subsection{Numerical Results} \label{sec:numericalResultsIntro}

\paragraph{Comparing experiment designs.}
Following the numerical framework of \citet{berbeglia2022comparative}, we compare our deliberate experiment design with several baselines, including randomized assortments.
We first consider their mis-specified setting where the ground-truth choice model over 10 items follows a distribution over ranked lists.
Holding the number of experimental assortments fixed, we replace their random assortments of sizes 3--6 with our design while leaving the downstream estimation procedure unchanged.
Across 10 settings from their paper that vary the estimated choice model (Exponomial, MNL, Latent-class MNL, Nested Logit, or Markov Chain) and the data size, our design consistently lowers $\RMSEsoft$, with reductions up to 5.5\%, generally for smaller data sizes.
In this mis-specified setting, we also demonstrate the improvement from using our nest identification algorithm for Nested Logit.

When random designs are allowed to draw more distinct assortments under the same total observation budget, they can outperform ours for larger data sizes.
However, in well-specified settings where the estimated choice model agrees with the ground truth, our design can be the best across data sizes, even compared to \textit{individualized} random assortments.
The reduction in $\RMSEsoft$ can be as much as 16.9\%, being most significant under smaller data sizes.

\paragraph{Comparing nest identification pipelines.}
We compare our full pipeline to that of \citet{benson2016relevance}.
Under Nested Logit ground truths, our pipeline reduces $\RMSEsoft$ by as much as 46\%, mainly because our assortments have size approximately $n/2$ and reveal richer substitution information than \citet{benson2016relevance}'s size-2 or size-3 assortments.
When paired with our experiment design, their identification algorithm can perform comparably to ours under small data sizes by aggregating observations across assortments and reducing variance.
However, this aggregation introduces bias, leading to worse performance for large data.

Our nest identification algorithm is somewhat tied to our experiment design, but we also test it on the publicly available SFWork dataset \citep{KoppelmanBhat2006}, which has repeated observations but not our deliberate arrangement of assortments.
On this dataset, the out-of-sample prediction error of our nest identification algorithm is comparable to that of \citet{benson2016relevance}.

\subsection{Deployment at Dream11} \label{sec:dreamSportsIntro}

For a span of 21 days in spring 2025, our experiment design was deployed across 70 million users at Dream11.
Half of them were in the control group and saw all the contest types when deciding which contest to join.
For the other 35 million users, we experimented with $n=72$
contest types.
We created 14 experimental groups based on giving these 72 contests unique 7-digit binary encodings, ensuring that each experimental assortment had $\approx36$ contest types removed.
Each experimental assortment had to be manually checked by Dream11 management before proceeding, showing that even on a digital platform, it can be important to have a small number of experimental assortments.

After collecting the data, we estimate a Nested Logit choice model with nests identified by our algorithm, and show that it consistently outperforms the simpler MNL model on out-of-sample prediction error.
Importantly, it also outperforms a Nested Logit model with nests determined from $k$-means clustering on contest features (e.g., entry fee and contest size), which represents a standard, non-data-driven approach to nest identification.
Finally, although a Markov Chain choice model beats Nested Logit on prediction error given enough data, an important virtue of our approach is interpretability:
the learned nests reveal meaningful substitution patterns and strike a ``sweet spot'' between the over-simplified MNL and more flexible, higher-parameter alternatives.

\paragraph{Roadmap.}
Our experiment design is presented in \Cref{sec:ourExpDesign} and tested numerically in \Cref{sec:numericExpDesign}.
Our nest identification algorithm is analyzed theoretically in \Cref{sec:theory} and tested numerically in \Cref{sec:numericNestId}.
Deployment at Dream11 is documented in \Cref{sec:deployment}.
We provide a concluding discussion in \Cref{sec:conc}.

\section{Related Work} \label{sec:furtherRelated}

\subsection{Choice Estimation from Real Data} \label{sec:realDataEstimation}

Papers that estimate choice models from real data are often limited by the quality of the data, like whether product availability information is available in the first place \citep{conlon2013availability} and whether lost demand is observed \citep{vulcano2012estimating}.
Any further variation in product availability would have to come from stockouts and supply availability \citep[see][]{kok2007demand,bodea2009data}, or fluctuating covariates such as in a classical DVD dataset \citep{rusmevichientong2010dynamic,farias2013nonparametric}.
As a result, these papers are often bound by the data in terms of the choice models they estimate.

We have to our knowledge a rather unique luxury of estimating choice models from real data, when we also get to control the assortments offered to collect this data.
Our focus is therefore on how to design a small number of assortments to "maximize the variation" in product availability, instead of how to handle the lack of information \citep[see][]{vulcano2012estimating} or what is the right choice model complexity given the data at hand \citep[see][]{farias2013nonparametric}.

\subsection{Choice Estimation from Randomized Assortments} \label{sec:syntheticDataEstimation}

Although synthetic ground truths enable the trial of different experiment designs, papers that use them generally default to randomized experimentation.  There are examples of different randomization schemes:
\citet{csimcsek2018expectation,chen2025assortment} include each item with probability 1/2;
\citet{blanchet2016markov,berbeglia2022comparative} draw random assortments containing between 1/3 to 2/3 of the items; \citet{vanryzin2015market,vanryzin2017ranbased} draw random assortments with greater variation in sizes.
These papers focus on improving the choice estimation procedures, whereas we show that first-order improvements can be made by combinatorially (instead of randomly) selecting the assortments from which data is collected.

\subsection{Simple Experiment Designs for Estimation of Specific Choice Models} \label{sec:simpleDesigns}

We mention some non-random experiment designs that have been used to estimate different choice models.
For Markov Chain choice models, \citet[\S2.1]{blanchet2016markov} show how to identify all $\approx n^2$ parameters using $n+1$ assortments, consisting of the grand assortment $[n]$ and all assortments $[n]\setminus\{1\},\ldots,[n]\setminus\{n\}$ that remove one item.  We compare to this "Leave-one-out" experiment design in our simulations, which is generally not data-efficient because removing one item has a small effect that is difficult to statistically detect (see \Cref{sec:numericExpDesign}).

Meanwhile, for Latent-class MNL choice models, identification is often more algebraically involved \citep[e.g.][]{chierichetti2018learning,tang2020learning}.
The identification result of \citet{chierichetti2018learning} is based on an experiment design that consists of $O(n)$ adaptive or $O(n^2)$ non-adaptive assortments that have sizes 2 or 3, which is desirable for solving a system of equations.
For Nested Logit, \citet{benson2016relevance} use an experiment design with $O(n^2)$ adaptively-chosen assortments of sizes 2 or 3 to guarantee identification.
For MNL, identification is easy, but
\citet{shah2016estimation} use algebraic metrics to compare experiment designs, focusing on pairwise comparisons, i.e.\ assortments of size 2.
Finally, \citet{chen2025assortment} consider an experiment design for end-to-end assortment optimization in which the firm offers single-item assortments that are weighed non-uniformly due to asymmetric prices.
All in all, opposite to the problem of "Leave-one-out", the designs from these papers are data-inefficient due to being too small and not showing enough items (see \Cref{ssec:numericNestIdWO}).

In sum, our experiment design has higher general data efficiency than all of these designs, thanks to using assortments of size around $n/2$, which in some sense\footnote{This intuition is also consistent with the randomized designs discussed in \Cref{sec:syntheticDataEstimation}.} maximizes the information per observation.  At the same time, it uses fewer assortments than all of them---a \textit{sublinear} amount $O(\log n)$, based on binary encodings.
To that end, our theoretical results that guarantee correct identification for Nested Logit (\Cref{sec:theory}) are also quite surprising, relative to results above that require at least linearly many assortments for correct identification.

\subsection{Learning while Earning in Assortment Optimization}

We also mention a stream of work on experimental assortments that differs from ours by: (i) requiring adaptive experimentation; and (ii) caring about the revenue earned during experimentation.
These papers solve a bandit problem over a long horizon, which is very different in nature than our focus of successful choice estimation after a short horizon (during which not many distinct assortments can be tried).
Much work in this stream focuses on the MNL choice model \citep{agrawal2019mnl,chen2021optimal}, possibly with covariates \citep{cheung2017assortment,oh2021mnlcontextual}.
We defer to the recent paper \citet{li2025online} for further references.

\subsection{Combinatorial Design of Experiments}

Finally,
combinatorial arrangements that look similar to ours do frequently appear in the design of experiments, but our \textit{assortment} setting is quite different from the typical setting.  To elaborate, in the typical setting, there is a set of controls (e.g., temperature, time, pressure) that affect the outcome of interest, and the goal of the experiment design is to "cover" different interactions between these controls when it is not possible to test all combinations.  Common designs in this literature involve orthogonal arrays and Latin squares, dating back to \citet{fisher1971design,taguchi1986introduction}.
By contrast, in our experiment design, the controls are individual items, the outcome is multi-variate (i.e., the item chosen), and the high-level goal is to "separate" each pair of items.  Our deliberate arrangement of assortments ends up being a simple and natural combinatorial design in the context of \citet{colbourn2010crc}, but to our knowledge, it has not been previously used in experiment design.

\subsection{Nested Logit Choice Estimation and Nest Identification} \label{sec:nestedLogitEstimation}

Nested Logit has a long history dating back to \citet{williams1977formation,mcfadden1978}, and is a commonly-used estimation model in both econometrics \citep{train2009discrete} and assortment optimization \citep{gallego2014constrained}.
The majority of research on Nested Logit \citep[e.g.][]{ben1985discrete,brownstone1989efficient,hensher2002specification} has focused on estimating model parameters (preference weights for items and dissimilarity parameters for nests) assuming fixed nests, noting that even this is challenging because the Maximum Likelihood formulation is non-convex.
Another explanation for nests being fixed is that domain knowledge is often strong and desirable to impose (e.g., putting red-bus and blue-bus in the same nest; see \citet[\S4]{train2009discrete}).

That being said, several works since have brought to light the issue of nest identification.
\citet{benson2016relevance,kovach2022behavioral} are similar in spirit, using violations of the Independence of Irrelevant Alternatives (IIA) property between pairs of items $i,j$ to identify when $i$ and $j$ should be put into the same nest.
\citet{benson2016relevance} have a more explicit nest identification algorithm that we compare to in \Cref{ssec:numericNestIdWO}, while \citet[\S5]{kovach2022behavioral} solves a distance minimization problem that has restrictive data requirements as the number of items grows.
Relatedly, \citet{aboutaleb2020learning} find the nest partition and Nested Logit model parameters that collectively maximize likelihood, formulating the search problem using mixed-integer programming.
Their approach can naturally handle covariates, but also does not scale well in our experiments with more items.
These papers also consider multiple levels of nesting, which we discuss in \Cref{sec:dlevelLogit}.

All in all, our nest identification algorithm differs from these works by anchoring on our experiment design, assuming that a small number of experimental assortments $\cS$ have each been offered enough times such that the observed choice probabilities for each $S\in\cS$ (and resulting boost factors) are reliable.
Our algorithm does not aggregate observations across assortments, the way that e.g.\ \citet{benson2016relevance} considers all assortments containing $\{i,j\}$ when analyzing IIA for a given pair $i,j$, making our algorithm worse on "thin" datasets where each assortment has been observed only once.
Regardless, our approach is overall best when experiment design is part of the decision pipeline, as the experiment design suggested by \citet{benson2016relevance} is data-inefficient, offering assortments of a fixed small size (2 or 3) in order to stay unbiased (see \Cref{app:BensonImplementation}), which also causes it to require $\Omega(n^2)$ different assortments.

\subsection{Graph and Partition Recovery problems}

Our specific nest identification approach is also related to another stream of work: the oracle query–based graph reconstruction problem. The objective is to recover a hidden graph by querying oracles that reveal structural information about the graph. A variety of oracle outputs has been studied in the literature. Among them, the ones most closely related to our setting are edge-information oracles \citep{GREBINSKI1998147, GrebinskiKucherov2000, CHOI2010551} and, in more advanced settings, connected-components oracles \citep{black2025optimal, harviainen2025graph}. Under connected-components oracle, one submits a subset $S \subseteq V$ and receives either the number of connected components in $G[S]$ \citep{black2025optimal} or the explicit decomposition into components \citep{harviainen2025graph}.

Our oracle is tailored to Nested Logit and choice modeling, where the underlying
graph structure takes the form of cliques, as studied in \cite{alon2005}.
The oracle returns explicit components, yet it can provide even richer information due to the clique structure of the graph.
In prior work, deterministic non-adaptive bounds range from
$\Omega(n \log n)$ to $O(n \log^2 n)$ for edge-existence queries on vertex subsets \cite{alon2005},
while randomized adaptive methods achieve $O(n \log n)$ for general graph reconstruction \citep{harviainen2025graph}.
By contrast, leveraging both the oracle output and the clique structure reduces
the complexity to $O(\log n)$ in the deterministic non-adaptive setting.

\subsection{Substitute Detection via Price Variation}

A classical goal in empirical demand analysis is to recover own- and cross-price elasticities (or related substitution objects) from a demand system \citep[e.g.][]{DeatonMuellbauer1980AIDS}, with structural discrete-choice models often used to infer substitution patterns while addressing endogeneity of prices/attributes using instruments \citep{blp1995,Nevo2000CerealMerger}. These influential frameworks focus on identification from \textit{observational} price variation and \textit{endogeneity}, whereas our setting assumes the ability to \textit{randomly experiment} with product availability, where removing a product from the assortment is like setting its price to $\infty$.

A closer conceptual comparison is \citet{li2015value}, who ask how many price experiments are needed to learn a cross-elasticity matrix.  They have a similar punchline that $O(\log n)$ experiments suffice, but through very different mechanisms---in fact, randomized experiments work in their setting, but only under structural assumptions on the sparsity of the elasticity matrix, which depend on the continuous nature of pricing.  By contrast, our $O(\log n)$ result comes from our deliberate experiment design that "separates" pairs of items, and also our discrete combinatorial reasoning about nests (see \Cref{thm:mainResult,thm:idenWithoutOutside}).

\section{Our Experiment Design}
\label{sec:ourExpDesign}
Let there be $n$ items, denoted by the set $[n]$ where $[n]:=\{1,\ldots,n\}$.
Our experiment design gives each item $i\in[n]$ an encoding in base $b\ge2$ (the Introduction only considered $b=2$ and binary encodings).  We consider encodings with $L=\lceil\log_b n\rceil$ digits, which ensures that each item can have a unique encoding because $b^L\ge n$.  We let $\sigma(i)$ denote the base-$b$ encoding given to item $i$, with $\sigma_\ell(i)\in\{0,\ldots,b-1\}$ denoting the digit in position $\ell$ for all $\ell=1,\ldots,L$.  We place no requirements on the encodings other than two distinct items cannot have the same encoding.

Given these encodings, our experiment design $\mathcal{S}$ consists of the following $bL$ assortments: 
\begin{align*}
S_{\ell,-d} :=\{i\in[n]: \sigma_\ell(i)\neq d\} 
\\
\forall \ell=1,\ldots,L;d=0,\ldots,b-1.
\end{align*}
We assume that the control assortment $[n]$ is also offered, in addition to the assortments in $\mathcal{S}$.
\Cref{table:introDesign} illustrates our design for \( n = 8 \) and \( b = 2 \), with the experimental assortments appearing in order $S_{1,-0}$,$S_{1,-1}$, $S_{2,-0}$,$S_{2,-1}$,$S_{3,-0}$,$S_{3,-1}$.

A simple encoding $\sigma$ maps each item $i\in[n]$ to the number $i-1$ in base $b$.
In our numerical experiments, we instead use a balanced exposure design, which randomizes the encodings while ensuring that each item appears in nearly the same number of assortments.

\section{Theoretical Results for Nested Logit}
\label{sec:theory}

For all assortments $S\subseteq[n]$,
a \textit{choice function} $\phi(i,S)$ specifies the probability that each item $i\in S$ would be chosen when assortment $S$ is offered.
By default, we assume the existence of an outside option $i=0$ that could also be chosen, in which case the choice probabilities satisfy
$\sum_{i\in S\cup\{0\}}\phi(i,S)= 1$.  (We consider the case without outside option in \Cref{sec:outsideAbsTheo}.)

Our theoretical results assume that $\phi$ follows a \textit{Nested Logit} model\footnote{We also describe a generalized $d$-level Nested Logit model in \Cref{sec:dlevelLogit}.}.
That is, there is a partition $\mathcal{N}$ consisting of nests $N\subseteq [n]$, satisfying
\[
\bigcup_{N \in \mathcal{N}} N = [n], \qquad N \cap N' = \emptyset \ \forall\, N \neq N'.
\]
Each item $i$ is associated with a preference weight $v_i > 0$, and each nest $N \in \mathcal{N}$ is associated with a \emph{dissimilarity parameter} $\lambda_N \in [0,1]$, where a smaller $\lambda_N$ indicates greater within-nest correlation for nest $N$. $\lambda_N\in[0,1]$ is customary for ensuring a choice model consistent with random utility maximization; we defer to \citet{gallego2014constrained} for more background.
The preference weight for a nest $N$ within a subset $S\subseteq[n]$ is defined as:
\begin{align*}
v_N(S) &:=
\begin{cases}
(\sum_{i\in N\cap S}v_i)^{\lambda_N} &\lambda_N\in(0,1]
\\ v_N\bI(N\cap S\neq\emptyset) &\lambda_N=0
\end{cases}
\end{align*}
In the latter case, nest $N$ has an additional parameter $v_N$ representing its nest-level preference weight.

Let $N(i)$ denote the nest to which item $i$ belongs.
When offered an assortment $S\subseteq [n]$, the probability of $i\in S$ being chosen is:
\begin{align}
    &\phi(i, S) = P(N(i) \mid S) \cdot P(i \mid N(i), S) \quad \text{where}
    \\
    &P(N(i) \mid S) = \frac{v_{N(i)}(S)}{1+ \sum_{N \in \mathcal{N}} v_N(S)},
    \\
 & P(i \mid N(i), S) = \frac{v_i}{\sum_{j\in N(i)\cap S} v_j} . \label{eqn:nestProb}
\end{align}
$P(N(i) \mid S)$ denotes the probability of nest $N(i)$ being chosen from $S$, and $P(i \mid N(i), S)$ denotes the probability of $i$ being chosen conditional on $N(i)$ being chosen from $S$.  The "1" in the denominator corresponds to the outside option, which has a weight normalized to 1 and is always in its own nest.
The probability of the outside option being chosen is $$\phi(0,S)=\frac{1}{1+ \sum_{N \in \mathcal{N}} v_N(S)},$$
noting that this ensures $\sum_{i\in S\cup \{0\}}\phi(i,S)= 1$.

\begin{assumption}[Identifiability] \label{ass:identify}
$\lambda_N=1$ if and only if $|N|=1$.
\end{assumption}

\Cref{ass:identify} prevents identifiability issues in which different nest partitions induce the same choice probabilities $\phi(i,S)$, and can be made without loss of generality.
Indeed, any nest $N$ with $\lambda_N = 1$ and $|N| > 1$ can be replaced by $|N|$ singleton nests without affecting the
choice probabilities.
Conversely, a singleton nest $N=\{i\}$ with $\lambda_N < 1$ can be reparameterized by giving item $i$ the new weight $v_i^{\lambda_N}$ (or $v_N$ if $\lambda_N=0$) and setting $\lambda_N = 1$.

\begin{definition}[Nest Identification Problem, noiseless version]
Given a set of experimental assortments $\mathcal{S}$ under a Nested Logit model,
we observe choice probabilities $\phi(i,S)$ for all $i \in S\cup \{0\}$ and $S \in \mathcal{S}\cup\{[n]\}$.
The goal is to recover the underlying nest partition $\mathcal N$.
\end{definition}

\subsection{Boost Factors and Deductions about Nest Membership}\label{sec:boostFactor}

The key idea is to examine the \emph{boost} in an item's choice probability in an experimental assortment $S\in\mathcal{S}$, relative to the control assortment $[n]$ where all items are offered.

\begin{definition}[Boost Factors] \label{def:bf}
For all $S\in\mathcal{S}$, define the boost factor seen by item $i\in S\cup\{0\}$ as
$$
\mathsf{BF}(i,S):=\frac{\phi(i,S)}{\phi(i,[n])}.
$$
\end{definition}

Using \Cref{eqn:nestProb} for the Nested Logit model, we obtain the following:
\begin{align*}
\mathsf{BF}(i,S)
&=
\frac{\frac{v_{N(i)}(S)}{1+\sum_{N\in\mathcal{N}}v_N(S)}\cdot\frac{v_i}{\sum_{j\in N(i)\cap S} v_j}}
{\frac{v_{N(i)}([n])}{1+\sum_{N\in\mathcal{N}}v_N([n])}\cdot\frac{v_i}{\sum_{j\in N(i)} v_j}} 
\\
&=
\left(\frac{\sum_{j\in N(i)} v_j}{\sum_{j\in N(i)\cap S} v_j}\right)^{1-\lambda_{N(i)}}
\cdot
\frac{1+\sum_{N\in\mathcal{N}}v_N([n])}{1+\sum_{N\in\mathcal{N}}v_N(S)}
\\[0.5em]
\mathsf{BF}(0,S) 
&=
\frac{\phi(0,S)}{\phi(0,[n])}
=
\frac{1+\sum_{N\in\mathcal{N}}v_N([n])}{1+\sum_{N\in\mathcal{N}}v_N(S)}.
\end{align*}
Thus, we have $\mathsf{BF}(i,S)=\Mult(N(i),S)\mathsf{BF}(0,S)$, where $\Mult(N,S):=\left(\frac{\sum_{j\in N} v_j}{\sum_{j\in N\cap S} v_j}\right)^{1-\lambda_{N}}$ is a \textbf{nest-dependent multiplier} that, importantly, does not depend on the specific item.
For a nest $N\in\mathcal{N}$ such that $N\cap S\neq\emptyset$, we note the following about $\Mult(N,S)$:
\begin{enumerate}[I., nosep]
\item If $N\subseteq S$, then $\Mult(N,S)=1$, and hence $\mathsf{BF}(i,S)=\mathsf{BF}(0,S)$ for all $i\in N\cap S$;
\item If $N\nsubseteq S$, then $|N|\ge 2$, since $N\cap S\neq\emptyset$.  Under \Cref{ass:identify}, this means that $\lambda_N<1$, and hence $\Mult(N,S)>1$, with $\mathsf{BF}(i,S)=\mathsf{BF}(j,S)>\mathsf{BF}(0,S)$ for all $i,j\in N\cap S$.
\end{enumerate}

We make a "general position" assumption that, for two different nests not contained in $S$, their multipliers (which are strictly greater than 1) cannot coincidentally be the same.

\begin{assumption}[General Position] \label{ass:genPos}
For all $S\in\mathcal{S}$ and two different nests $N\neq N'$ such that $\emptyset\neq N\cap S\neq N$ and $\emptyset\neq N'\cap S\neq N'$, we have $\Mult(N,S)\neq\Mult(N',S)$.
\end{assumption}

Two items $i,j$ in the same nest are guaranteed to see the same boost factor, regardless of whether this boost factor equals $\mathsf{BF}(0,S)$.
\Cref{ass:genPos} ensures the converse: if $N(i)\neq N(j)$, then $\mathsf{BF}(i,S)\neq \mathsf{BF}(j,S)$, except when both $N(i)\subseteq S,N(j)\subseteq S$ and we see $\mathsf{BF}(i,S)=\mathsf{BF}(j,S)=\mathsf{BF}(0,S)$.

\begin{example} \label{eg:deductions}
Consider Experiment 2 from \Cref{table:introDesign}, which offers the assortment $S:=S_{1,-1}=\{\milk,\apple,\boba,\coffee\}$.  Suppose we see the boost factors $\mathsf{BF}(\milk,S)=1.3$, $\mathsf{BF}(\apple,S)=1.6$, $\mathsf{BF}(\boba,S)=1.9$, $\mathsf{BF}(\coffee,S)=1.9$, and $\mathsf{BF}(0,S)=1.3$.
We can take the contrapositives of the preceding observations to make the following deductions about nest membership.
\begin{itemize}[nosep]
\item $N(\milk)\subseteq S$, because if $N(\milk)\nsubseteq S$ then we would have seen $\mathsf{BF}(\milk,S)>\mathsf{BF}(0,S)$, which is not the case.  From this we can deduce that $N(\milk)\neq N(k)$ for all $k\notin S$.
\item $N(\boba)=N(\coffee)$, because if $N(\boba)\neq N(\coffee)$ then we could not coincidentally see $\mathsf{BF}(\boba,S)=\mathsf{BF}(\coffee,S)$, by \Cref{ass:genPos}.
\item $\{\milk\}$, $\{\apple\}$, and $\{\boba,\coffee\}$ are all part of different nests.  This follows by taking the contrapositive of the fact that items in the same nest see the same boost factor under each assortment $S$.
\end{itemize}
\end{example}

We summarize these observations in the following \namecref{prop:outsideNew}.

\begin{proposition}[Nest Deductions with Outside Option] \label{prop:outsideNew}
Suppose \Cref{ass:identify} and \ref{ass:genPos} hold and take any $S\in\mathcal{S}$.  For all $i\in S$:
\begin{enumerate}[I., nosep]
\item If $N(i)\subseteq S$, then $\mathsf{BF}(i,S)=\mathsf{BF}(0,S)$;
\item If $N(i)\nsubseteq S$, then $\mathsf{BF}(i,S)>\mathsf{BF}(0,S)$, and moreover for $j\in S\setminus\{i\}$, we have $N(i)=N(j)$ if and only if $\mathsf{BF}(i,S)=\mathsf{BF}(j,S)$.
\end{enumerate}
Therefore, we can make the following deductions about nest membership.
\begin{itemize}[nosep]
\item If $i\in S$ and $\mathsf{BF}(i,S)=\mathsf{BF}(0,S)$, then $N(i)\neq N(k)$ for all $k\notin S$.
\item If $i,j\in S$ and $\mathsf{BF}(i,S)=\mathsf{BF}(j,S)>\mathsf{BF}(0,S)$, then $N(i)=N(j)$.
\item If $i,j\in S$ and $\mathsf{BF}(i,S) \neq \mathsf{BF}(j,S)$, then $N(i)\neq N(j)$.
\end{itemize}
\end{proposition}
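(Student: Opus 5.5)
The plan is to derive everything from the factorization $\mathsf{BF}(i,S)=\Mult(N(i),S)\,\mathsf{BF}(0,S)$ established just before the statement. Since $\mathsf{BF}(0,S)>0$, comparing boost factors of items in $S$ amounts to comparing their nest multipliers. So I will first prove the two structural claims I and II about multipliers, and then obtain the three deductions as contrapositives.

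For claim I, if $N(i)\subseteq S$ then $N(i)\cap S=N(i)$. The ratio inside $\Mult$ is then $1$, so $\Mult(N(i),S)=1$ for any $\lambda_{N(i)}\in[0,1]$, and $\mathsf{BF}(i,S)=\mathsf{BF}(0,S)$. For $\lambda_N=0$, I will briefly recheck the factorization, since $v_N(S)=v_N\bI(N\cap S\neq\emptyset)$; the nest-level factor is the same in $S$ and $[n]$ whenever $N\cap S\neq\emptyset$, and the formula for $\Mult$ with exponent $1-0=1$ still holds.

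For claim II, assume $N(i)\nsubseteq S$ and $i\in S$. Then $N(i)$ contains $i$ and some $k\notin S$, so $|N(i)|\ge2$, and \Cref{ass:identify} gives $\lambda_{N(i)}<1$. The ratio $\sum_{N(i)}v_j/\sum_{N(i)\cap S}v_j$ is strictly greater than $1$ because $v_k>0$, and raising it to the positive power $1-\lambda_{N(i)}$ keeps it strictly above $1$. Hence $\mathsf{BF}(i,S)>\mathsf{BF}(0,S)$.

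For the equivalence in II, take $j\in S\setminus\{i\}$. If $N(j)=N(i)$, the multipliers coincide, so the boost factors are equal. If $N(j)\neq N(i)$, there are two cases:
\begin{itemize}
\item If $N(j)\subseteq S$, then $\mathsf{BF}(j,S)=\mathsf{BF}(0,S)<\mathsf{BF}(i,S)$.
\item Otherwise, both nests satisfy $\emptyset\neq N\cap S\neq N$, and \Cref{ass:genPos} gives $\Mult(N(i),S)\neq\Mult(N(j),S)$, so the boost factors differ.
\end{itemize}

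The deductions then follow.
\begin{itemize}
\item For the first, if $\mathsf{BF}(i,S)=\mathsf{BF}(0,S)$, then the contrapositive of II's strict inequality forces $N(i)\subseteq S$. So no $k\notin S$ lies in $N(i)$.
\item For the second, $\mathsf{BF}(i,S)>\mathsf{BF}(0,S)$ combined with I forces $N(i)\nsubseteq S$, and then the equivalence in II gives $N(i)=N(j)$.
\item For the third, items in a common nest share the multiplier and hence the boost factor, so unequal boost factors imply different nests.
\end{itemize}

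The only step needing care is the case split in the "only if" direction of II, which is the sole place where \Cref{ass:genPos} is used. There one must ensure that a nest $N(j)$ contained in $S$ is handled by I rather than by general position. The rest is direct computation.
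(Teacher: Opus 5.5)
Your proof is correct and follows essentially the same route as the paper. Both arguments use the factorization $\mathsf{BF}(i,S)=\Mult(N(i),S)\,\mathsf{BF}(0,S)$, the fact that $\Mult=1$ when the nest lies inside $S$ and $\Mult>1$ otherwise (via \Cref{ass:identify}), \Cref{ass:genPos} for the converse direction of II, and contrapositives for the three deductions; your explicit case split on $N(j)\subseteq S$ and your check of the $\lambda_N=0$ case just make precise what the paper states informally.
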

\subsection{Nest Identification Algorithm and Proof of Correctness} \label{sec:outsidePreTheo}

Our algorithms employ a graph-theoretic representation for the nest identification procedure. 
Each item corresponds to a vertex, and an edge exists between vertices $i$ and $j$ whenever $N(i)=N(j)$. 
The pairwise relations are recorded in a matrix $E \in \{0,1,\text{null}\}^{n \times n}$. Here, $E[i,j]=1$ indicates that $N(i)=N(j)$; $E[i,j]=0$ indicates that $N(i)\neq N(j)$; and $E[i,j]=\text{null}$ indicates that their relation is currently undetermined.
The identification task is complete once $E$ represents a collection of disjoint cliques, each corresponding to a distinct nest. 

Our nest identification algorithm is presented in Algorithm~\ref{alg:exactWithOutside}.  We note that the symmetry of matrix $E$ is preserved throughout the algorithm, and we ignore the values of $E$ along the diagonal, where we implicitly assume $i\neq j$ whenever we consider pairs $i,j$ in Algorithm~\ref{alg:exactWithOutside}.

\begin{algorithm*}
\caption{\textsc{Exact Nest Identification with Outside Option}}
\label{alg:exactWithOutside}
{\setlength{\baselineskip}{0.7\baselineskip}
\begin{algorithmic}[1]
\State Initialize adjacency matrix $E[i, j] \gets \text{null}$ for all $i, j \in [n]$
\For{$S \in \mathcal{S}$}
\For{$i,j\in S$}
\If{$\mathsf{BF}(i,S)\neq\mathsf{BF}(j,S)$}
\State $E[i,j]\gets 0$ \label{alg1:line5}
\ElsIf{$\mathsf{BF}(i,S)=\mathsf{BF}(j,S)>\mathsf{BF}(0,S)$}
\State $E[i,j]\gets 1$ \label{alg1:line7}
\EndIf
\EndFor
\State $\text{NoBoost}\gets\{i\in S:\mathsf{BF}(i,S)=\mathsf{BF}(0,S)\}$
\State $E[i,k]\gets 0$ and $E[k,i]\gets 0$ for all $i\in \text{NoBoost}$ and all $k\notin S$ \label{alg1:noboost}
\EndFor
\State $\text{OneHopTransitivity}\gets\{(i,j)\in[n]^2 : E[i,j]=\text{null}, \exists k\notin\{i,j\} \text{ s.t. } E[i,k]=E[j,k]=1\}$
\State \indent $E[i,j]\gets 1$ for all $(i,j)\in\text{OneHopTransitivity}$
\label{alg1:transitivity}
\State $\text{IdentifyMissingPairs}\gets\{(i,j)\in[n]^2: E[i,j]=\text{null},E[i,k]\neq 1\neq E[j,k]\ \forall k\notin\{i,j\}\}$
\State \indent $E[i,j]\gets 1$ for all $(i,j)\in \text{IdentifyMissingPairs}$ \label{alg1:uniqueNest}
\State $E[i,j] \gets 0$ for all $i,j\in[n]$ such that $E[i,j] = \text{null}$
\end{algorithmic}
}
\end{algorithm*}

\begin{theorem}[proven in Section~\ref{sec:mainResultPf}] \label{thm:mainResult}
Suppose that $\mathcal{S}$ is the experiment design from
Section~\ref{sec:ourExpDesign} and true market shares $\phi(i,S)$ are observed for all $S\in\mathcal{S}\cup\{[n]\}$ and $i\in S\cup \{0\}$.
Under Assumptions~\ref{ass:identify} and \ref{ass:genPos}, the adjacency matrix $E$ returned by Algorithm~\ref{alg:exactWithOutside} satisfies $E[i,j]=\mathbf{1}(N(i)=N(j))$ for all $i\neq j$, i.e., Algorithm~\ref{alg:exactWithOutside} is correct.
\end{theorem}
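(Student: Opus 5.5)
The proof splits into \emph{soundness} (every entry the algorithm writes, whether $0$ or $1$, is correct) and \emph{completeness} (every pair in the same nest ends with $E[i,j]=1$). Because the last line turns every remaining null into $0$, completeness only has to be shown for same-nest pairs. Different-nest pairs are handled entirely by soundness: we must show that no line ever writes a $1$ for them.

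\textbf{Soundness, main loop and transitivity.} Soundness of lines~\ref{alg1:line5}, \ref{alg1:line7} and \ref{alg1:noboost} is exactly the three deductions of \Cref{prop:outsideNew}. Soundness of line~\ref{alg1:transitivity} follows because all $1$-entries written so far are correct and ``same nest'' is transitive. Hence, after the loop and the transitivity step, every non-null entry is correct.

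\textbf{Soundness of line~\ref{alg1:uniqueNest} (main obstacle).} I expect this to be the main obstacle. We must show: if $E[i,j]$ is still null and neither $i$ nor $j$ has any $1$-neighbour, then $N(i)=N(j)$. Suppose not. Since $\sigma(i)\neq\sigma(j)$, pick a digit $\ell$ with $\sigma_\ell(i)\neq\sigma_\ell(j)$, and let $S=S_{\ell,-\sigma_\ell(j)}$, so $i\in S\not\ni j$. If $N(i)\subseteq S$, the NoBoost step would have set $E[i,j]=0$, a contradiction. So $N(i)\nsubseteq S$, and hence $|N(i)|\ge2$. Symmetrically, $N(j)\nsubseteq S_{\ell,-\sigma_\ell(i)}$.

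I would then prove a structural lemma: if $|N(i)|\ge 2$ and $i$ has no $1$-neighbour, then every assortment containing $i$ contains either all of $N(i)$ or no other member of $N(i)$. Otherwise line~\ref{alg1:line7} would have created a $1$-edge from $i$. Since our design contains, for every digit, all ``remove digit $d$'' assortments, I expect this to force $N(i)=\{i,k\}$ with $\sigma(k)$ differing from $\sigma(i)$ in every digit in a complementary way. I would then combine this with the analogous constraint on $N(j)$ to show that $i$ and $j$ are distinguished somewhere:
\begin{itemize}[nosep]
\item either some assortment containing both $i$ and $j$, in which at least one of them is boosted, gives $\mathsf{BF}(i,S)\neq\mathsf{BF}(j,S)$ by \Cref{ass:genPos};
\item or some assortment contains one of them together with its entire nest but not the other, so the NoBoost step writes $E[i,j]=0$.
\end{itemize}
Either way $E[i,j]$ would not be null, a contradiction. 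This digit-by-digit case analysis is where I expect the real work, particularly for $b=2$, where assortments for the same digit are complements.

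\textbf{Completeness.} Take $i\neq j$ in the same nest $N$, so $|N|\ge 2$ and $\lambda_N<1$ by \Cref{ass:identify}.
\begin{itemize}[nosep]
\item If some $S\in\mathcal{S}$ contains $i,j$ but not all of $N$, then line~\ref{alg1:line7} sets $E[i,j]=1$, because $\mathsf{BF}(i,S)=\mathsf{BF}(j,S)>\mathsf{BF}(0,S)$.
\item Otherwise, if some $k\in N\setminus\{i,j\}$ has $E[i,k]=E[j,k]=1$, transitivity finishes.
\item Otherwise, by soundness, every $1$-neighbour of $i$ or $j$ lies in $N$. Using the design (separating $i$ or $j$ from other members of $N$ by a digit), I would show that in this residual case neither $i$ nor $j$ has any $1$-neighbour. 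Entries $E[i,j]$ are never set to $0$ wrongly (soundness), so $E[i,j]$ is still null, and line~\ref{alg1:uniqueNest} sets it to $1$.
\end{itemize}
For example, $N=\{i,j\}$ with complementary codes falls into this last case.
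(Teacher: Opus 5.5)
Your decomposition matches the paper's: soundness of lines~\ref{alg1:line5}, \ref{alg1:line7}, \ref{alg1:noboost} and of transitivity, completeness for same-nest pairs, and soundness of line~\ref{alg1:uniqueNest}. Your NoBoost argument ruling out $|N(i)|=1$ in the last part is exactly the paper's. The gap is the step you flag as the main obstacle: it is left as an expectation, and the route you sketch for it fails.

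Your structural lemma is correct and does exclude $|N(i)|\ge 3$. But once $N(i)=\{i,k\}$ it is vacuous, since every assortment containing $i$ either contains $k$ (hence all of $N(i)$) or contains no other nest-mate. So it cannot force complementary codes, and that claim is in fact false. Take $b=2$, $N(i)=\{i,k\}$, $\sigma(i)=000$, $\sigma(k)=001$. Line~\ref{alg1:line7} never fires on $(i,k)$, because every $S$ containing both $i$ and $k$ contains all of $N(i)$. So $i$ has no $1$-neighbour, yet the codes agree in two digits. A case analysis built on this constraint, together with an analogous one on $N(j)$, would at best cover a special case.

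The missing idea is a single-digit pigeonhole that needs nothing about $N(j)$ and works for every base $b$. With $N(i)=\{i,k\}$ and $k\neq j$, choose $\ell$ with $\sigma_\ell(k)\neq\sigma_\ell(j)$; that is, separate $j$ from $i$'s nest-mate rather than from $i$. Then $\sigma_\ell(i)$ differs from $\sigma_\ell(k)$ or from $\sigma_\ell(j)$.
\begin{itemize}
\item If $\sigma_\ell(i)\neq\sigma_\ell(k)$, then $S=S_{\ell,-\sigma_\ell(k)}$ contains $i,j$ but not $k$. So $N(i)\nsubseteq S$, and \Cref{prop:outsideNew}~II forces $\mathsf{BF}(i,S)\neq\mathsf{BF}(j,S)$, setting $E[i,j]=0$ on line~\ref{alg1:line5}.
\item If $\sigma_\ell(i)\neq\sigma_\ell(j)$, then $S=S_{\ell,-\sigma_\ell(j)}$ contains $i,k$ but not $j$. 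So $N(i)\subseteq S$, and line~\ref{alg1:noboost} sets $E[i,j]=0$.
\end{itemize}

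A smaller gap of the same kind sits in your completeness part: you must show that the residual case forces $|N|=2$. The clean argument is the paper's triangle trick. For $i,j,k\in N$, take $\ell$ with $\sigma_\ell(i)\neq\sigma_\ell(j)$. Then $\sigma_\ell(k)$ differs from one of them, say $\sigma_\ell(i)$, and $S_{\ell,-\sigma_\ell(i)}$ contains $j,k$ but not $i$, so line~\ref{alg1:line7} sets $E[j,k]=1$. Doing this for each of the three vertices shows each is incident to an edge set on line~\ref{alg1:line7}. Hence at least two of the three edges are set, and one-hop transitivity supplies the third. Your phrase about ``separating $i$ or $j$ from other members of $N$ by a digit'' points in this direction but does not establish it.
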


The correctness of the deductions on lines~\ref{alg1:line5}, \ref{alg1:line7}, and \ref{alg1:noboost} of Algorithm~\ref{alg:exactWithOutside} follow from \Cref{prop:outsideNew}. The proof of Theorem~\ref{thm:mainResult} is about showing that all nests are eventually found, i.e.\ the matrix $E$ is completed, which requires the "One Hop Transitivity" and "Identify Missing Pairs" operations.

\paragraph{Tightness.}
Algorithm~\ref{alg:exactWithOutside} and Theorem~\ref{thm:mainResult} show that our non-adaptive experimental design with $O(\log n)$  assortments is sufficient to guarantee nest identification.
We prove in \Cref{sec:logLowerBoundPf} that $\Omega(\log n)$ experimental assortments are also necessary, even if adaptively selected.

\paragraph{Extensions and loose ends.} We show in \Cref{sec:param_recovery} that, under a mild non-degeneracy condition (\Cref{ass:two_by_two_nonsingular}), our experiment design also allows successful recovery of the Nested Logit parameters $(v_i)_{i\in[n]}$, $(\lambda_N)_{N\in\cN}$, and $v_N$ for nests $N$ with $\lambda_N=0$, after identifying the nests.
We discuss the $d$-level Nested Logit model in \Cref{sec:dlevelLogit}.

\subsubsection{Illustration of \Cref{alg:exactWithOutside}.} \label{sssec:algIllustration}
To illustrate the logical deductions made by our nest identification algorithm, we return to the example from \Cref{table:introDesign}.
We provide hypothetical results for all experimental assortments in \Cref{table:coloredExp}, where we split the items in each experiment $S\in\mathcal{S}$ into equivalence classes based on their observed boost factors, further indicating the class that equals $\mathsf{BF}(0,S)$.
Our algorithm is purely combinatorial and does not use the exact numerical magnitudes of boost factors.

\begin{table}
    \centering
    \renewcommand{\arraystretch}{1.5}
    \newcommand{\noboost}{\(\texttt{--}\)}
    \newcommand{\boost}[1]{\(\uparrow(#1)\)}
    {\setlength{\baselineskip}{0.6\baselineskip}
    \begin{tabular}{c|c|c|c|c|c|c|c|c|}
    Item and Encoding
     &  \milk 000 &  \apple 001 &  \boba 010 &  \coffee 011 &  \tea 100 & \orange 101 &  \beer 110 & \wine 111 \\
    \hline
           Boost Factors in $S_{1,-0}$ 
        &  &  &  &  & \boost{1} & \boost{2} & \noboost & \noboost \\

        Boost Factors in $S_{1,-1}$  
        & \noboost & \boost{1} & \boost{2} & \boost{2} &  &  &  &  \\

        Boost Factors in $S_{2,-0}$ 
        &  &  & \boost{1} & \boost{1} &  &  & \noboost & \noboost \\

        Boost Factors in $S_{2,-1}$ 
        & \noboost & \noboost &  &  & \boost{1} & \noboost &  &  \\

        Boost Factors in $S_{3,-0}$ 
        &  & \noboost &  & \boost{1} &  & \noboost &  & \boost{2} \\

        Boost Factors in $S_{3,-1}$ 
        & \noboost &  & \boost{1} &  & \boost{1} &  & \boost{2} &  \\
    \end{tabular}}
    \caption{Hypothetical results for the experiment design from \Cref{table:introDesign}.  "$\texttt{--}$" indicates a boost factor equal to $\mathsf{BF}(0,S)$.  Items indicated by "$\uparrow(1)$" share the same boost factor  greater than $\mathsf{BF}(0,S)$; items indicated by "$\uparrow(2)$" share a different boost factor greater than $\mathsf{BF}(0,S)$, etc.
    }
    \label{table:coloredExp}
\end{table}

\Cref{fig:colored_iden2} then shows the evolution of the $E[i,j]$ matrix throughout the steps of Algorithm~\ref{alg:exactWithOutside}.
The first 6 steps make deductions from the 6 experimental assortments, and 
the final two steps demonstrate the importance of the "One Hop Transitivity" and "Identify Missing Pairs" operations.
The deductions from the
1st assortment were previously illustrated in \Cref{fig:illustrateDeductions}, while the deductions from
2nd assortment were previously illustrated in Example \ref{eg:deductions}.

\begin{figure} 
\centering
\includegraphics[width=\linewidth]{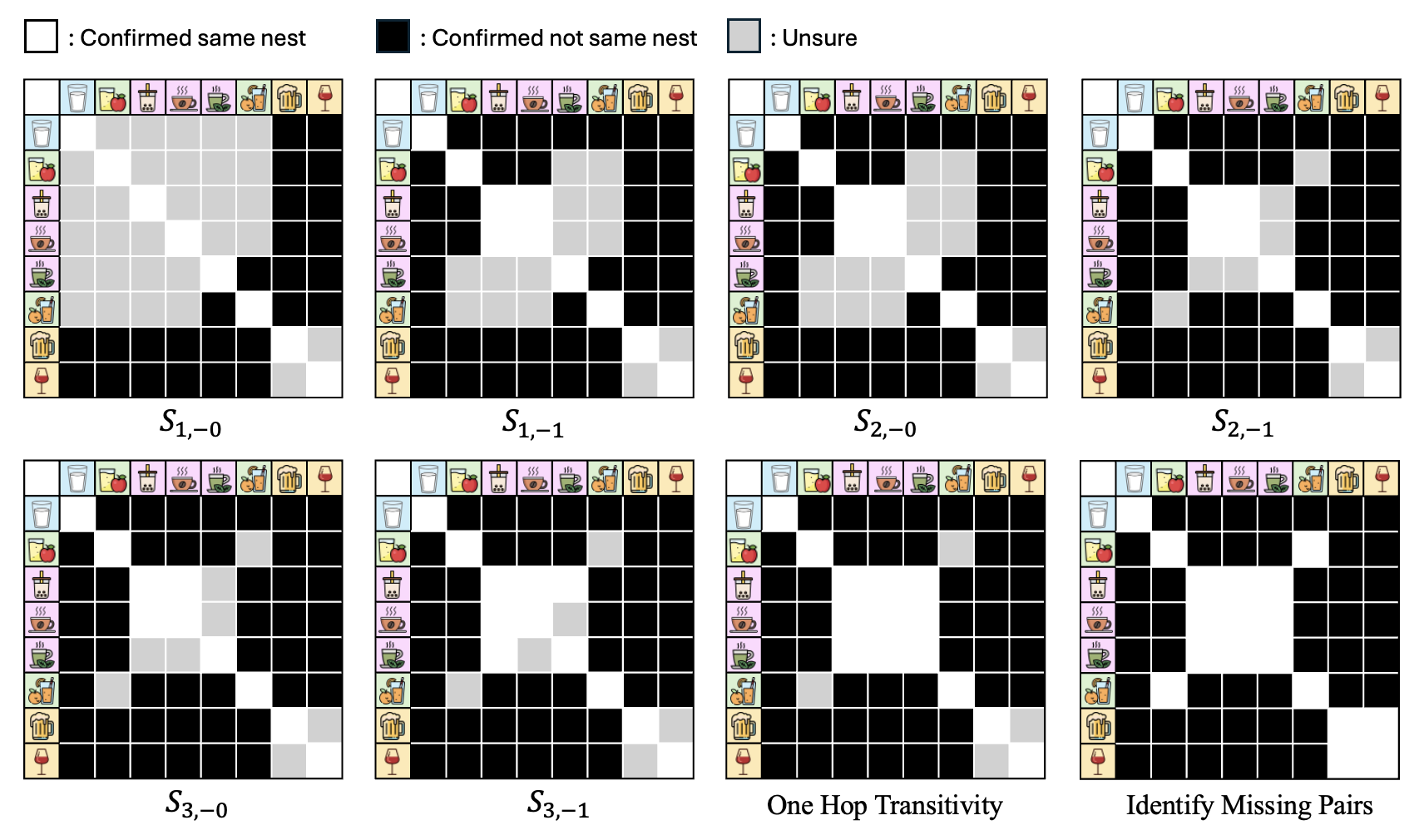}
\caption{Evolution of the adjacency matrix $E$ during nest identification.
White squares indicate $E[i,j]=1$ (same nest); black squares indicate $E[i,j]=0$ (different nests); while grey squares indicate $E[i,j]=\texttt{null}$ (not yet determined).
The state of the adjacency matrix $E$ is displayed after processing each of the 6 experimental assortments $S\in\mathcal{S}$, and after the "One Hop Transitivity" (line~\eqref{alg1:transitivity}) and "Identify Missing Pairs" (line~\eqref{alg1:uniqueNest}) operations.}
\label{fig:colored_iden2}
\end{figure}

\subsection{Finite-sample Guarantee} \label{sec:finiteSample}

Recall that Algorithm~\ref{alg:exactWithOutside} identifies the nest partition
\(\mathcal{N}\) by comparing boost factors
\[
\mathsf{BF}(i,S)=\frac{\phi(i,S)}{\phi(i,[n])},
\]
across \(i\in S\cup\{0\}\) for each experimental assortment
\(S\in\mathcal{S}\). We now consider the finite-sample setting, where each
assortment \(S\in\mathcal{S}\cup\{[n]\}\) is offered to \(m\) independent
customers, and \(\hphi(i,S)\) denotes the empirical choice probability of
alternative \(i\in S\cup\{0\}\).

A direct comparison of empirical boost factors
\(\hphi(i,S)/\hphi(i,[n])\) is undesirable because it involves ratios of noisy
estimates. Instead, we use the equivalent comparison.
Define
\[
q_{ij}(S)
:=
\frac{\phi(i,S)}{\phi(i,S)+\phi(j,S)} .
\]
Then
\begin{equation}
\mathsf{BF}(i,S)\ge \mathsf{BF}(j,S)
\quad \Longleftrightarrow \quad
q_{ij}(S)\ge q_{ij}([n]).
\label{eqn:statComparison}
\end{equation}
for all \(S\in\mathcal{S}\) and \(i,j\in S\cup\{0\}\). Thus, each
boost-factor comparison can be viewed as a two-proportion comparison: under
assortment \(S\) versus the control assortment \([n]\), we compare the
probability of choosing \(i\) conditional on choosing either \(i\) or \(j\).

We therefore define the pooled two-proportion statistic $z(i\succ j,S)$ to be
\begin{align}
\frac{
\frac{\hphi(i,S)}{\hphi(i,S)+\hphi(j,S)}
-
\frac{\hphi(i,[n])}{\hphi(i,[n])+\hphi(j,[n])}
}{
\sqrt{
\hat p_{ij,S}\left(1-\hat p_{ij,S}\right)
\left(
\frac{1/m}{\hphi(i,S)+\hphi(j,S)}
+
\frac{1/m}{\hphi(i,[n])+\hphi(j,[n])}
\right)
}
},
\label{eqn:zstat}
\end{align}
where
\[
\hat p_{ij,S}
=
\frac{\hphi(i,S)+\hphi(i,[n])}
{\hphi(i,S)+\hphi(j,S)+\hphi(i,[n])+\hphi(j,[n])}
\]
is the pooled empirical probability of choosing \(i\) conditional on choosing
either \(i\) or \(j\). Large positive values of \(z(i\succ j,S)\) provide
evidence that \(\mathsf{BF}(i,S)>\mathsf{BF}(j,S)\), large negative values
provide evidence that \(\mathsf{BF}(i,S)<\mathsf{BF}(j,S)\), and values close
to zero are treated as statistically indistinguishable boost factors. Note that
\(z(i\succ j,S)=-z(j\succ i,S)\).

In our empirical implementation in \Cref{sec:numericNestId}, we compare
\(z(i\succ j,S)\) to fixed normal critical values. For the theoretical guarantee
below, we instead choose a threshold depending on \(n\) and the desired failure
probability \(\delta\), and derive a sample-size condition on \(m\) under which
all boost-factor comparisons used by Algorithm~\ref{alg:exactWithOutside} are
correct with probability at least \(1-\delta\).

\begin{theorem}[proven in \Cref{sec:sampleComplexityPf}]\label{thm:sampleComplexity}
Take constants $\rho,\Delta,\delta\in(0,1)$ such that
$\phi(i,[n])\ge\rho$ for all $i\in [n] \cup \{0\}$ and $\forall S\in\mathcal{S};i,j\in S\cup\{0\}\text{ s.t. }\mathsf{BF}(i,S)\neq\mathsf{BF}(j,S)$, 
\begin{align*}
\left|\frac{\phi(i,S)}{\phi(i,S)+\phi(j,S)}-\frac{\phi(i,[n])}{\phi(i,[n])+\phi(j,[n])}\right| &\ge\Delta.
\end{align*}

If the number of samples satisfies
$$
m\ge 3C^2 \frac{\log(2K/\delta)}{\rho\Delta^2},
$$
then with probability at least $1-\delta$, for all $S\in\mathcal{S}$ and $i,j\in S\cup\{0\}$, we have $|z(i\succ j,S)| \le 8\sqrt{3\log(2K/\delta)}$ if and only if $\mathsf{BF}(i, S) = \mathsf{BF}(j, S)$, where $K:=(|\mathcal{S}|+1)(n+1+\binom{n+1}2)$ and $C$ is an absolute numerical constant.
\end{theorem}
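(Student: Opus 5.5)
The plan is to prove uniform concentration of every empirical choice probability, then show that on this event the numerator of $z(i\succ j,S)$ tracks the true difference $q_{ij}(S)-q_{ij}([n])$ up to a small error, while the denominator is of order $\sqrt{1/(m\rho)}$. The count $K=(|\mathcal{S}|+1)(n+1+\binom{n+1}{2})$ suggests a union bound over every assortment $S\in\mathcal{S}\cup\{[n]\}$ and every singleton $\{i\}$ or pair $\{i,j\}$ of alternatives in $[n]\cup\{0\}$.

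\textbf{Step 1 (concentration).} For each such event, with mean $p=\phi(i,S)$ or $p=\phi(i,S)+\phi(j,S)$, apply a multiplicative Chernoff/Bernstein bound:
\[
|\hat p-p|\le \sqrt{3p\log(2K/\delta)/m}\quad\text{with probability at least }1-\delta/K.
\]
The union bound makes all $K$ events hold simultaneously with probability at least $1-\delta$.

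\textbf{Step 2 (use of the sample-size condition).} The condition $m\ge 3C^2\log(2K/\delta)/(\rho\Delta^2)$ gives $\sqrt{3\log(2K/\delta)/(m\rho)}\le \Delta/C$.

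For $S=[n]$, all probabilities are at least $\rho$. For $S\in\mathcal{S}$ and $i\in S\cup\{0\}$, I will use $\phi(i,S)\ge\phi(i,[n])\ge\rho$. This holds because $\mathsf{BF}(i,S)\ge\mathsf{BF}(0,S)\ge1$: removing items shrinks the denominator $1+\sum_N v_N(S)$, and $\Mult\ge1$.

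Hence every relevant $p$ is at least $\rho$, and the relative error $|\hat p-p|/p$ is at most $\Delta/C\le 1/C$, which is small. It follows that each empirical conditional proportion $\hat q_{ij}=\hat\phi(i)/(\hat\phi(i)+\hat\phi(j))$ satisfies
\[
|\hat q_{ij}-q_{ij}|\le c_1\sqrt{\log(2K/\delta)/(m(\phi(i)+\phi(j)))}.
\]
I plan to get this by writing $\hat q-q$ as a ratio perturbation and bounding it with the relative errors. The delicate part is to bound in terms of $\sqrt{q(1-q)}$ rather than crudely, so that it matches the standard error.

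\textbf{Step 3 (the denominator).} The denominator is
\[
D=\sqrt{\hat p(1-\hat p)\,\bigl(1/(m\hat s_S)+1/(m\hat s_{[n]})\bigr)},
\]
where $\hat s$ denotes the pair sums. On the good event, $\hat s$ is within a constant factor of $s\ge\rho$. The pooled $\hat p$ is within constants of the true pooled value, and $q$ is bounded away from $0$ and $1$ by a quantity comparable to $\rho$-ratios.

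\textbf{Step 4 (the two cases).}
\begin{itemize}
\item If $\mathsf{BF}(i,S)=\mathsf{BF}(j,S)$, then $q_{ij}(S)=q_{ij}([n])$ by \eqref{eqn:statComparison}. The numerator is then at most the sum of the two deviations from Step 2, and dividing by $D$ gives $|z|\le 8\sqrt{3\log(2K/\delta)}$ once the constants are tracked.
\item If the boost factors differ, the numerator is at least $\Delta-2\Delta/C\ge\Delta/2$. Since $D\le\sqrt{2/(m\rho)}$ up to constants, we get $|z|\gtrsim\Delta\sqrt{m\rho}\ge \sqrt{3}\,C\sqrt{\log(2K/\delta)}$ up to constants. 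Choosing $C$ large makes this exceed $8\sqrt{3\log(2K/\delta)}$.
\end{itemize}

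\textbf{Main obstacle.} The hard step is the null case. There the bound on $|z|$ must be an absolute constant times $\sqrt{\log}$, independent of $\rho$ and $q$. This requires the deviation of $\hat q$ to scale like $\sqrt{q(1-q)/(ms)}$, matching the pooled variance estimate, rather than a cruder $\sqrt{1/(ms)}$.

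I would first try conditioning on the pair totals. Given $\hat s$, the count of $i$ among $i$-or-$j$ choices is Binomial$(m\hat s,q)$, and a Bernstein bound with variance $q(1-q)$ applies. I would also need to show that $\hat p(1-\hat p)\ge q(1-q)/4$ on the good event, which follows from relative accuracy of $\hat\phi(i)$ and $\hat\phi(j)$ separately. These are exactly the singleton events in the union bound, and this is why $K$ includes the $n+1$ term.
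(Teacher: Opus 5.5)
Your proposal is correct, and its skeleton matches the paper's proof:
\begin{itemize}
\item a union bound over $K$ Chernoff events;
\item the monotonicity $\phi(i,S)\ge\phi(i,[n])\ge\rho$;
\item a null case in which a $q$-dependent deviation must cancel against the pooled variance;
\item an alternative case with numerator at least a constant times $\Delta$ and denominator $O(1/\sqrt{m\rho})$.
\end{itemize}

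\textbf{Your fallback is the paper's route.} The fallback in your last paragraph is what the paper does in the null case. Conditional on $X(i,S)+X(j,S)$, it applies a multiplicative Chernoff bound to the binomial count, giving $|\hat q-q|\le\sqrt{3q\log(2K/\delta)/(X(i,S)+X(j,S))}$. It then assumes without loss of generality that $q\le1/2$, which is legitimate since $|z(i\succ j,S)|=|z(j\succ i,S)|$. Using $\phi/2\le\hat\phi\le2\phi$, it gets $\hat p_{ij,S}\ge q/4$ and $1-\hat p_{ij,S}\ge1/8$; this is where the constant $8$ comes from. The pair terms in $K$ count exactly these conditional events. If you use Bernstein instead, you must also absorb the lower-order term $\log(2K/\delta)/(X(i,S)+X(j,S))$, which $m\rho\ge3C^2\log(2K/\delta)$ permits.

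\textbf{Your primary route also works, and it is genuinely different.} The step you called delicate is immediate. Write $L:=\log(2K/\delta)$ and $\epsilon:=\sqrt{3L/(m\rho)}\le\Delta/C$ for the common relative-error bound. With $a=\phi(i,S)$, $b=\phi(j,S)$, $s=a+b$,
\[
\hat q-q=\frac{(\hat a-a)\,b-a\,(\hat b-b)}{(\hat a+\hat b)\,s}.
\]
Your singleton bounds $|\hat a-a|\le\sqrt{3aL/m}$, $|\hat b-b|\le\sqrt{3bL/m}$, together with $\hat a+\hat b\ge(1-\epsilon)s$, give
\[
|\hat q-q|\le\frac{\sqrt2}{1-\epsilon}\sqrt{\frac{3q(1-q)L}{ms}},
\]
because $\sqrt{ab}/s=\sqrt{q(1-q)}$ and $(\sqrt a+\sqrt b)^2\le2s$. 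Combine this with your symmetric bound $\hat p_{ij,S}(1-\hat p_{ij,S})\ge\bigl(\frac{1-\epsilon}{1+\epsilon}\bigr)^2q(1-q)$. Under the null you then get
\[
|z(i\succ j,S)|\le\frac{2(1+\epsilon)^{3/2}}{(1-\epsilon)^2}\sqrt{3L},
\]
which is well below the threshold.

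\textbf{What each route buys.} Your route needs only the $(|\mathcal{S}|+1)(n+1)$ singleton events, so the pair events become superfluous. It also avoids the orientation WLOG and gives smaller constants in both cases. The paper's route is more modular: it keeps a crude relative-error event for the alternative case and a separate conditional event for the null.

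\textbf{One correction.} Drop the remark in Step 3 that $q$ is bounded away from $0$ and $1$ via $\rho$. Any bound of that type would put $\rho$ into the threshold. What you actually need, as you say later, is $\hat p_{ij,S}(1-\hat p_{ij,S})\ge c\,q(1-q)$, so that $q$ cancels.
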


We note that the dependence of the number of samples on $\rho\Delta^2$ is essentially tight, because an additive error of $O(\Delta)$ is needed to distinguish whether boost factors are identical, and the useful samples (where $i$ or $j$ is chosen) can be diluted by a factor of $\rho$.
The proof of Theorem~\ref{thm:sampleComplexity} standardly takes a union bound with $K$ multiplicative Chernoff bounds that each fail with probability $\delta/K$; however, several specialized tricks are needed to achieve this tight dependence on $\rho\Delta^2$, which leverage the complicated form of the test statistic $z(i\succ j, S)$ in~\eqref{eqn:zstat}.
Theorem~\ref{thm:sampleComplexity} implies that in Algorithm~\ref{alg:exactWithOutside}, if we replace all conditions $\mathsf{BF}(i,S)=\mathsf{BF}(j,S)$ with $|z(i\succ j,S)|\le8\sqrt{3\log(2K/\delta)}$ and all conditions $\mathsf{BF}(i,S)\neq\mathsf{BF}(j,S)$ (including $\mathsf{BF}(i,S)>\mathsf{BF}(0,S)$) with $|z(i\succ j,S)|>8\sqrt{3\log(2K/\delta)}$, then with probability at least $1-\delta$, no mistake deductions are made from any experimental assortments $S\in\mathcal{S}$ and hence the correct nest partition $\mathcal{N}$ is identified as per Theorem~\ref{thm:mainResult}.

\subsection{Nest Identification without the Outside Option} \label{sec:outsideAbsTheo}
We now show how our identifiability and the 
finite-sample guarantee continue to hold if there is no outside option that can be chosen by customers.  Nest identification becomes more difficult because we can no longer compare to the boost factor of the outside option, $\mathsf{BF}(0,S)$.

In each assortment $S\in\mathcal{S}\cup\{[n]\}$, the choice probabilities are now defined by 
\begin{align}
\label{eqn:noOutsideOption}
\phi(i,S)
= \frac{v_{N(i)}(S)}{\sum_{N\in\mathcal{N}} v_N(S)}
\cdot
\frac{v_i}{\sum_{j\in N(i)\cap S} v_j},
\qquad i\in S.
\end{align}
This satisfies $\sum_{i\in S} \phi(i,S) = 1$; note that the denominator no longer contains a "+1" corresponding to the outside option. 
The definition of boost factor remains the same as in Definition \ref{def:bf}, i.e.~$\mathsf{BF}(i, S) := \frac{\phi(i, S)}{\phi(i, [n])}$, but now both $\phi(i,S)$ and $\phi(i,[n])$ are defined under equation~\eqref{eqn:noOutsideOption} instead.
When we evaluate this expression, we obtain
\begin{align*}
\mathsf{BF}(i,S)
&=
\frac{\frac{v_{N(i)}(S)}{\sum_{N\in\mathcal{N}}v_N(S)}\cdot\frac{v_i}{\sum_{j\in N(i)\cap S} v_j}}
{\frac{v_{N(i)}([n])}{\sum_{N\in\mathcal{N}}v_N([n])}\cdot\frac{v_i}{\sum_{j\in N(i)} v_j}} 
\\
&=\Mult(N(i),S)\cdot
\frac{\sum_{N\in\mathcal{N}} v_N([n])}{\sum_{N\in\mathcal{N}} v_N(S)}
&\forall i\in S.
\end{align*}
$\Mult(N,S):=\left(\frac{\sum_{j\in N} v_j}{\sum_{j\in N\cap S} v_j}\right)^{1-\lambda_{N}}$ is defined the same as before, and satisfies $\Mult(N,S)\ge1$ with strict inequality if and only if $N\nsubseteq S$ (under \Cref{ass:identify}), for each nest $N\in\mathcal{N}$ and experimental assortment $S\in\mathcal{S}$.
However, we can no longer check whether $\Mult(N(i),S)=1$ by checking whether $\mathsf{BF}(i,S)=\mathsf{BF}(0,S)$.

Regardless, we can still summarize these observations without the outside option in the following \namecref{prop:noOutsideNew}, which leads to more ambiguous deductions compared to the previous \Cref{prop:outsideNew}.

\begin{proposition}[Nest Deductions without Outside Option] \label{prop:noOutsideNew}
Suppose the same \Cref{ass:identify} and \ref{ass:genPos} hold and take any $S\in\mathcal{S}$.  For all $i\in S$:
\begin{enumerate}[I.]
\item If $N(i)\subseteq S$, then we see that $\mathsf{BF}(i,S)\le\mathsf{BF}(k,S)$ for all $k\in S$ (in particular, $\mathsf{BF}(i,S)$ is identical for all items $i$ such that $N(i)\subseteq S$);
\item If $N(i)\nsubseteq S$, then  for $j\in S\setminus\{i\}$, we have $N(i)=N(j)$ if and only if $\mathsf{BF}(i,S)=\mathsf{BF}(j,S)$.
\end{enumerate}
Taking contrapositives, we can make the following deductions about nest membership.
\begin{itemize}
\item Define $\minBF(S):=\argmin_{k\in S}\mathsf{BF}(k,S)$ to be the set of items in $S$ attaining the minimum boost factor.  If $N(i)\subseteq S$ for any $i\in S$, then $\minBF(S)=\bigcup_{N\in\mathcal{N}:N\subseteq S} N$; otherwise, $\minBF(S)\subsetneq N$ for a single nest $N\in\mathcal{N}$ that is not fully contained in $S$.
\item If $i,j\in S$ and $\mathsf{BF}(i,S)=\mathsf{BF}(j,S)>\min_{k\in S}\mathsf{BF}(k,S)$, then $N(i)=N(j)$.
\item If $i,j\in S$ and $\mathsf{BF}(i,S)\neq \mathsf{BF}(j,S)$, then $N(i)\neq N(j)$.
\end{itemize}
\end{proposition}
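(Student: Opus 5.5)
The proof rests on the factorization derived just above the statement. Write $C(S):=\frac{\sum_{N\in\mathcal{N}} v_N([n])}{\sum_{N\in\mathcal{N}} v_N(S)}$, a positive constant common to all items of $S$. Then $\mathsf{BF}(i,S)=\Mult(N(i),S)\,C(S)$ for every $i\in S$, so comparing boost factors within $S$ is the same as comparing nest multipliers. Two facts about multipliers do the work, for any nest $N$ with $N\cap S\neq\emptyset$:
\begin{itemize}
\item if $N\subseteq S$, then $\Mult(N,S)=1$;
\item if $N\nsubseteq S$, then $|N|\ge2$, so $\lambda_N<1$ by \Cref{ass:identify}, and since $\sum_{j\in N}v_j>\sum_{j\in N\cap S}v_j>0$ we get $\Mult(N,S)>1$.
\end{itemize}

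\textbf{Part I.} If $N(i)\subseteq S$, then $\mathsf{BF}(i,S)=C(S)$. Every $k\in S$ has $\Mult(N(k),S)\ge1$, so $\mathsf{BF}(k,S)\ge C(S)=\mathsf{BF}(i,S)$. In particular, all items whose nests lie inside $S$ share the value $C(S)$.

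\textbf{Part II.} Let $N(i)\nsubseteq S$ and $j\in S\setminus\{i\}$.
\begin{itemize}
\item If $N(j)=N(i)$, the multipliers coincide, so the boost factors are equal.
\item If $N(j)\neq N(i)$ and $N(j)\subseteq S$, then $\mathsf{BF}(j,S)=C(S)<\mathsf{BF}(i,S)$.
\item If $N(j)\neq N(i)$ and $N(j)\nsubseteq S$, both nests satisfy $\emptyset\neq N\cap S\neq N$, so \Cref{ass:genPos} gives distinct multipliers and hence distinct boost factors.
\end{itemize}

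\textbf{The $\minBF(S)$ deduction.} This is the only step needing care; the rest is bookkeeping.
\begin{itemize}
\item Suppose some $i\in S$ has $N(i)\subseteq S$. By Part I, the minimum boost factor is $C(S)$. Items with full nests attain it, and items with $N(k)\nsubseteq S$ have $\mathsf{BF}(k,S)>C(S)$. Hence $\minBF(S)=\bigcup_{N\subseteq S}N$; here the union runs over nests with $N\subseteq S$, and these automatically meet $S$.
\item Otherwise, every nest meeting $S$ is not contained in $S$, and by \Cref{ass:genPos} these nests have pairwise distinct multipliers. The minimum is therefore attained by exactly one nest $N$, giving $\minBF(S)=N\cap S\subsetneq N$, where the inclusion is strict because $N\nsubseteq S$.
\end{itemize}

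\textbf{The second bullet.} Suppose $\mathsf{BF}(i,S)=\mathsf{BF}(j,S)>\min_{k\in S}\mathsf{BF}(k,S)$. Then $N(i)\nsubseteq S$, since otherwise Part I would force $\mathsf{BF}(i,S)$ to be the minimum. Part II then gives $N(i)=N(j)$.

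\textbf{The third bullet.} This is the contrapositive of the observation that items in the same nest have the same multiplier, and therefore the same boost factor.
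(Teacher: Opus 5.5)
Your proof is correct and takes essentially the same approach as the paper. The paper derives the factorization $\mathsf{BF}(i,S)=\Mult(N(i),S)\cdot\frac{\sum_{N}v_N([n])}{\sum_{N}v_N(S)}$, notes that $\Mult(N,S)\ge 1$ with strict inequality exactly when $N\nsubseteq S$ under \Cref{ass:identify}, and reads off the deductions using \Cref{ass:genPos}; your case analysis for the $\minBF(S)$ bullet, including the identification $\minBF(S)=N\cap S$, simply spells out what the paper leaves implicit.
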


We now present our nest identification algorithm and main result without the outside option.

\begin{algorithm*}
\caption{\textsc{Exact Nest Identification without Outside Option}}
\label{alg:exactWithoutOutside}
{\setlength{\baselineskip}{0.7\baselineskip}
\begin{algorithmic}[1]
\State Initialize adjacency matrix $E[i, j] \gets \text{null}$ for all $i, j \in [n]$
\For{$S \in \mathcal{S}$; $i,j\in S$}
\If{$\mathsf{BF}(i,S)\neq\mathsf{BF}(j,S)$}
\State $E[i,j]\gets 0$ \label{alg2:line4}
\ElsIf{$\mathsf{BF}(i,S)=\mathsf{BF}(j,S)>\min_{k\in S}\mathsf{BF}(k,S)$}
\State $E[i,j]\gets 1$ \label{alg2:pos_deduct1}
\EndIf
\EndFor
\For{$S\in\mathcal{S}$, $\minBF(S):=\argmin_{k\in S}\mathsf{BF}(k,S)$}
\If{$E[i,j]=0$ for some $i,j\in\minBF(S)$ with $i\neq j$}
\State $E[i,k]\gets 0,\ E[k,i]\gets 0$ for all $i\in \minBF(S),k\notin S$ \label{alg2:neg_deduct}
\Else
\State $E[i,j]\gets 1$ for all $i,j\in \minBF(S)$ \label{alg2:pos_deduct2}
\EndIf
\EndFor
\State $\text{OneHopTransitivity}\gets\{(i,j)\in[n]^2 : E[i,j]=\text{null}, \exists k\notin\{i,j\} \text{ s.t. } E[i,k]=E[j,k]=1\}$
\State \indent $E[i,j]\gets 1$ for all $(i,j)\in \text{OneHopTransitivity}$ \label{alg2:transitivity}
\State $\text{IdentifyMissingPairs}\gets\{(i,j)\in[n]^2: E[i,j]=\text{null},E[i,k]\neq 1\neq E[j,k]\ \forall k\notin\{i,j\}\}$
\State \indent $E[i,j]\gets 1$ for all $(i,j)\in \text{IdentifyMissingPairs}$ \label{alg2:uniqueNest}
\State $E[i,j] \gets 0$ for all $i,j\in[n]$ such that $E[i,j] = \text{null}$
\end{algorithmic}}
\end{algorithm*}

\begin{theorem}[proven in \Cref{sec:idenWithoutOutsidePf}] \label{thm:idenWithoutOutside}
Suppose that $\mathcal{S}$ is the experiment design from
Section~\ref{sec:ourExpDesign} and true market shares $\phi(i,S)$ are observed for all $S\in\mathcal{S}\cup\{[n]\}$ and $i\in S$.
Under Assumptions~\ref{ass:identify} and \ref{ass:genPos}, the adjacency matrix $E$ returned by Algorithm~\ref{alg:exactWithoutOutside} satisfies $E[i,j]=\mathbf{1}(N(i)=N(j))$ for all $i\neq j$, except in the case $|N(i)|=|N(j)|=1$, where $E[i,j]$ may be 1 (but this does not affect the correctness of the induced choice function $\phi(i, S)$).
\end{theorem}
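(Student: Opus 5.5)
The proof has two parts. \emph{Soundness}: every entry that Algorithm~\ref{alg:exactWithoutOutside} sets is correct, apart from $1$-entries between two singleton nests. \emph{Completeness}: after the last line, every pair in the same non-singleton nest has $E=1$ and every pair in different nests has $E=0$, again apart from singleton–singleton pairs. I will use one combinatorial fact about the design from Section~\ref{sec:ourExpDesign} repeatedly.

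\emph{Separation fact.} If $i\neq i'$, pick a digit $\ell$ with $\sigma_\ell(i)\neq\sigma_\ell(i')$. Then $S_{\ell,-\sigma_\ell(i')}$ contains $i$ but not $i'$. For any third item $j$, the digit $\sigma_\ell(j)$ differs from at least one of $\sigma_\ell(i),\sigma_\ell(i')$. So some assortment in $\mathcal{S}$ contains $j$ together with exactly one of $i,i'$.

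\textbf{Soundness of the per-assortment deductions.} Lines~\ref{alg2:line4} and~\ref{alg2:pos_deduct1} are correct by \Cref{prop:noOutsideNew}.

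For line~\ref{alg2:neg_deduct}, suppose a pair in $\minBF(S)$ already has $E=0$. By soundness so far, $\minBF(S)$ meets two nests. So the case ``$\minBF(S)\subsetneq N$ for a single nest'' of \Cref{prop:noOutsideNew} is excluded, which forces $\minBF(S)=\bigcup_{N\subseteq S}N$. Hence no item of $\minBF(S)$ shares a nest with any $k\notin S$.

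The delicate step is line~\ref{alg2:pos_deduct2}. If $\minBF(S)$ lies inside a single nest, it is correct. Otherwise $\minBF(S)$ is a union of at least two nests contained in $S$, and I must show a $0$ was already recorded among them unless all of these nests are singletons. Suppose some such nest $N$ has $i\neq i'\in N$, and let $j\in\minBF(S)$ lie in another nest. The separation fact gives $S'\in\mathcal{S}$ containing $j$ and exactly one of $i,i'$. In $S'$ we have $N\nsubseteq S'$, so part II of \Cref{prop:noOutsideNew} gives $\mathsf{BF}(i,S')\neq\mathsf{BF}(j,S')$ (or the same for $i'$). Line~\ref{alg2:line4} therefore already set a $0$ inside $\minBF(S)$, so the if-branch fires, not the else-branch.

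Consequently every incorrect $1$ produced by line~\ref{alg2:pos_deduct2} joins two true singletons. The first loop never puts a $1$ on a singleton, since $\mathsf{BF}(i,S)>\min$ forces $N(i)\nsubseteq S$ and hence $|N(i)|\ge2$. So singletons are $1$-adjacent only to other singletons, and One-Hop Transitivity (line~\ref{alg2:transitivity}) can only close triangles within a true nest or among singletons. Merging singletons into one nest with $\lambda=1$ leaves $\phi$ unchanged, by the remark after \Cref{ass:identify}. This handles the stated exception.

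\textbf{Completeness.} First I show that every different-nest pair $(i,j)$ with $|N(i)|\ge2$ ends with $E[i,j]=0$. Take $i'\in N(i)\setminus\{i\}$ and apply the separation fact to $i,i',j$, giving $S'$ containing $j$ and exactly one of $i,i'$, with $N(i)\nsubseteq S'$.
\begin{itemize}
\item If $S'$ contains $i$, then $\mathsf{BF}(i,S')\neq\mathsf{BF}(j,S')$ directly, so line~\ref{alg2:line4} sets $E[i,j]=0$.
\item If $S'$ contains $i'$ but not $i$, I would use the assortment $S''=S_{\ell,-\sigma_\ell(j)}$ when $j\notin S''$ and $i\in S''$. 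If $\mathsf{BF}(i,S'')=\min$ there, then either line~\ref{alg2:neg_deduct} fires, or $\minBF(S'')\subsetneq N(i)$ and a $0$ follows via transitivity. Otherwise, I would fall back on the fact that $E[i',j]=0$ and $E[i,i']=1$ is eventually established, and argue that the final null-to-$0$ step, rather than IdentifyMissingPairs, handles $(i,j)$.
\end{itemize}

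Next, same-nest pairs. For $i,i'$ in a nest $N$ with $|N|\ge3$, there is an $S$ containing two members of $N$ but not all of $N$, with their common boost factor above the minimum; this gives a chain of line-\ref{alg2:pos_deduct1} ones. If instead the common value attains the minimum, then $\minBF(S)\subsetneq N$ and line~\ref{alg2:pos_deduct2} fires correctly. One-Hop Transitivity closes the resulting chain, and I must check that one hop suffices by picking a pivot $k\in N$ that is separated from both $i$ and $i'$. For $|N|=2$, the pair may stay without $1$'s. IdentifyMissingPairs then sets it to $1$, and the earlier paragraph guarantees that no different-nest non-singleton pair remains null with both endpoints free of $1$'s.

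\textbf{Main obstacle.} I expect the hardest step to be this completeness bookkeeping: showing that one round of transitivity plus IdentifyMissingPairs never merges two genuinely different nests, except for singleton–singleton pairs. That requires the case analysis on digits above to certify enough $0$-entries in the first two loops.
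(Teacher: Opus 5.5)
Your soundness half is correct and follows the paper. In particular, line~\ref{alg2:pos_deduct2} is never wrongly entered when $\minBF(S)$ contains a non-singleton nest.

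\textbf{The gap.} It is in completeness, in the second bullet of your claim that a different-nest pair $(i,j)$ with $|N(i)|\ge 2$ cannot end at $1$. There $\ell$ separates $i$ from $i'$ and $\sigma_\ell(j)=\sigma_\ell(i')$, so $S''=S_{\ell,-\sigma_\ell(j)}$ excludes both $j$ and $i'$. Hence $N(i)\nsubseteq S''$, so line~\ref{alg2:neg_deduct} never touches row $i$, and $S''$ carries no information about $(i,j)$. The algorithm also never propagates $0$'s, so ``a $0$ follows via transitivity'' is not a step it performs.

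Your fallback is circular exactly in the critical case $N(i)=\{i,i'\}$. For example, take $b=2$ with $\sigma(i),\sigma(i')$ complementary. Then no assortment contains both, so $E[i,i']$ is still \texttt{null} when \texttt{IdentifyMissingPairs} is computed, and it becomes $1$ only inside that same step. Knowing it is ``eventually'' $1$ therefore does not keep $(i,j)$ out of \texttt{IdentifyMissingPairs}, which is the whole issue.

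\textbf{The fix.} Pick the digit separating $j$ from the nest-mate $i'$, rather than $i$ from $i'$. Take $\ell$ with $\sigma_\ell(j)\neq\sigma_\ell(i')$.
\begin{itemize}
\item If $\sigma_\ell(i)\neq\sigma_\ell(i')$, then $S_{\ell,-\sigma_\ell(i')}$ contains $i,j$ but not $i'$, and line~\ref{alg2:line4} sets $E[i,j]=0$.
\item Otherwise $\sigma_\ell(i)\neq\sigma_\ell(j)$, and $S=S_{\ell,-\sigma_\ell(j)}$ contains $i,i'$ but not $j$. Since $N(i)=\{i,i'\}\subseteq S$, both lie in $\minBF(S)$. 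The second loop then sets either $E[i,j]=0$ (line~\ref{alg2:neg_deduct}) or $E[i,i']=1$ (line~\ref{alg2:pos_deduct2}), \emph{before} \texttt{IdentifyMissingPairs}. Either one excludes $(i,j)$ from it.
\end{itemize}

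Nests of size at least $3$, on either side, are excluded because all their internal $1$'s already exist after One-Hop Transitivity. That is your second unproved step (``pick a pivot separated from both''), and no special pivot is needed. Take any $k\in N\setminus\{i,i'\}$ and apply your separation fact to each pair of $\{i,i',k\}$. Every vertex then gets a direct $1$ (via line~\ref{alg2:pos_deduct1} or~\ref{alg2:pos_deduct2}) to one of the other two. So at least two of the three edges are set directly, and one hop closes the third.
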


Unlike Theorem~\ref{thm:mainResult}, in Theorem~\ref{thm:idenWithoutOutside} it is possible for $i,j$ to be incorrectly put into the same nest $N$ when in reality $|N(i)|=|N(j)|=1$.  However, in this case, the correct choice function $\phi$ can still be constructed, as nest $N$ effectively consists of singletons if parameter $\lambda_N$ is recovered to be 1 (see \Cref{ass:identify}).

Finally, we remark that the same finite-sample guarantees from \Cref{sec:finiteSample} can be easily translated to this setting without the outside option.  Indeed, Algorithm~\ref{alg:exactWithoutOutside} is based entirely on comparing whether two boost factors are the same, and Theorem~\ref{thm:sampleComplexity} provides a guarantee that ensures no mistakes are made when determining whether two boost factors are the same.

\section{Numerical Comparison of Experiment Designs} \label{sec:numericExpDesign}

We compare our combinatorial experiment design to randomized assortments and other naive designs for data collection, under mis-specified (\Cref{ssec:misSpec}) and well-specified (\Cref{ssec:wellSpec}) choice estimation. We generally try to replicate the ground truths and estimation procedures from \citet{berbeglia2022comparative}, changing only the data collection part of the pipeline.

\subsection{Setup} \label{ssec:numericSetup}

We consider estimation instances defined by
an unknown ground truth choice function $\phi$ over $n$ items, and a fixed budget of $T$ customers. 
An experiment design selects an assortment $S$ for each customer, from which we observe an IID choice drawn according to $\phi$.

We compare the following experiment designs:
\begin{itemize}[nosep]
\item \textbf{Our design} with base $b=2$, which prescribes $|\mathcal{S}|=2\lceil\log_2 n\rceil$ experimental assortments plus a control assortment, offering each one to $\approx T/(|\mathcal{S}|+1)$ customers;
\item \textbf{Randomized design}: randomly generates $G$ assortments and offers each one to $\approx T/G$ customers;
we consider $G=2\lceil\log_2 n\rceil+1$ which uses the same number of distinct assortments as our design,
$G=T/10$ which matches the original design of \citet{berbeglia2022comparative},
and $G=T$ which means each assortment is only offered once;
\item \textbf{Leave-one-out}: a design from \citet{blanchet2016markov} that offers each of the $n+1$ assortments $[n],[n]\setminus 1,\ldots,[n]\setminus n$ to $\approx T/(n+1)$ customers;
\item \textbf{Incremental}: a naive design that offers each of the $n$ assortments $\{1\},\{1,2\},\ldots,\{1,\ldots,n\}$ (after a random shuffling of indices) to $\approx T/n$ customers.
\end{itemize}
We re-iterate that our experiment design generally requires the least number of distinct assortments and hence should also be the cheapest to deploy, a further benefit not reflected in our comparisons.

Under any experiment design, we consider multiple ways to estimate a choice function $\phi^\text{est}$ from the observations, each of which fits a family of choice models using an estimation method.  
We copy exactly five estimation methods and code from \citet{berbeglia2022comparative}:
Exponomial (exp), Latent-class MNL (lc), Markov Chain (mkv),  Multinomial Logit (mnl), and Nested Logit with an arbitrary partition of items into two nests (nl-given).
For further details, see Appendix~\ref{app:berbegliaInsts}.
We add a sixth estimation method, which is Nested Logit with our nest identification Algorithm~\ref{alg:noisyNestWithOutside} (nl-our).


To evaluate how accurately the estimated $\phi^\text{est}$ approximates the ground truth $\phi$, we use the
soft RMSE metric like in \cite{berbeglia2022comparative}, defined as follows:
\begin{equation} \label{eqn:softRMSE}
\RMSEsoft(\phi, \phi^\text{est})
= 
\sqrt{
\frac{
\sum_{S \subseteq [n]} \sum_{i \in S \cup \{0\}}
\left( \phi(i, S) - \phi^\text{est}(i, S) \right)^2
}{
\sum_{S \subseteq [n]} (|S| + 1)
}
}.
\end{equation}
$\RMSEsoft$ evaluates the squared difference between the estimated choice probability $\phi^\text{est}(i,S)$ and the true choice probability $\phi(i,S)$, taking an average over all (item, assortment) pairs $(i,S)$.

\subsection{Mis-specified Instances from \citet{berbeglia2022comparative}}\label{ssec:misSpec}

We replicate 1440 estimation instances (see \Cref{app:berbegliaInsts}) from \citet{berbeglia2022comparative}, each defined by a choice function $\phi$ and a data size $T$, with 360 instances for each $T \in \{300,750,3000,6000\}$. 
There are always $n=10$ items, so our design uses $2\lceil\log_2 n\rceil+1=9$ assortments.
The choice functions are generated from a Ranked List model, implying that all estimated models are mis-specified. 

Here we use the binary encoding and balanced version of our experiment design.
This ensures that each item appears in half of the $8$ experimental assortments and each experimental assortment contains exactly $5$ items. 
For the randomized designs, we generate random assortments with size drawn uniformly from $\{3,4,5,6\}$, following \citet{berbeglia2022comparative}.
The randomized design results with $G=T/10$ pull numbers from their paper verbatim, while the other experiment designs (including the randomized one with $G=9$) draw fresh choice data using our own random seeds.
The soft RMSE results under different estimated choice models are displayed in \Cref{fig:misSpecified}.

\begin{figure}
\centering
\includegraphics[width=0.95\linewidth]{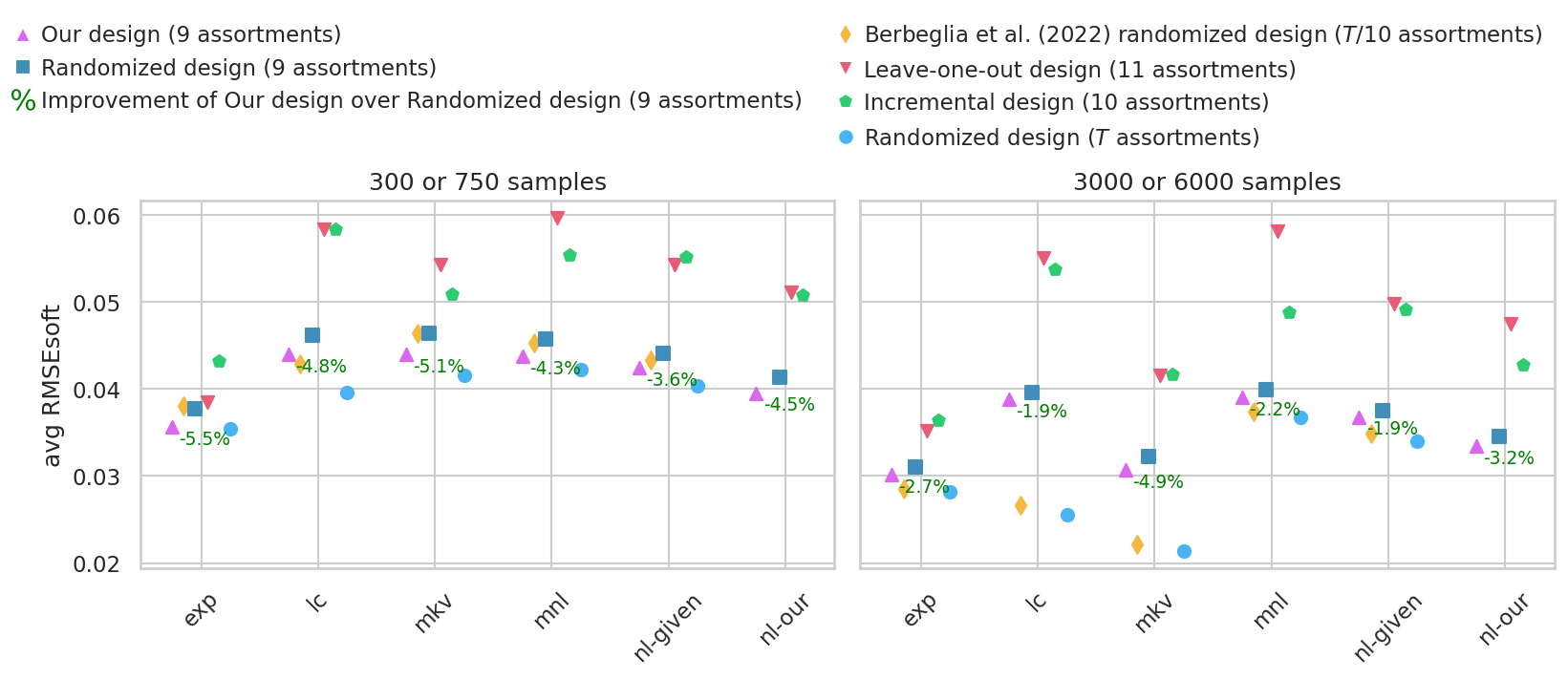}
\caption{Comparing experiment designs in a mis-specified setting}
\footnotesize

\textbf{Left}: average $\RMSEsoft$ over the 720 instances with $T\in\{300,750\}$

\textbf{Right}: average $\RMSEsoft$ over the 720 instances with $T\in\{3000,6000\}$

\label{fig:misSpecified}
\end{figure}

\paragraph{Findings.}
Our design with 9 deliberate assortments consistently outperforms the design with 9 randomized assortments, especially for small data sizes where $T\in\{300,750\}$.
The advantage subsides for $T\in\{3000,6000\}$, where our design is beaten by the original design from \citet{berbeglia2022comparative} which randomly draws $T/10$ (i.e., 300 or 600) experimental assortments.
The difference is particularly stark for the  Exponomial, Latent-class MNL and Markov Chain choice models, which have a lot of parameters, and hence our 9 deliberate assortments do not provide enough identification power.
The design with $T$ individual randomized assortments is generally best under mis-specified choice models, although
it is often impractical to offer so many distinct assortments.
Moreover, under well-specified choice models in \Cref{ssec:wellSpec}, we will see that our experiment design can be competitive even if the randomized design has far more experimental assortments.

Meanwhile, we note that the Leave-one-out and Incremental designs never perform well.  This is because they are offering too many assortments that are either too big or too small, instead of focusing on assortments of size around $n/2$ (see the discussion in \Cref{sec:simpleDesigns}).
Finally, we see that using our nest identification algorithm increases the appeal of Nested Logit especially for $T\in\{300,750\}$. 
Within the broader class of logit-family estimators (lc, mnl, nl-given), our design with our nest identification algorithm  
can even outperform estimators paired with the richer $T/10$ and $T$ randomized assortments, which are the empirically stronger design in this setting.

\subsection{Well-Specified Models}\label{ssec:wellSpec}

We now consider well-specified settings, where the ground truth choice function $\phi$ in the instance aligns with the choice model being fit by the estimation procedure.
We run the code of \citet{berbeglia2022comparative} to generate random ground truths, adhering to their parameter ranges but noting that these exact instances are freshly generated.
Because we have the flexibility to generate arbitrarily many ground truths, we hereafter report 95\% confidence intervals (see \Cref{app:confidenceIntervals}) for the average $\RMSEsoft$ values across instances; in general, we try to generate enough instances to have disjoint intervals around the average $\RMSEsoft$'s of different experiment designs.
We only run each design once on each instance, preferring to generate new instances instead of repeating estimation procedures and re-generating data for the same instance.

We fix the number of items to be $n=16$ and let $T\in\{270,450,900,1800,2700,4500,6750,9000\}$.
For the randomized experiment designs, we generate random assortments of size $n/2$, which we generally found to be best.
We first show results when both the ground truth and estimation procedure follow a Markov Chain choice model, in \Cref{fig:mkvMkv}, which is based on 500 random $\phi$'s.

\begin{figure}
\centering
\begin{minipage}[c]{0.63\textwidth}
\centering
\includegraphics[width=\linewidth]{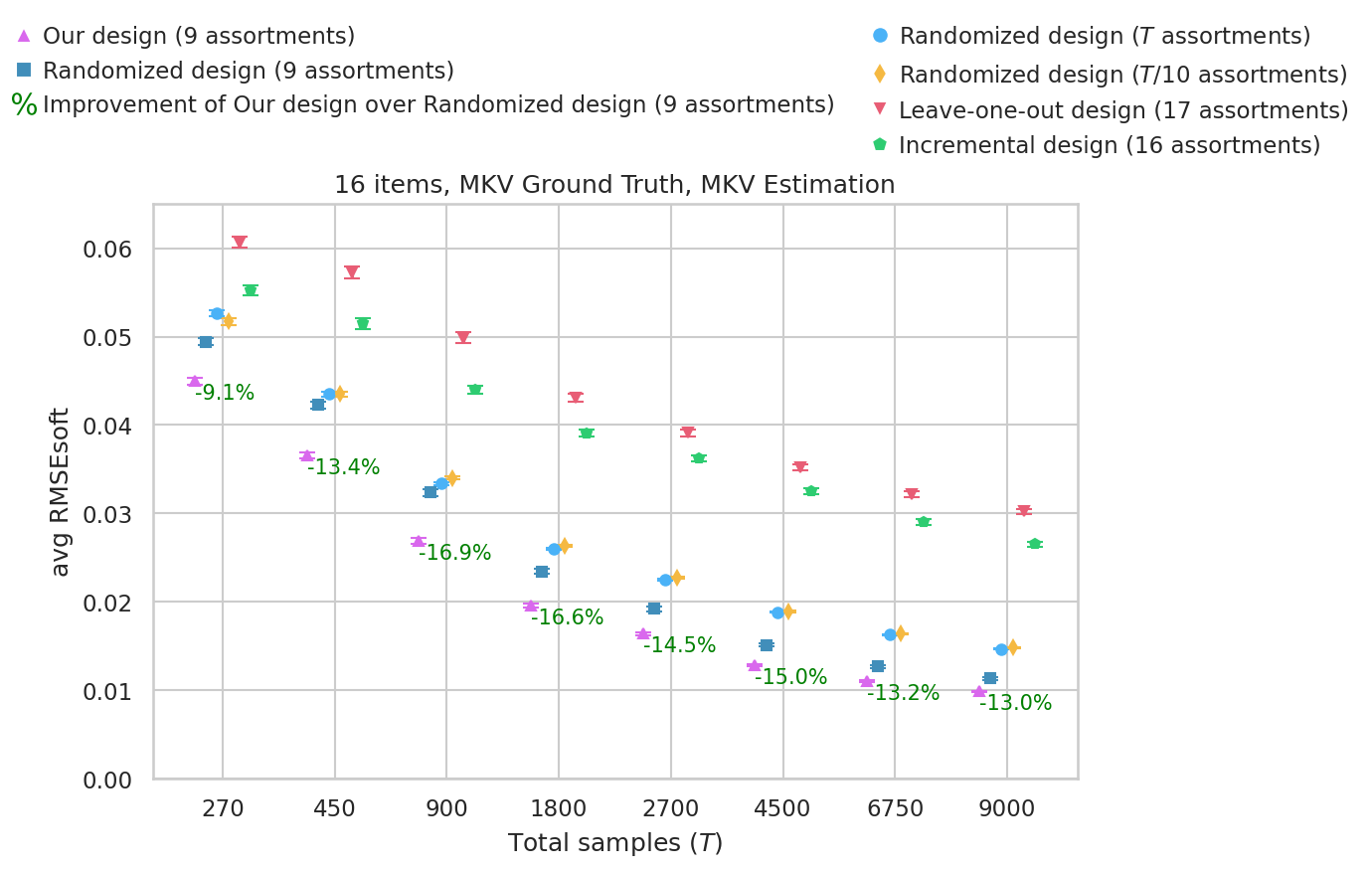}
\end{minipage}%
\hfill
\begin{minipage}[c]{0.33\textwidth}
\caption{Comparing experiment designs in a well-specified setting, where we display average $\RMSEsoft$ over the 500 Markov Chain ground truths under Markov Chain choice estimation}
\label{fig:mkvMkv}
\end{minipage}
\end{figure}

\paragraph{Findings.} Our experiment design dominates in \Cref{fig:mkvMkv}, reducing the soft RMSE compared to 9 randomized assortments by as much as 16.9\%, and even beating designs that can draw $T/10$ or $T$ (i.e., individualized) randomized assortments.  
The 95\% confidence intervals are also separated.

This dominant performance compared to \Cref{fig:misSpecified} suggests that back in \Cref{ssec:misSpec}, randomization in the experimental assortments could have helped "regularize" the model mis-specification.
Here without model mis-specification, we see that our structured experiment design is always preferred, even if the randomized design can have far more experimental assortments.
We also test well-specified estimation on the Exponomial and MNL choice models in \Cref{app:otherWellSpec}, and encounter similar (although weaker) findings.

Finally, we note that the Leave-one-out design was considered by \citet{blanchet2016markov} \textit{specifically for Markov Chain} choice estimation, yet it performs poorly in this exact setting (until one reaches asymptotic data sizes; see \Cref{app:mkvAsymptotic}).  This demonstrates the risks of using asymptotic optimality as a criterion for experiment design in practice, a point we will reinforce in \Cref{ssec:numericNestIdWO}.

\section{Numerical Comparison of Nest Identification Algorithms}\label{sec:numericNestId}
We test our nest identification algorithms on synthetic Nested Logit instances with (\Cref{ssec:numericNestIdWith}) and without (\Cref{ssec:numericNestIdWO}) an outside option, and on the public SFWork dataset (\Cref{ssec:sf}), which has a fixed experiment design and no outside option.  We run modifications of our \Cref{alg:exactWithOutside,alg:exactWithoutOutside}, that use statistical tests to handle noise from small data, and community detection algorithms to resolve inconsistencies in nest membership (see \Cref{app:algs3and4}).

\subsection{Synthetic Nested Logit instances with Outside Option}\label{ssec:numericNestIdWith}

We use the same terminology and setup as described in \Cref{ssec:numericSetup}.
Our estimation instances follow a Nested Logit ground truth over $n=16$ items, where parameters $(v_i)_{i\in[n]}$ and $(\lambda_N)_{N\in\mathcal{N}}$ are randomly generated following \citet{berbeglia2022comparative}, but importantly, we change the nest partition $\mathcal{N}$ to also be randomly generated each time (see \Cref{app:randomNestGeneration} for details).
We generate 500 random Nested Logit choice functions $\phi$ in this way and consider larger values \(T\in\{9{,}000,\ 45{,}000,\ 90{,}000,\ 180{,}000,\ 450{,}000\}\), which are needed to identify the nest partitions.
In addition to the $\RMSEsoft$ metric discussed in \Cref{ssec:numericSetup}, we also report the Rand index \citep{rand1971objective} between the estimated nest partition $\mathcal{N}^\mathrm{est}$ and the true one $\mathcal{N}$, which measures the \% of item pairs $i,j\in[n]$ for which $\mathcal{N}^\mathrm{est}$ correctly declares whether $i$ and $j$ are in the same nest.

We evaluate the soft RMSE and Rand index of our nest identification algorithm on these instances, both under our experiment design and under a randomized design with the same number (8) of experimental assortments drawn uniformly among subsets of size $n/2$ (plus a control assortment).
We also compare to the "default" Nested Logit estimation procedure in \citet{berbeglia2022comparative}, which fixes an arbitrary division\footnote{Because we changed the Nested Logit instance generation to have a random nest partition, this "default" fixed nest structure is no longer correct.  Therefore, it is a weak baseline.} of the items into 2 nests.  We note however that after determining the nests, we always use the same code (from \citet{berbeglia2022comparative}) to estimate the Nested Logit parameters $(v_i)_{i\in[n]}$ and $(\lambda_N)_{N\in\mathcal{N}}$.
The results are displayed in \Cref{fig:ourVsRest_v2}.

\begin{figure}
\centering
\includegraphics[width=0.95\linewidth]{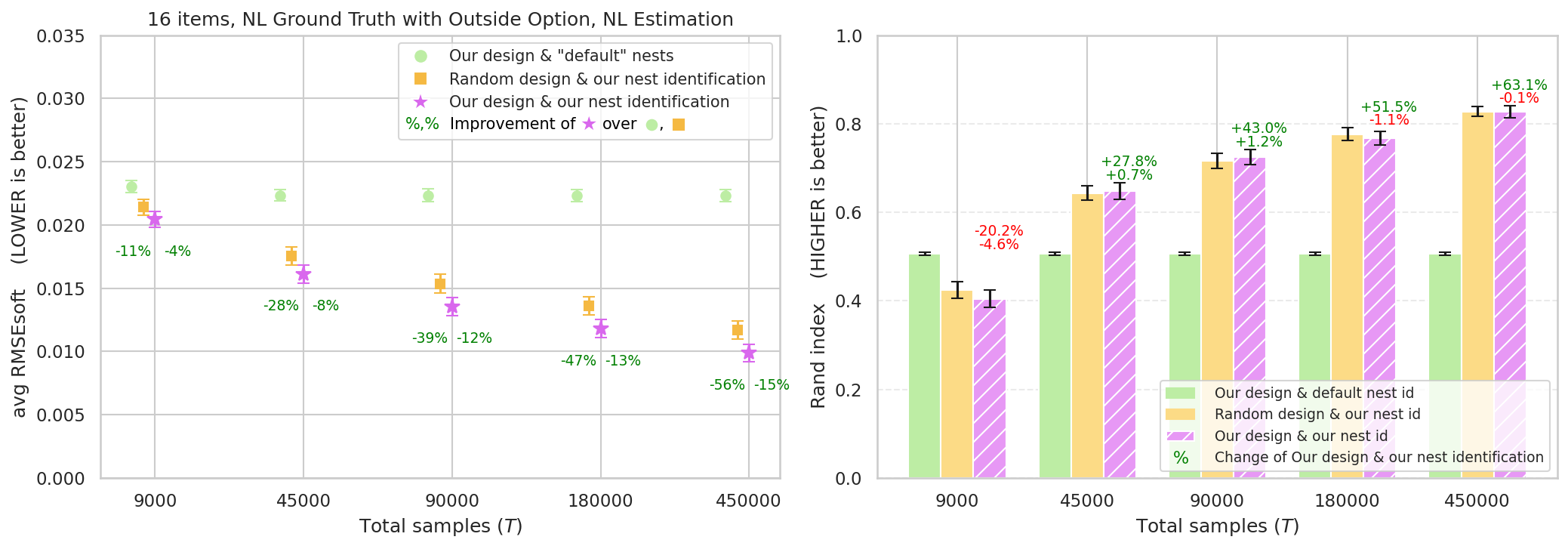}
\caption{Comparing experiment designs and nest identification algorithms, averaged over the 500 Nested Logit ground truths \textbf{with} an outside option}
\label{fig:ourVsRest_v2}
\end{figure}

\paragraph{Findings.}
Our experiment design with our nest identification algorithm performs best at all data sizes, under both metrics $\RMSEsoft$ and Rand index.
Changing our design to a random design hurts performance.
Unsurprisingly, the "default" nests do not perform well at all.

\subsection{Synthetic Nested Logit instances without Outside Option}\label{ssec:numericNestIdWO}

As in \Cref{ssec:numericNestIdWith}, we consider 500 random Nested Logit ground truths over $n=16$ items, and values of $T \in \{9{,}000, 45{,}000, 90{,}000, 180{,}000, 450{,}000\}$.  However, now there is no outside option in the definition of the choice function $\phi$ (see \Cref{sec:outsideAbsTheo}).

We can now compare to the nest identification algorithm of \citet{benson2016relevance}, which is designed for the setting with no outside option.  We compare our nest identification algorithm with data collected from our experiment design, to their nest identification algorithm with data collected from their experiment design (see \Cref{app:BensonImplementation} for implementation details).
We also test their nest identification algorithm combined with our experiment design, noting that we cannot test the converse, because our nest identification algorithm relies on repeat observations, while their experiment design draws fresh assortments.
The results are shown in \Cref{fig:ourVsBenson}.

\begin{figure}
\centering
\includegraphics[width=0.95\linewidth]{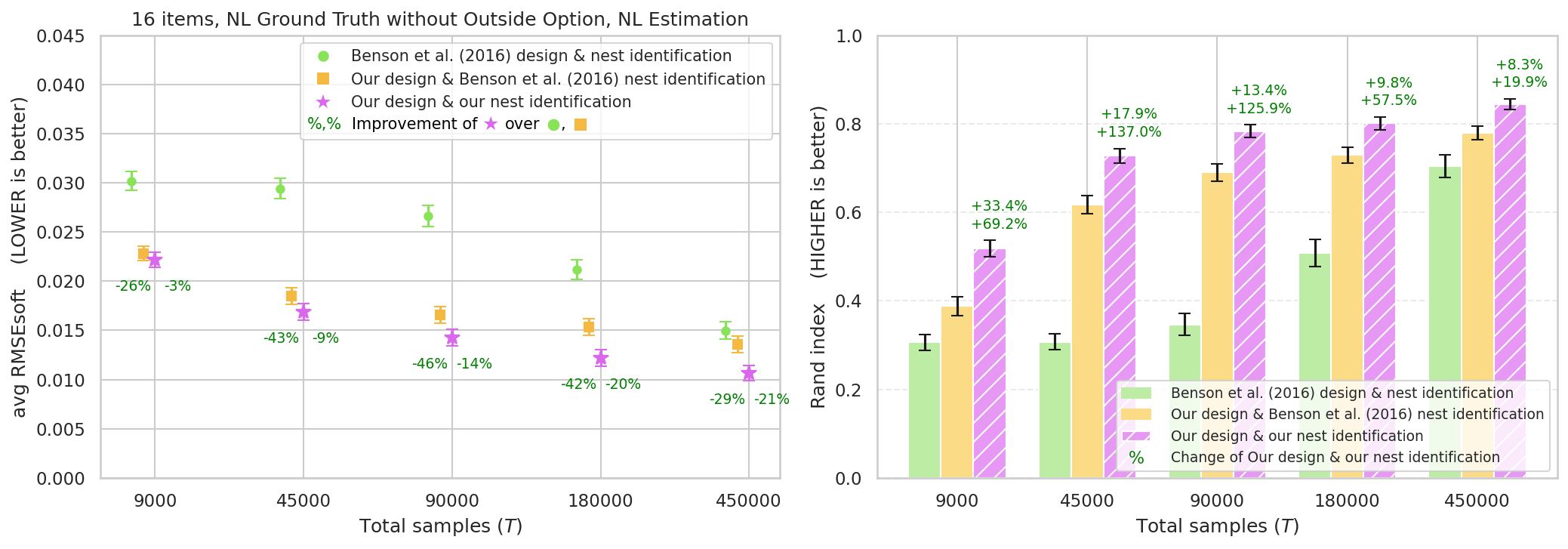}
\caption{Comparing experiment designs and nest identification algorithms, averaged over the 500 Nested Logit ground truths \textbf{without} an outside option}
\label{fig:ourVsBenson}
\end{figure}

\paragraph{Findings.}
Surprisingly, even for the nest identification algorithm of \citet{benson2016relevance}, our experiment design is much better than theirs.
In fact, switching to our experiment design reduces the average $\RMSEsoft$ by as much as 46\%, and makes their nest identification algorithm hard to beat, except in the asymptotic regime where it will be biased.  
That being said, their nest identification algorithm aggregates observations across different assortments to reduce variance, perform comparably to ours in the smallest data regime of $T= 9{,}000,$.

In light of \Cref{fig:ourVsBenson}, the main advantage of our nest identification algorithm over that of \citet{benson2016relevance} is that it encourages the (much) better experiment design.
In order to be theoretically unbiased, their nest identification algorithm requires assortments of sizes 2 and 3 (see \Cref{app:BensonImplementation}), which are much less informative for data collection in non-asymptotic numerical settings.

\subsection{SFWork dataset}\label{ssec:sf}

We consider choice estimation on the public SFWork dataset \citep{KoppelmanBhat2006}, which has $n=6$ options (without an outside option) for work commute in San Francisco.  There are $T=5000$ total observed choices from 12 distinct assortments.
Unlike the synthetic instances studied in \Cref{sec:numericExpDesign,sec:numericNestId} so far, here we cannot evaluate the true $\RMSEsoft$ because we do not know the ground truth, and we cannot change the experiment design and generate fresh observations.
Consequently, we measure $\RMSEsoft$ over only the fixed assortments observed in the dataset.
Let
\[
\mathcal A(S) :=
\begin{cases}
S\cup\{0\}, & \text{if the outside option is observed},\\
S, & \text{otherwise}.
\end{cases}
\]
We define
\begin{align}
\label{eqn:rmsesoftMod}
\RMSEsoft(\phi,\phi^\text{est})
=
\sqrt{
\frac{
\sum_{S\in\mathcal S}\sum_{i\in\mathcal A(S)}
\left(\phi(i,S)-\phi^\text{est}(i,S)\right)^2
}{
\sum_{S\in\mathcal S}|\mathcal A(S)|
}
}.
\end{align}

where $\mathcal{S}$ denotes the collection of assortments observed (i.e.,~$|\mathcal{S}|=12$ in this case).
We split the data into train/test and define the ground truth choice probabilities $\phi(\cdot,S)$ for each $S\in\mathcal{S}$ based on the empirical probabilities in the held-out testing set.

Because we do not know the ground truth, here we estimate different choice models like in the mis-specified setting (\Cref{ssec:misSpec}), including MNL (commonly used low-parameter model), Markov Chain (commonly used for choice prediction under sufficient data), and Nested Logit with data-driven nests, that come from either our algorithm or that of \citet{benson2016relevance}.  We also introduce a \textbf{Point Estimate} baseline that directly uses empirical probabilities in the training set to define $\phi^\text{est}(\cdot,S)$, which has no risk of mis-specification but forgoes the generalization capability of choice models.
The results under different train/test splits are displayed in \Cref{fig:sfwork_rmse}, where we average over 10000 shuffled orders of the dataset.

\begin{figure}
\centering
\begin{minipage}[c]{0.66\textwidth}
\centering
\includegraphics[width=\linewidth]{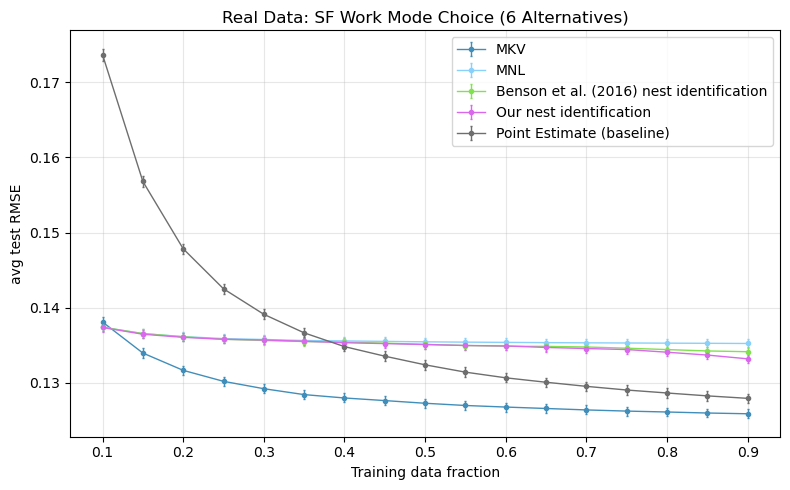}
\end{minipage}%
\hfill
\begin{minipage}[c]{0.33\textwidth}
\caption{
SFWork dataset: comparing models for out-of-sample choice prediction, with data-driven nests in Nested Logit, computed using our nest identification algorithm or that of \citet{benson2016relevance}}
\footnotesize

Estimated models: MNL, Nested Logit (NL), Markov Chain (MKV)
\label{fig:sfwork_rmse}
\end{minipage}
\end{figure}

\paragraph{Findings.}
Without the data coming from our experiment design, our nest identification algorithm does not have an advantage over that of \citet{benson2016relevance}, but can still match its performance thanks to the 5000 observations being concentrated on a small enough number of assortments.
The Nested Logit models are also no better than the simpler model of MNL.
All three of these models are 
much worse than Markov Chain at fitting the real-world data, suggesting that the ground truth is not Nested Logit, and/or Markov Chain is just generally a better model for choice prediction.

In the subsequent \namecref{sec:deployment}, we will see a more favorable comparison for Nested Logit. Regardless, we emphasize that the sole goal of choice modeling is not out-of-sample prediction, which would disproportionately favor high-parameter models like Markov Chain \citep{blanchet2016markov} or black-box machine learning structures \citep[e.g.][]{chen2022decision,aouad2025representing}.
To the contrary, one often wants to fit a Nested Logit model for the purpose of explaining how customers make choices, and in particular, which items are close substitutes, which is exactly what we do next for Dream11's management.

\section{Implementation at Dream11} \label{sec:deployment}

Following the experiment design described in \Cref{sec:ourExpDesign}, we set $b=2$ and formed $2\lceil\log_2 72\rceil=14$ experimental assortments for $72$ different contest options, where approximately half of these options were removed from each assortment. Dream11 ran an A/B test with these 14 assortments as treatment groups plus the full set as a control group on 70 million users, from May 20, 2025 to June 10, 2025.

\subsection{Out-of-sample Prediction.}
We make a plot like \Cref{fig:sfwork_rmse} in \Cref{ssec:sf}, with the following modifications. First, we show the soft RMSE \textit{divided by} that of the Point Estimate baseline, to mask absolute performance, also noting that in the definition~\eqref{eqn:rmsesoftMod} of soft RMSE, we let $\cS$ consist of the 14 experimental assortments and the control assortment.
Second, we do a single chronological train/test split for each of the 21 days, instead of averaging over shuffled orders of the dataset, which causes the results to be more volatile.
Finally, we also compare against an ex-ante partitioning of the contests into nests based on running $k$-means-clustering on the contest features, fixing $k=4$ which was usually the best number of clusters for prediction.
The results are displayed in \Cref{fig:dreamSports_rmse}. For further background on the Dream11 platform, see \Cref{sec:d11background}; for an interpretation of the identified nests, see \Cref{sec:interpretability}.

\begin{figure}
\centering
\begin{minipage}[c]{0.75\textwidth}
\centering
\includegraphics[width=\linewidth]{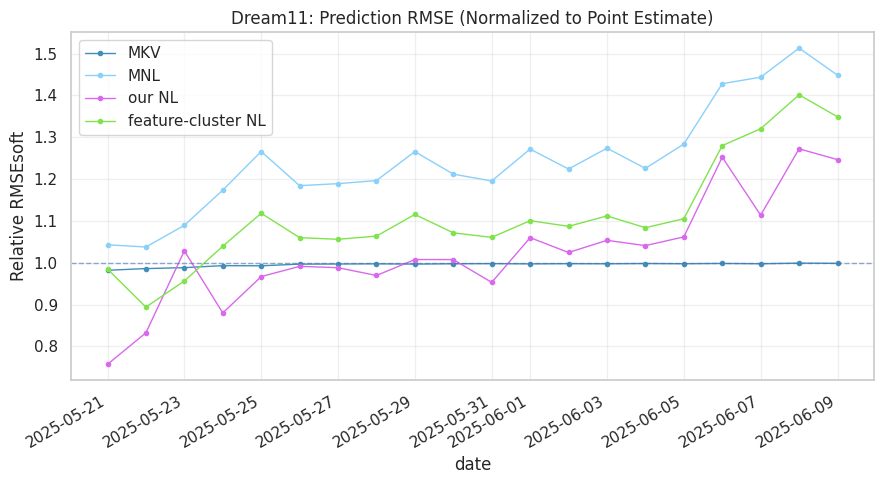}
\end{minipage}%
\hfill
\begin{minipage}[c]{0.24\textwidth}
\caption{
Dream11 data collected under our experiment design: comparing models for out-of-sample choice prediction, with different ways of identifying nests in Nested Logit}
\label{fig:dreamSports_rmse}
\end{minipage}
\end{figure}

\paragraph{Findings.} Data-driven nest identification via our algorithm broadly outperforms ex-ante clustering and achieves the lowest out-of-sample $\RMSEsoft$ during the first half of the horizon, before too much data is accumulated (in which case the point estimate is unbeatable).
Both our Nested Logit specification and the feature-cluster Nested Logit variant consistently outperform the simpler MNL model throughout the horizon (cf.\ \Cref{fig:sfwork_rmse}).
The Markov Chain choice model performs nowhere near as well as in \Cref{fig:sfwork_rmse}; its performance tracks very closely the point estimate baseline.

\subsection{Interpreting the Nests Identified} 
\label{sec:interpretability}

We estimate a Nested Logit choice model with data-driven nest identification to help Dream11's management understand which contests are close substitutes. 
The identified nests can often be justified ex post using either the original contest features or simple transformations of these features. 
The original features are Entry Fee, Prize Pool, \# of Contestants, and \# of Winners. 
We also consider two transformed features: the \textit{Winner Ratio}, defined as \# of Winners divided by \# of Contestants, and the \textit{Prize Ratio}, defined as the Prize Pool divided by the total entry fees collected from contestants. 
The latter measures the fraction of collected fees returned to contestants, so a higher Prize Ratio is more favorable to users.

Several examples illustrate that the nests identified by our algorithm are economically interpretable; the corresponding tables are reported in Appendix~\ref{app:nest-examples}. 
First, the algorithm identifies contests that are not especially close in the original feature space, but become nearly identical once Winner Ratio and Prize Ratio are considered. 
For example, the two contests in \Cref{tab:essentially_same} have very different Entry Fees, Prize Pools, and numbers of contestants, but both have Winner Ratio \(0.50\) and Prize Ratio \(0.83\). 
This suggests that users may perceive contests through normalized payoff and winning-probability features rather than only through raw contest attributes.

Second, the algorithm groups together several winner-take-all contests, as shown in \Cref{tab:small_size}. 
Ex ante, it may be surprising that contests with substantially different Entry Fees are close substitutes. 
However, this grouping is sensible ex post: winner-take-all contests are likely to attract power users, who may focus primarily on the high-upside prize structure and expected return rather than the absolute magnitude of the Entry Fee.

Third, the nest in \Cref{tab:irrational} suggests a more subtle behavioral pattern. 
It contains low-entry-fee contests with similar Winner Ratios, but with two different structures: some contests offer a higher Prize Ratio with fewer winners, while others offer a lower Prize Ratio with more winners. 
The latter contests are not necessarily better in terms of the probability of winning, yet they may appear more attractive because they advertise a larger number of winners. 
This suggests that users may respond to the salience of ``more winners'' even when the Winner Ratio is not higher.

Overall, these examples suggest that the nests identified by our algorithm, which also improve model fit, are meaningful ex post. 
They would have been difficult to specify ex ante because there are many possible ways to cluster contests in the original feature space, and many possible transformations one could define. 
We caveat that there may be other equally sensible and predictive nest structures not recovered by our algorithm. 
Nevertheless, our experimental design and nest identification pipeline produced nests that were sufficiently interpretable and useful for Dream11's management.
\section{Concluding Remarks and Future Directions} \label{sec:conc}

In this paper, we propose a simple combinatorial design for learning choice models over $n$ items that uses only $O(\log n)$ experimental assortments.
The small number of experiments significantly eases deployment in practice, e.g.\ for parallel experimentation across locations or switchback experiments over time.
In spite of this, our design also outperforms randomized assortments and other simple designs for data collection (see \Cref{sec:numericExpDesign}), sometimes even when they have substantially more experiments.
This is because our combinatorial arrangement tries to ensure enough "separations" between items are observed (see \Cref{sec:designDescr} and \Cref{table:introDesign}), to unveil richer substitution patterns.

That being said, our work leaves several questions for further investigation.  First, the theoretical justification for our experiment design is limited to Nested Logit, guaranteeing the correctness of nest identification (see \Cref{sec:theory}).
This does not justify its strong empirical performance for general choice models (see \Cref{sec:numericExpDesign}).
Second, our design treats the items as symmetric ex ante, whereas in practice, one often has domain knowledge on some products having larger market shares, being good candidates for lying in same nest, etc.
Designing experimental assortments that incorporate this domain knowledge remains an interesting and practical open question.

All in all, our paper brings to light the question of experiment design to learn choice substitution patterns, which had previously been mostly an afterthought (see the related work areas discussed in \Cref{sec:furtherRelated}).  The most salient contrast is with \citet{benson2016relevance}---their experiment design is based around their nest identification algorithm, whereas our nest identification algorithm is based around our experiment design.  We show that the data collection step can make a first-order difference in choice estimation performance, sometimes bigger than the difference in the choice model selected.
We came upon this question through our collaboration with Dream11, where we had the unique luxury of doing choice estimation with full control of the data collection process, as our experiment design was deployed across 70 million users on their platform.

\bibliographystyle{informs2014}
\bibliography{bibliography}

\appendix
\section{Proofs from \Cref{sec:theory}}

\subsection{Proof of Theorem~\ref{thm:mainResult}} \label{sec:mainResultPf}

We want to show that $E[i,j]=1$ if and only if $N(i)=N(j)$, for $i\neq j$.
It follows from \Cref{prop:outsideNew} that the deductions on lines~\ref{alg1:line5}, \ref{alg1:line7}, and \ref{alg1:noboost} of Algorithm~\ref{alg:exactWithOutside} are correct, and it follows by construction that the "One Hop Transitivity" operation is correct.
Because the final line sets $E[i,j]$ to 0 for any unfilled entries, we only need to prove that $E[i,j]$ gets set to 1 for all items $i\neq j$ with $N(i)=N(j)$, and that "Identify Missing Pairs" cannot incorrectly set any $E[i,j]$ to 1.
\begin{itemize}
\item We first show that if $N(i)=N(j)$ with $|N(i)|\ge3$, then $E[i,j]$ gets set to 1 either on line~\eqref{alg1:line7} or through $\text{OneHopTransitivity}$.  Indeed, the encodings of $i,j$ must satisfy $\sigma_\ell(i)\neq\sigma_\ell(j)$ for some position $\ell$.  Take a third item $k$ in the same nest as $i$ and $j$.
\begin{itemize}
\item If $\sigma_\ell(k)\neq\sigma_\ell(i)$, then for $S:=S_{\ell,-\sigma_\ell(i)}$, we have $j,k\in S$ and $i\notin S$, which implies that $E[j,k]$ would get set to 1 on line~\eqref{alg1:line7}.
\item If $\sigma_\ell(k)\neq\sigma_\ell(j)$, then we can symmetrically argue that $E[i,k]$ would get set to 1 on line~\eqref{alg1:line7}.
\end{itemize}
At least one of these two cases must hold (because $\sigma_\ell(i)\neq\sigma_\ell(j)$), which shows that $k$ gets connected to at least one of $i,j$ via line~\eqref{alg1:line7}.  We can symmetrically see that $i$ gets connected to at least one of $j,k$, and $j$ gets connected to at least one of $i,k$.  This implies that at least two of $\{i,j,k\}$ would get connected via line~\eqref{alg1:line7}, which means that the third is guaranteed to get connected via $\texttt{OneHopTransitivity}$.
\item Next, if $N(i)=N(j)=\{i,j\}$, then $E[i,j]$ cannot get incorrectly set to 0 before line~\eqref{alg1:uniqueNest}, and would get set to 1 via line~\eqref{alg1:uniqueNest}, because it is also guaranteed that neither $E[i,k]$ nor $E[j,k]$ can get (incorrectly) set to 1 for some $k\notin\{i,j\}$ before line~\eqref{alg1:uniqueNest}.
\item Finally, we must show that $E[i,j]$ cannot get set to 1 on line~\eqref{alg1:uniqueNest} if $N(i)\neq N(j)$.  If $|N(i)|\ge 3$ then $E[i,k]$ has already been set to 1 for all $k\in N(i)$ by the end of $\texttt{OneHopTransitivity}$, as we argued in the first bullet, and hence it is not possible for $(i,j)\in\texttt{IdentifyMissingPairs}$.  If $|N(i)|=1$, then there exists $S\in\mathcal{S}$ such that $i\in S,j\notin S$ (take a position $\ell$ where $\sigma_\ell(i)\neq\sigma_\ell(j)$, and consider $S=S_{\ell,-\sigma_\ell(j)}$), and we will have $\mathsf{BF}(i,S)=\mathsf{BF}(0,S)$ because $N(i)\subseteq S$, so $E[i,j]$ would get set to 0 on line~\eqref{alg1:noboost}.

In the last case where $N(i)=\{i,k\}$ for some $k\neq j$, we know $\sigma_\ell(k)\neq\sigma_\ell(j)$ for some position $\ell$, and either $\sigma_\ell(i)\neq\sigma_\ell(k)$ or $\sigma_\ell(i)\neq\sigma_\ell(j)$.
\begin{itemize}
\item If $\sigma_\ell(i)\neq\sigma_\ell(k)$, then for $S:=S_{\ell,-\sigma_\ell(k)}$, we have $i,j\in S$ and $k\notin S$.  Because $k\in N(i)$, we have $N(i)\nsubseteq S$, which guarantees that we will see different boost factors $\mathsf{BF}(i,S)\neq\mathsf{BF}(j,S)$ by \Cref{prop:outsideNew} II., and hence 
we will have set $E[i,j]=0$ already on line~\eqref{alg1:line5}.
\item If $\sigma_\ell(i)\neq\sigma_\ell(j)$, then for $S:=S_{\ell,-\sigma_\ell(j)}$, we have $i,k\in S$ and $j\notin S$.  Because $N(i)=\{i,k\}\subseteq S$, it is guaranteed that $\mathsf{BF}(i,S)=\mathsf{BF}(0,S)$, which means that either $E[i,j]$ would get set to 0 via line~\eqref{alg1:noboost} (where $j\notin S$).
\end{itemize}

In all cases, we have shown that $E[i,j]$ cannot get incorrectly set to 1 on line~\eqref{alg1:uniqueNest}, completing the proof of Theorem~\ref{thm:mainResult}.
\end{itemize}

\subsection{Proof of the $\Omega(\log n)$ Lower Bound for Nest Identification}
\label{sec:logLowerBoundPf}

\begin{theorem}
\label{thm:logLowerBound}
For any experiment design $\mathcal{S}$, whether adaptive or non-adaptive, that enables nest identification for $n$ items under the Nested Logit choice model and satisfies \Cref{ass:identify}, we have
\[
|\mathcal{S}| = \Omega(\log n).
\]
\end{theorem}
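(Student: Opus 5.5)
The approach is an information-theoretic (adversary or counting) argument. Since we observe exact market shares, each assortment reveals real numbers, so a naive bit-counting argument over the real-valued outcomes fails. The plan is therefore to restrict to a family of instances in which every assortment's outcome is drawn from a small finite set. Concretely, we fix all nests to be either singletons or pairs. Each pair is given a common dissimilarity parameter $\lambda<1$, and all items have weight $v_i=1$. Choice probabilities under an assortment $S$ then depend only on three counts: the number of pairs fully in $S$, the number of pairs with exactly one item in $S$, and the number of singletons in $S$. For a given item $i\in S$, its own probability depends additionally only on whether its nest is fully contained in $S$. Hence, for each $S$, the full vector of observed shares $(\phi(i,S))_{i\in S\cup\{0\}}$ is determined by a labeling of each $i\in S$ as ``whole nest in $S$'' or ``partner missing''. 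This labeling is the only partition-dependent information.

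This labeling still has up to $2^{|S|}$ values, so a more careful restriction is needed. I would instead use the simplest family: the partition contains exactly one nest of size two, $\{a,b\}$, and all other items are singletons. With $\lambda$ and the weights fixed, the observed outcome of $S$ is determined by which of three cases holds: $\{a,b\}\subseteq S$, $\{a,b\}\cap S=\emptyset$, or exactly one of them is in $S$. In the third case, the outcome also reveals the identity of that one item, since it is the item with the boosted share. An adaptive design thus receives, per query, an answer that is one of the following: ``both in'', ``neither in'', or the name of an item. The last option can have up to $n$ values, which is too many for a clean $\log_3$ count.

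To fix this, I would use an adversary argument on the candidate pairs rather than pure counting. The adversary maintains a set $P$ of candidate pairs, initially $\binom{n}{2}$ of them, consistent with all answers so far. Given a query $S$, the adversary partitions $P$ into $P_{\rm in}=\{\{a,b\}\subseteq S\}$, $P_{\rm out}=\{\{a,b\}\cap S=\emptyset\}$, and the split pairs. It answers with whichever of ``both in'' or ``neither in'' keeps more candidates. The key step is to show that $\max(|P_{\rm in}|,|P_{\rm out}|)$ remains a constant fraction of $|P|$, or at least that the candidate set shrinks by at most a constant factor per query. This fails for general $P$, so instead I would maintain the invariant that $P=\binom{U}{2}$ for a vertex set $U$. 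Answering ``both in'' keeps $\binom{U\cap S}{2}$ candidates and ``neither in'' keeps $\binom{U\setminus S}{2}$. One of $U\cap S$ and $U\setminus S$ has size at least $|U|/2$, so the adversary keeps that side and $U$ at least halves per query. Identification requires $|U|\le 2$, since two distinct candidate pairs would still produce identical observations. This forces at least $\log_2(n/2)$ queries, which gives $\Omega(\log n)$ even for adaptive designs.

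The main obstacle is verifying that the answers given to the design are genuinely consistent with every candidate pair in $\binom{U}{2}$. That is, the full exact share vector must be identical for all such pairs, including the boost factors and the outside-option share. This holds because, under uniform weights and a common $\lambda$, every item's share depends only on the case ``pair inside $S$'' or ``pair disjoint from $S$'' together with $|S|$. In the ``inside'' case, the pair's items receive the nest-level split $\tfrac12 2^{\lambda}/(\cdots)$. This is where non-identical shares could leak the pair's location. To handle it, I would choose $\lambda=0$ with nest weight $v_N$ chosen so that the pair's two items each get the same share as a singleton: for instance, with $\lambda_N=0$ and $v_N=2$, each item gets share $1/(1+\text{total})$, which is exactly the singleton share. \Cref{ass:identify} still holds because $\lambda_N\neq1$ for the pair. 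With this choice, all items in a ``both in'' or ``neither in'' $S$ are indistinguishable, so the argument goes through. The same construction also covers the setting without the outside option.
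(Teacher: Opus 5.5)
Your argument is correct, and its skeleton matches the paper's. A pair never separated by the queried sets cannot be identified, and each query can at most halve a never-separated class, so a class of size at least $n/2^T$ survives $T$ queries. Your set $U$ plays the role of the paper's $M_t$.

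The difference is the indistinguishability gadget. The paper compares a pair nest $\{i,j\}$ with $\lambda\in(0,1)$ against the all-singleton model with reweighted items $v_i'=v_i(v_i+v_j)^{\lambda-1}$. You instead compare candidate pairs inside $U$ against each other. You take $\lambda_N=0$, $v_N=2$ and unit weights, so every non-splitting query returns the uniform vector $1/(1+|S|)$.

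Your version makes the adversary's answer independent of the hidden instance, so adaptivity is immediate: the query sequence is literally identical for every candidate. In the paper's Step 1 this is left implicit. One has to note that the two models, having equal observations, generate the same adaptive query sequence from a fixed reference run.

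The paper's version avoids the boundary case $\lambda_N=0$. Also, because one of its indistinguishable alternatives is the all-singleton model, it already rules out $|M_T|=2$ and gives $T\ge\lceil\log_2 n\rceil$ rather than your $\log_2(n/2)$.

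Neither point is a real weakness of your route. Giving the pair items weight $2^{1/\lambda-1}$ with any $\lambda\in(0,1)$ yields the same uniform shares, which is exactly the paper's reweighting. And your unit-weight all-singleton model is also consistent with the adversary's answers, so $|U|=2$ is already fatal.
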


\begin{proof}
Let $\mathcal S=\{S_t\}_{t=1}^T$ be an (adaptive) experiment design, where each $S_t\subseteq[n]$ may depend on past observations. We prove $T=\Omega(\log n)$.

\paragraph{Step 1 (Every pair must be separated).}
Say $S$ \emph{separates} $(i,j)$ if $\mathbf 1\{i\in S\}\neq \mathbf 1\{j\in S\}$. Suppose some $i\neq j$ are never separated, i.e., $(i\in S_t)\Leftrightarrow (j\in S_t)$ for all $t$.
We construct two Nested Logit models with identical choice probabilities on every queried $S_t$ but different nest partitions.

\smallskip\noindent
\textbf{Model A:} $\{i,j\}$ is a two-item nest with $\lambda\in(0,1)$ and weights $v_i,v_j>0$; all other items are singleton nests with weights $\{v_k\}_{k\neq i,j}$.

\smallskip\noindent
\textbf{Model B:} all items are singleton nests; keep $v_k$ for $k\neq i,j$ and set
\[
v_i' := v_i (v_i+v_j)^{\lambda-1},\qquad
v_j' := v_j (v_i+v_j)^{\lambda-1},
\]
so that $v_i'+v_j'=(v_i+v_j)^\lambda$.

For any queried $S$, either $\{i,j\}\subseteq S$ or $\{i,j\}\cap S=\emptyset$. In the latter case, both models assign zero probability to choosing $i$ or $j$ and coincide on the remaining items. In the former case, under Model A,
\[
\phi_A(i,S)=\frac{(v_i+v_j)^\lambda}{1+\sum_{N\in\mathcal N} v_N(S)}\cdot \frac{v_i}{v_i+v_j}
=\frac{v_i (v_i+v_j)^{\lambda-1}}{1+\sum_{N\in\mathcal N} v_N(S)},
\]
while under Model B, $\phi_B(i,S)=\frac{v_i'}{1+\sum v_N(S)}$.
Moreover, the denominators match: when $\{i,j\}\subseteq S$, Model~A contributes
$v_{\{i,j\}}(S) = (v_i + v_j)^{\lambda}$
to $\sum_{N\in\mathcal N} v_N(S)$, while Model~B contributes
$v_i' + v_j' = (v_i + v_j)^{\lambda}$,
and all other nest weights are unchanged.
Hence $\phi_A(\cdot,S)=\phi_B(\cdot,S)$ for every queried $S_t$ and $i, j$, contradicting identifiability. Therefore, every identifying design must separate every pair.

\paragraph{Step 2 (At least $\lceil\log_2 n\rceil$ experiments).}
After $t$ experiments, let
\[
p_t(x):=\big(\mathbf 1\{x\in S_1\},\ldots,\mathbf 1\{x\in S_t\}\big)\in\{0,1\}^t.
\]
Items with the same pattern have not been separated. Let $M_t$ be a largest equivalence class under $p_t(\cdot)$, so $|M_0|=n$. The next experiment splits $M_t$ into $M_t\cap S_{t+1}$ and $M_t\setminus S_{t+1}$, hence
\[
|M_{t+1}|\ge \left\lceil \frac{|M_t|}{2}\right\rceil
\quad\Rightarrow\quad
|M_T|\ge \left\lceil \frac{n}{2^T}\right\rceil.
\]
If $T<\log_2 n$ then $|M_T|\ge2$, so some pair was never separated, contradicting Step 1. Thus $T\ge \lceil\log_2 n\rceil$ and $|\mathcal S|=\Omega(\log n)$.
\end{proof}

\subsection{Proof of Theorem~\ref{thm:sampleComplexity}}\label{sec:sampleComplexityPf}

$C$ is a constant that will be set at the end of the proof, but we assume throughout that $C\ge 2$.
For all $S\in\mathcal{S}\cup\{[n]\}$ and $i\in S \cup \{0\}$, by the multiplicative Chernoff bound, for any $\varepsilon\in(0,1)$,
\begin{align*}
P\left[\left|\frac{\hphi(i,S)}{\phi(i,S)}-1\right|>\varepsilon\right]
&\le2\exp(-\frac{m\phi(i,S)\varepsilon^2}{3})
\le2\exp(-\frac{m\phi(i,[n])\varepsilon^2}{3})
\le2\exp(-\frac{m\rho\varepsilon^2}{3}),
\end{align*}
where $\phi(i,S)\ge\phi(i,[n])$ since Nested Logit satisfies $\phi(i,S)\ge\phi(i,S')$ for all $i\in S\subseteq S'$.
Setting $\varepsilon=\Delta/C$, the assumption on $m$ ensures that
\begin{align} \label{eqn:concProperty1}
\left|\frac{\hphi(i,S)}{\phi(i,S)}-1\right|\le\frac{\Delta}{C}
\end{align}
with probability at least $1-\delta/K$.  Because $\Delta<1$ and $C\ge 2$, we can also derive from~\eqref{eqn:concProperty1} that
\begin{align} \label{eqn:concProperty1.5}
2\phi(i,S) \ge \hphi(i,S)\ge\phi(i,S)/2.
\end{align}

For all $S\in\mathcal{S}\cup\{[n]\}$ and $i\in S \cup \{0\}$, define $X(i,S):=m\hphi(i,S)$ to be the number of times $i$ was chosen from $S$ in the samples.
Now, for all $S\in\mathcal{S}\cup\{[n]\}$ and $\{i,j\}\subseteq S \cup \{0\}$, conditional on any value of $X(i,S)+X(j,S)$, the value of $X(i,S)$ is distributed as $\text{Binomial}(X(i,S)+X(j,S),q)$, where we have let $q=\frac{\phi(i,S)}{\phi(i,S)+\phi(j,S)}$.  By the multiplicative Chernoff bound, for any $\varepsilon\in(0,1)$,
\begin{align*}
P\left[\left|\frac{X(i,S)}{X(i,S)+X(j,S)}-q\right|>\varepsilon q\right]&\le2\exp(-\frac{(X(i,S)+X(j,S))q\varepsilon^2}{3}).
\end{align*}
Setting the RHS to $\delta/K$ and solving for $\varepsilon$, we get $\varepsilon=\sqrt{\frac{3\log(2K/\delta)}{(X(i,S)+X(j,S))q}}$, and hence
\begin{align} \label{eqn:concProperty2}
\left|\frac{X(i,S)}{X(i,S)+X(j,S)}-\frac{\phi(i,S)}{\phi(i,S)+\phi(j,S)}\right|\le\sqrt{\frac{3\log(2K/\delta)}{X(i,S)+X(j,S)}\cdot\frac{\phi(i,S)}{\phi(i,S)+\phi(j,S)}}
\end{align}
with probability at least $1-\delta/K$.

Union-bounding over at most $(|\mathcal{S}|+1)(n+1)+(|\mathcal{S}|+1)\binom{n+1}2=K$ "bad" events, we get that with probability at least $1-\delta$, \eqref{eqn:concProperty1}, \eqref{eqn:concProperty1.5}, and \eqref{eqn:concProperty2} all hold, for all $S\in\mathcal{S}\cup\{[n]\}$ and $i,j\in S \cup \{0\}$ such that $i\neq j$.
Assuming this, we now show for an arbitrary $S\in\mathcal{S}$ and $i,j\in S\cup\{0\}$ that
\begin{align} \label{eqn:sampCompCond}
|z(i\succ j,S)| \le 8\sqrt{3\log(2K/\delta)}
\end{align}
if and only if $\mathsf{BF}(i, S) = \mathsf{BF}(j, S)$.

\paragraph{Case 1: $\mathsf{BF}(i, S) = \mathsf{BF}(j, S)$.}
We must show~\eqref{eqn:sampCompCond} holds.
We know that $\frac{\phi(i,S)}{\phi(i,S)+\phi(j,S)}=\frac{\phi(i,[n])}{\phi(i,[n])+\phi(j,[n])}$, and we let $q$ denote this probability. By swapping indices $i,j$ as needed, we assume that $q\le1/2$. 

We first analyze the numerator in the definition of $z(i\succ j,S)$ (see~\eqref{eqn:zstat}).  We derive
\begin{align*}
\left|\frac{\hphi(i,S)}{\hphi(i,S)+\hphi(j,S)}-\frac{\hphi(i,[n])}{\hphi(i,[n])+\hphi(j,[n])}\right|
&\le \left|\frac{\hphi(i,S)}{\hphi(i,S)+\hphi(j,S)}-q\right|+\left|q-\frac{\hphi(i,[n])}{\hphi(i,[n])+\hphi(j,[n])}\right|
\\ &=\left|\frac{X(i,S)}{X(i,S)+X(j,S)}-q\right|+\left|q-\frac{X(i,[n])}{X(i,[n])+X(j,[n])}\right|
\\ &\le\sqrt{\frac{3q\log(2K/\delta)}{X(i,S)+X(j,S)}}+\sqrt{\frac{3q\log(2K/\delta)}{X(i,[n])+X(j,[n])}}
\\ &\le\sqrt{3q\log(2K/\delta)}\sqrt{\frac{2}{X(i,S)+X(j,S)}+\frac{2}{X(i,[n])+X(j,[n])}}
\end{align*}
where the first inequality is the triangle inequality, the second inequality applies~\eqref{eqn:concProperty2}, and the final inequality holds because $\sqrt{a}+\sqrt{b}\le\sqrt{2a+2b}$ for $a,b\ge0$.

For the denominator of $z(i\succ j,S)$, we can apply~\eqref{eqn:concProperty1.5} to see that
\begin{align*}
\frac{\hphi(i,S)+\hphi(i,[n])}{\hphi(i,S)+\hphi(j,S)+\hphi(i,[n])+\hphi(j,[n])}
&\ge \frac{(\phi(i,S)+\phi(i,[n]))/2}{2(\phi(i,S)+\phi(j,S)+\phi(i,[n])+\phi(j,[n]))}
= q/4.
\end{align*}
Similarly we have $\frac{\hphi(j,S)+\hphi(j,[n])}{\hphi(i,S)+\hphi(j,S)+\hphi(i,[n])+\hphi(j,[n])}\ge (1-q)/4\ge1/8$ (recalling that $q\le 1/2$).

Putting it together, we have shown that $|z(i\succ j,S)|$ is at most
\begin{align*}
&\sqrt{3q\log(2K/\delta)}\sqrt{\frac{2}{X(i,S)+X(j,S)}+\frac{2}{X(i,[n])+X(j,[n])}}\Bigg/\sqrt{\frac q{32}(\frac{1/m}{\hphi(i,S)+\hphi(j,S)}+\frac{1/m}{\hphi(i,[n])+\hphi(j,[n])})}
\end{align*}
which equals $8\sqrt{3\log(2K/\delta)}$, completing the proof that~\eqref{eqn:sampCompCond} holds.

\paragraph{Case 2: $\mathsf{BF}(i, S) \ne \mathsf{BF}(j, S)$.}
We know that 
$
\left|\frac{\phi(i,S)}{\phi(i,S)+\phi(j,S)}-\frac{\phi(i,[n])}{\phi(i,[n])+\phi(j,[n])}\right|\ge\Delta
$
by assumption.
We must show that~\eqref{eqn:sampCompCond} does not hold.

For the numerator of $z(i\succ j,S)$, we derive
\begin{align*}
\left|\frac{\hphi(i,S)}{\hphi(i,S)+\hphi(j,S)}-\frac{\phi(i,S)}{\phi(i,S)+\phi(j,S)}\right|
&\le\max\left\{\frac{1+\Delta/C}{1-\Delta/C}-1,1-\frac{1-\Delta/C}{1+\Delta/C}\right\}\frac{\phi(i,S)}{\phi(i,S)+\phi(j,S)}
\\ &\le\max\left\{\frac{2\Delta/C}{1-\Delta/C},\frac{2\Delta/C}{1+\Delta/C}\right\}
\\ &\le 4\Delta/C
\end{align*}
where the first inequality applies~\eqref{eqn:concProperty1} and the other inequalities are elementary (recalling that $\Delta<1$ and $C\ge 2$).
We can similarly derive
$
\left|\frac{\hphi(i,[n])}{\hphi(i,[n])+\hphi(j,[n])}-\frac{\phi(i,[n])}{\phi(i,[n])+\phi(j,[n])}\right|\le4\Delta/C
$
and hence by the triangle inequality,
$$
\left|\frac{\hphi(i,S)}{\hphi(i,S)+\hphi(j,S)}-\frac{\hphi(i,[n])}{\hphi(i,[n])+\hphi(j,[n])}\right|\ge(1-\frac8C)\Delta.
$$

Meanwhile, the denominator of $z(i\succ j,S)$ is at most
\begin{align*}
\sqrt{\frac{1/m}{\hphi(i,S)+\hphi(j,S)}+\frac{1/m}{\hphi(i,[n])+\hphi(j,[n])}}
&\le \sqrt{\frac{2/m}{\phi(i,S)+\phi(j,S)}+\frac{2/m}{\phi(i,[n])+\phi(j,[n])}}
\\ &\le \sqrt{\frac{2/m}{2\rho}+\frac{2/m}{2\rho}}=\sqrt{\frac{2}{m\rho}}
\end{align*}
where the first inequality applies~\eqref{eqn:concProperty1.5} and the second inequality holds because $\phi(i,S)\ge\phi(i,[n])\ge\rho$ for all $i\in S\subseteq [n]$.

Putting it together, we have shown that $|z(i\succ j,S)|$ is at least
\begin{align*}
(1-\frac8C)\Delta\sqrt{\frac{m\rho}2}
\ge (1-\frac8C)\Delta\sqrt{3C^2\frac{\log(2K/\delta)}{2\Delta^2}}
= (1-\frac8C)C\sqrt{3\frac{\log(2K/\delta)}{2}}
\end{align*}
where the inequality applies the lower bound on the number of samples $m$.
This is greater than $8\sqrt{3\log(2K/\delta)}$ for a sufficiently large constant $C$ (we need $C>8+8\sqrt{2}$), completing the proof that~\eqref{eqn:sampCompCond} does not hold and completing the proof of Theorem~\ref{thm:sampleComplexity}.

\subsection{Proof of Theorem~\ref{thm:idenWithoutOutside}} \label{sec:idenWithoutOutsidePf}

Like in Algorithm~\ref{alg:exactWithOutside}, we note that the symmetry of matrix $E$ is preserved throughout the algorithm, so we can argue based on unordered pairs or items.
First, by the final two bullets of \Cref{prop:noOutsideNew}, the deductions on lines~\eqref{alg2:line4} and~\eqref{alg2:pos_deduct1} of Algorithm~\ref{alg:exactWithoutOutside} are correct.

We now show that the update that occurs on either line~\eqref{alg2:neg_deduct} or~\eqref{alg2:pos_deduct2} is correct, except that line~\eqref{alg2:pos_deduct2} may be (incorrectly) entered in the case where $\minBF(S)$ consists of multiple singleton nests.  We show this inductively as we iterate over the experimental assortments $S\in\mathcal{S}$.
\begin{itemize}
\item First suppose that all items in $\minBF(S)$ are in the same nest (where this nest may or may not have items outside of $S$).  By the induction hypothesis, all preceding steps that could have set 0's are correct.  Therefore, it is not possible for $E[i,j]$ to be (incorrectly) set to 0 for some $i,j\in\minBF(S)$.  This ensures that we (correctly) enter line~\eqref{alg2:pos_deduct2}.
\item Otherwise, if $\minBF(S)$ intersects multiple nests, we know from \Cref{prop:noOutsideNew} that $\minBF(S)$ must consist of multiple nests in their entirety.  There is nothing to prove in the case where all of these nests are singletons, so suppose without loss that $N(i)=N(j)\neq N(k)$ for distinct items $i,j,k\in\minBF(S)$, where we know that $\minBF(S)$ consists of at least two nests. 
We claim that either $E[i,k]=0$ or $E[j,k]=0$, causing line~\eqref{alg2:neg_deduct} to be (correctly) entered.

To see this, note that $\sigma_\ell(i)\neq \sigma_\ell(j)$ for some position $\ell$, and $\sigma_\ell(k)$ must be distinct from at least one of them, where we assume by symmetry that $\sigma_\ell(j)\neq \sigma_\ell(k)$.  Then for $S:=S_{\ell,-\sigma_\ell(j)}$, we have $i,k\in S$ and $j\notin S$, and because $N(i)\nsubseteq S$, this guarantees that we will see different boost factors $\mathsf{BF}(i,S)\neq \mathsf{BF}(k,S)$ by \Cref{prop:noOutsideNew} II., ensuring that $E[i,k]$ would have been set to 0 on line~\eqref{alg2:line4}.
\end{itemize}
We have shown that all updates after completing those on lines~\eqref{alg2:neg_deduct} and~\eqref{alg2:pos_deduct2} are correct, with the exception that $E[i,j]$ may be set to 1 even if $|N(i)|=|N(j)|=1$.  This property is preserved after the updates on line~\eqref{alg2:transitivity}, noting that the $\texttt{OneHopTransitivity}$ operation may propagate more errors where $E[i,j]$ is incorrectly set to 1 when $|N(i)|=|N(j)|=1$.

To complete the proof, we show that after the $\texttt{IdentifyMissingPairs}$ operation on line~\eqref{alg2:uniqueNest}, for all $i,j\in[n]$ with $i\neq j$, we have $E[i,j]=1$ if $N(i)=N(j)$, and have $E[i,j]\neq 1$ if $N(i)\neq N(j)$ and either $|N(i)|>1$ or $|N(j)|>1$.
\begin{itemize}
\item We know that if $N(i)=N(j)$ with $|N(i)|\ge 3$, then $E[i,j]=1$ by the end of $\texttt{OneHopTransitivity}$.  This follows from the proof of Theorem~\ref{thm:mainResult}, because we have shown that we would always enter line~\eqref{alg2:pos_deduct2} if $\minBF(S)\subsetneq N$ for a single nest $N\in\mathcal{N}$, and hence whenever we would have set $E[i,j]\gets 1$ on line~\eqref{alg1:line7} in Algorithm~\ref{alg:exactWithOutside}, we would set $E[i,j]\gets 1$ on either line~\eqref{alg2:pos_deduct1} or line~\eqref{alg2:pos_deduct2} in Algorithm~\ref{alg:exactWithoutOutside}.
\item If $N(i)=N(j)=\{i,j\}$, then $E[i,j]$ cannot get incorrectly set to 0 before line~\eqref{alg2:uniqueNest}, and if $E[i,j]=\texttt{null}$ then line~\eqref{alg2:uniqueNest} would set it to 1, because it is also guaranteed that neither $E[i,k]$ nor $E[j,k]$ can get (incorrectly) set to 1 for some $k\notin\{i,j\}$ (we are not worried about incorrect 1's for singleton nests, because $|N(i)|=2$).
\item Finally, we must show that $E[i,j]$ cannot get set to 1 on line~\eqref{alg2:uniqueNest} if $N(i)\neq N(j)$ and either $|N(i)|>1$ or $|N(j)|>1$.
If $|N(i)|\ge 3$ then $E[i,k]$ has already been set to 1 for all $k\in N(i)$ by the end of $\texttt{OneHopTransitivity}$, as we argued in the first bullet, and hence it is not possible for $(i,j)\in\texttt{IdentifyMissingPairs}$.  This is also the case if $|N(j)|\ge 3$.  Therefore we can assume without loss that either $|N(i)|=2$ or $|N(j)|=2$ (or both); by symmetry we assume $N(i)=\{i,k\}$ for some $k\notin\{i,j\}$.

We know $\sigma_\ell(j)\neq\sigma_\ell(k)$ for some position $\ell$, and either $\sigma_\ell(i)\neq\sigma_\ell(k)$ or $\sigma_\ell(i)\neq\sigma_\ell(j)$.
\begin{itemize}
\item In the first case, for $S:=S_{\ell,-\sigma_\ell(k)}$, we have $i,j\in S$ and $k\notin S$.  Because $k\in N(i)$, we have $N(i)\nsubseteq S$, which guarantees that we will see different boost factors $\mathsf{BF}(i,S)\neq\mathsf{BF}(j,S)$ by \Cref{prop:noOutsideNew} II., and hence 
we will have set $E[i,j]=0$ already on line~\eqref{alg2:line4}.
\item In the second case where $\sigma_\ell(i)\neq\sigma_\ell(j)$, for $S:=S_{\ell,-\sigma_\ell(j)}$, we have $i,k\in S$ and $j\notin S$.  Because $N(i)=\{i,k\}\subseteq S$, it is guaranteed that $i,k\in\minBF(S)$, which means that either $E[i,j]$ would get set to 0 via line~\eqref{alg2:neg_deduct} (where $j\notin S$) or $E[i,k]$ would get set to 1 via line~\eqref{alg2:pos_deduct2} (where $i,k\in\minBF(S)$).
Either case guarantees that we cannot set $E[i,j]$ to 1 on line~\eqref{alg2:uniqueNest}.
\end{itemize}
\end{itemize}
This completes the proof of Theorem~\ref{thm:idenWithoutOutside}.
We note that the final matrix $E$ may not be a collection of disjoint cliques, in that there could be violations of transitivity where $E[i,j]=E[j,k]=1$ but $E[i,k]=0$, but this can only occur if \(|N(i)| = |N(j)| = |N(k)| = 1\).  In such cases, we can arbitrarily divide $\{i,j,k\}$ into nests, and the statement of Theorem~\ref{thm:idenWithoutOutside} would be satisfied.

\section{Recovering Remaining Parameters Given the Nest Partition}
\label{sec:param_recovery}

After identifying the nest partition \(\mathcal{N}\), we describe how to recover the remaining Nested Logit parameters. 
Throughout this section, the choice probabilities \(\phi(i,S)\) are observed exactly for all assortments \(S\in\mathcal{S}\cup\{[n]\}\) and all \(i\in S\). 
Consequently, all derived nest shares \(P(N\mid S)\) (and the log-ratio quantities defined below) are known exactly.

Recall the model in \eqref{eqn:nestProb}: each item \(i\in[n]\) has a preference weight \(v_i>0\), and each nest \(N\in\mathcal{N}\) has a dissimilarity parameter \(\lambda_N\in[0,1]\). 

We first identify \textbf{within-nest preference-weight ratios} from the control assortment. For any nest $N$ with $|N| \ge 2$, for \(i,j\in N\), equation \eqref{eqn:nestProb} implies
\[
\frac{\phi(i,[n])}{\phi(j,[n])}=\frac{v_i}{v_j}.
\]
Let \(i_N\) be a designated base item, and define normalized weights \(w_i:=v_i/v_{i_N}\) for all \(i\in N\) (so \(w_{i_N}=1\)). 
This reduces the item-level unknowns within each nest to a single scale parameter \(c_N:=v_{i_N}\), since \(v_k = c_N\,w_k\) for all \(k\in N\). 
If $|N|=1$, we set $\lambda_N=1$ and are left with a single unknown $c_N:=v_{i_N}(=v_N)$.

After recovering \(\{w_i\}_{i\in[n]}\) from \(\phi(\cdot,[n])\), the remaining unknowns are at most \(2|\mathcal{N}|\) nest-level parameters \(\{(c_N,\lambda_N):N\in\mathcal{N}\}\).

We will recover these parameters by solving a linear system, which first requires the following non-degeneracy assumption and structural lemma.

\begin{assumption}[Non-degeneracy in General Position]
\label{ass:two_by_two_nonsingular}
Fix two distinct nests \(N,N'\in\mathcal{N}\). Consider any two experimental
assortments \(S,S'\in\mathcal{S}\) satisfying the following three requirements:
\begin{enumerate}[(i)]
    \item all relevant intersections are nonempty:
    \[
    N\cap S,\quad N\cap S',\quad N'\cap S,\quad N'\cap S' \neq \emptyset;
    \]

    \item for each nest, at least one of the two assortments intersects it
    partially:
    \[
    \big(N\cap S \subsetneq N \ \text{or}\ N\cap S' \subsetneq N\big),
    \qquad
    \big(N'\cap S \subsetneq N' \ \text{or}\ N'\cap S' \subsetneq N'\big);
    \]

    \item the induced intersection pairs are distinct:
    \[
    (N\cap S,\;N'\cap S)\neq (N\cap S',\;N'\cap S').
    \]
    that is, at least one of the two intersections differs.
\end{enumerate}
Then the coefficient matrix built from the log intersection fractions is
nonsingular:
\[
\det
\begin{pmatrix}
\log\!\left(\dfrac{\sum_{i\in N\cap S} v_i}{\sum_{i\in N} v_i}\right) &
\log\!\left(\dfrac{\sum_{i\in N'\cap S} v_i}{\sum_{i\in N'} v_i}\right)\\[6pt]
\log\!\left(\dfrac{\sum_{i\in N\cap S'} v_i}{\sum_{i\in N} v_i}\right) &
\log\!\left(\dfrac{\sum_{i\in N'\cap S'} v_i}{\sum_{i\in N'} v_i}\right)
\end{pmatrix}
\ne 0.
\]
\end{assumption}

\begin{lemma}
\label{lem:two_S_distinct_pairs}
Fix two distinct nests \(N,N'\in\mathcal{N}\) with \(|N|\ge 2\) and
\(|N'|\ge 2\). Under the encoding-based experiment design in
Section~\ref{sec:ourExpDesign}, there exist two proper experimental
assortments \(S,S'\in\mathcal{S}\), with \(S\subsetneq [n]\) and
\(S'\subsetneq [n]\), such that the following conditions hold.

First, both assortments intersect both nests:
\[
N\cap S\neq\emptyset,\quad N\cap S'\neq\emptyset,\quad
N'\cap S\neq\emptyset,\quad N'\cap S'\neq\emptyset.
\]

Second, for each nest, at least one of the two assortments intersects it
partially:
\[
\big(N\cap S \subsetneq N \ \text{or}\ N\cap S' \subsetneq N\big),
\qquad
\big(N'\cap S \subsetneq N' \ \text{or}\ N'\cap S' \subsetneq N'\big).
\]

Third, the two assortments induce distinct intersection pairs across the two
nests:
\[
\big(N\cap S,\; N'\cap S\big)\neq \big(N\cap S',\; N'\cap S'\big),
\]
equivalently,
\[
N\cap S\neq N\cap S'
\quad\text{or}\quad
N'\cap S\neq N'\cap S'.
\]

Finally, the second induced pair is not the pair of full nests:
\[
\big(N\cap S',\; N'\cap S'\big)\neq (N,N').
\]
\end{lemma}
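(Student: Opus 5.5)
The plan is to build $S$ and $S'$ directly from single digit positions that separate two items of the same nest, using that $S_{\ell,-d}$ contains exactly the items whose $\ell$-th digit differs from $d$. Any $S\in\mathcal{S}$ that meets a nest only partially excludes some item, so it is automatically a proper assortment $S\subsetneq[n]$; properness will therefore come for free.

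\textbf{Step 1 (one assortment per nest).} I first find an $S\in\mathcal{S}$ with $\emptyset\neq N\cap S\subsetneq N$ and $N'\cap S\neq\emptyset$. Since $|N|\ge2$, pick distinct $i,i'\in N$ and a position $\ell$ with $\sigma_\ell(i)\neq\sigma_\ell(i')$.
\begin{itemize}
\item If some $j\in N'$ has $\sigma_\ell(j)\neq\sigma_\ell(i')$, take $S:=S_{\ell,-\sigma_\ell(i')}$. It contains $i$ and $j$ and excludes $i'$.
\item Otherwise every $j\in N'$ has $\sigma_\ell(j)=\sigma_\ell(i')\neq\sigma_\ell(i)$. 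Take $S:=S_{\ell,-\sigma_\ell(i)}$. It contains $i'$ and all of $N'$, and excludes $i$.
\end{itemize}
In both cases $S$ meets both nests and is partial on $N$. By symmetry there is also an $S'$ meeting both nests and partial on $N'$.

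\textbf{Step 2 (checking the conditions).} With these choices, the first two conditions of the lemma hold. The last condition also holds, because $N'\cap S'\neq N'$. It remains to handle the third condition, which fails only if $N\cap S=N\cap S'$ and $N'\cap S=N'\cap S'$.

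\textbf{Step 3 (repairing the distinctness condition).} This is the step I expect to be the main obstacle. If the two intersection pairs coincide, then $S$ is itself partial on both nests, and I replace $S'$ by a fresh assortment $S''$ chosen relative to $S=S_{\ell,-d}$.
\begin{itemize}
\item Because $N\cap S$ is partial, some $a\in N$ has $\sigma_\ell(a)=d$, while any $i\in N\cap S$ has $\sigma_\ell(i)=e\neq d$.
\item Set $S'':=S_{\ell,-e}$. It contains $a$ and excludes $i$, so $N\cap S''\neq\emptyset$, $N\cap S''\subsetneq N$, and $N\cap S''\neq N\cap S$.
\item Since $N'\cap S$ is partial, some $b\in N'$ has $\sigma_\ell(b)=d\neq e$. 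Then $b\in S''$, so $N'\cap S''\neq\emptyset$.
\end{itemize}
The pair $(S,S'')$ now satisfies all four conditions. Each of $S$ and $S''$ is partial on $N$, which gives the second condition and also shows $N\cap S''\neq N$, hence $(N\cap S'',N'\cap S'')\neq(N,N')$. The first three conditions were verified in the bullets above. This completes the proof.
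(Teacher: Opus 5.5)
Your proof is correct, but it is organized differently from the paper's. The paper fixes $i,i'\in N$ and $j,j'\in N'$ up front and splits on whether one digit position separates both pairs. If one does, it takes two assortments at that position, with subcases on whether $\sigma_\ell(i)=\sigma_\ell(j)$. If none does, it takes one assortment at a position separating $i,i'$, which then keeps $j,j'$ together, and one at a position separating $j,j'$. Distinctness then holds because the second assortment separates a pair that the first does not.

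You avoid the pair-versus-pair case analysis in two stages. First, for each nest you build an assortment that cuts it and still meets the other nest; your two-case choice in Step 1 is what guarantees meeting the other nest. This settles nonemptiness, the partial-intersection condition, and $(N\cap S',N'\cap S')\neq(N,N')$ at once. Second, you isolate the only possible failure, a coincidence of intersection pairs. You note that it forces $S=S_{\ell,-d}$ to cut both nests, and you repair it with $S_{\ell,-e}$ at the same position.

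What your route buys is uniformity in the base $b$, with every nonemptiness check explicit. By contrast, some side remarks in the paper's proof, such as that each Case 1 assortment cuts both nests, hold only for $b=2$, although they are not needed. The paper's route has its own mild advantage: for $b=2$ in Case 1 it produces a pair of assortments each cutting both nests, which is slightly more than the lemma asks.

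Three small fixes:
\begin{itemize}
\item In Step 3, write \emph{fix some $i\in N\cap S$ and set $e:=\sigma_\ell(i)\neq d$}. For $b\ge 3$ the items of $N\cap S$ need not share a digit at position $\ell$, and your argument only uses one such item.
\item Rename the item $b\in N'$, which clashes with the base.
\item In the final check, say explicitly that the $N'$ half of the second condition comes from $S$ cutting $N'$, not from $S''$.
\end{itemize}
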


\begin{proof}
Fix two nests \(N:=N(i)\) and \(N':=N(j)\) with \(|N|\ge 2\) and \(|N'|\ge 2\). Choose two distinct
items \(i,i'\in N\) and two distinct items \(j,j'\in N'\).
We construct two experimental assortments \(S,S'\in\mathcal{S}\) such that
\(\big(N\cap S,\,N'\cap S\big)\neq \big(N\cap S',\,N'\cap S'\big)\).

\begin{itemize}
    \item \textbf{Case 1:} There exists \(\ell\) such that \(\sigma_{\ell}(i)\neq\sigma_{\ell}(i')\) and
    \(\sigma_{\ell}(j)\neq\sigma_{\ell}(j')\).
    \begin{itemize}
        \item \textbf{Subcase 1(a):} \(\sigma_{\ell}(i)=\sigma_{\ell}(j)\).

        Take \(S:=S_{\ell,-\sigma_\ell(i)}\). Then \(i\notin S\) and \(i'\in S\), and also
        \(j\notin S\) and \(j'\in S\). Hence \(i'\in N\cap S\) but \(i\notin N\cap S\), and
        \(j'\in N'\cap S\) but \(j\notin N'\cap S\).

        Take \(S':=S_{\ell,-\sigma_\ell(i')}\). Then \(i\in S'\) and \(i'\notin S'\). 
        Since \(\sigma_\ell(j)=\sigma_\ell(i)\neq\sigma_\ell(i')\), we have \(j\in S'\).
        Therefore \(\big(N\cap S,\,N'\cap S\big)\neq \big(N\cap S',\,N'\cap S'\big)\).
        Moreover, $N\cap S, N\cap S' \ne N$ and $N'\cap S, N'\cap S' \ne N'$.

        \item \textbf{Subcase 1(b):} \(\sigma_{\ell}(i)\neq\sigma_{\ell}(j)\).

        Take \(S:=S_{\ell,-\sigma_\ell(i)}\). Then \(i\notin S\) and \(i'\in S\), while
        \(j\in S\) since \(\sigma_\ell(j)\neq\sigma_\ell(i)\).

        Take \(S':=S_{\ell,-\sigma_\ell(j)}\). Then \(j\notin S'\) and \(j'\in S'\), while
        \(i\in S'\) since \(\sigma_\ell(i)\neq\sigma_\ell(j)\).

        In particular, \(N\cap S\) contains \(i'\) but not \(i\), whereas \(N\cap S'\) contains
        \(i\) but not \(i'\); hence \(\big(N\cap S,\,N'\cap S\big)\neq \big(N\cap S',\,N'\cap S'\big) \ne (N, N')\).
        Moreover, $N\cap S, N\cap S' \ne N$ and $N'\cap S, N'\cap S' \ne N'$.
    \end{itemize}

    \item \textbf{Case 2:} Otherwise, for every position \(\ell_i\) with \(\sigma_{\ell_i}(i)\neq\sigma_{\ell_i}(i')\),
    we have \(\sigma_{\ell_i}(j)=\sigma_{\ell_i}(j')\); and for every position \(\ell_j\) with
    \(\sigma_{\ell_j}(j)\neq\sigma_{\ell_j}(j')\), we have \(\sigma_{\ell_j}(i)=\sigma_{\ell_j}(i')\).

    \begin{itemize}
        \item Pick any \(\ell_i\) such that \(\sigma_{\ell_i}(i)\neq\sigma_{\ell_i}(i')\) and set
        \(S:=S_{\ell_i,-\sigma_{\ell_i}(i)}\). Then \(i\notin S\) and \(i'\in S\).
        Moreover, \(\sigma_{\ell_i}(j)=\sigma_{\ell_i}(j')\) implies that \(j\) and \(j'\) are either
        both included in \(S\) or both excluded from \(S\). In particular,
        \(N'\cap S\) is well-defined, and at least one of
        \(S_{\ell_i,-\sigma_{\ell_i}(i)}\) or \(S_{\ell_i,-\sigma_{\ell_i}(i')}\) yields \(N'\cap S\neq\emptyset\);
        without loss of generality, take the above choice so that \(N'\cap S\neq\emptyset\) and $N \cap S \ne  \emptyset, N$.

        \item Pick any \(\ell_j\) such that \(\sigma_{\ell_j}(j)\neq\sigma_{\ell_j}(j')\) and set
        \(S':=S_{\ell_j,-\sigma_{\ell_j}(j)}\). Then \(j\notin S'\) and \(j'\in S'\).
        Since \(\sigma_{\ell_j}(i)=\sigma_{\ell_j}(i')\), the items \(i\) and \(i'\) are either both
        included or both excluded from \(S'\); again, choose \(\ell_j\) (or swap \(j\) and \(j'\)) so that
        \(N\cap S'\neq\emptyset\) and $N' \cap S \ne  \emptyset, N'$.

        \item With these choices, \(N\cap S\) is a proper subset of \(N\) that contains \(i'\) but not \(i\),
        while \(N'\cap S'\) is a proper subset of \(N'\) that contains \(j'\) but not \(j\). Hence, at least one
        coordinate differs between \(\big(N\cap S,\,N'\cap S\big)\) and \(\big(N\cap S',\,N'\cap S'\big)\),
        so the induced pairs are distinct.
        Also, each of $N$ intersects $S$ or $S'$ partially once. 
    \end{itemize}
\end{itemize}

In all cases, the construction yields $S,S'\in\mathcal{S}$ such that each of $N$ and $N'$
is partially intersected by at least one of the two assortments, and moreover
$\big(N\cap S,\;N'\cap S\big)\neq \big(N\cap S',\;N'\cap S'\big)$,
as required.

\end{proof}

\textbf{Nest Dissimilarity Parameters.}
We now proceed with recovering the dissimilarity parameters, noting that
this argument applies with or without an outside option. 
With an outside option, we anchor on it and normalize $c_{N(0)}=v_0=1$; otherwise, we choose an arbitrary anchor nest $\bar N\in\mathcal N$ and set $c_{\bar N}=1$ if $\lambda_N >0$ and $v_N = 1$ if $\lambda_N = 0$.

For any assortment \(S\in\mathcal{S}\cup\{[n]\}\) and any nest \(N\in\mathcal{N}\), the nest share $P(N\mid S):=\sum_{i\in N\cap S}\phi(i,S)$ is observable.
Under normalization of  \(v_i=c_Nw_i\), we have
\begin{equation} \label{eq:nest_ratio}
    \frac{P(\bar N\mid S)}{P(N\mid S)}
= \begin{cases}
{\Big(\sum_{i\in \bar N\cap S} w_i\Big)^{\lambda_{\bar N}}} /
{\Big(c_N\sum_{i\in N\cap S} w_i\Big)^{\lambda_N}} \quad & \text{if $\lambda_N>0, \lambda_{\bar N} > 0$}
\\
{\Big(\sum_{i\in \bar N\cap S} w_i\Big)^{\lambda_{\bar N}}} /
{v_N} \quad & \text{if $\lambda_N = 0, \lambda_{\bar N} > 0$}
\\
{1} / 
{\Big(c_N\sum_{i\in N\cap S} w_i\Big)^{\lambda_N}} \quad & \text{if $\lambda_N > 0, \lambda_{\bar N} = 0$}
\\
{1}  / {v_N} \quad & \text{if $\lambda_N = 0, \lambda_{\bar N} = 0$}
\end{cases}
\end{equation}

Taking logs of \Cref{eq:nest_ratio} yields, for each \(T\in\{S,S'\}\),
\begin{align*}\label{eq:log_nest_ratio}
\log \frac{P(\bar N\mid T)}{P(N\mid T)}
= \lambda_{\bar N} A_T - \lambda_N B_T - s_N ,
\end{align*}
where
$$ y_T
:=
\frac{\log\!P(\bar N\mid T)}{P(N\mid T)},
\
A_T
:=
\log\!\sum_{i\in \bar N\cap T} w_i,
\
B_T
:=
\log\!\sum_{i\in  N\cap T} w_i,
\ 
s_N =
\begin{cases}
    \lambda_N \log c_N \quad &\text{if $\lambda_N > 0$}
    \\
    \log v_N \quad &\text{otherwise}
\end{cases}
$$
We construct an auxiliary variable $s_N$, such that $s_N =  \lambda_N \log c_N $ if $\lambda_N>0$ and $s_N = \log v_N$ otherwise. 
Under \Cref{lem:two_S_distinct_pairs} and \Cref{ass:two_by_two_nonsingular}, there exist two assortments $S$ and $S'$ such that the $3\times 3$ linear system below is nonsingular, and hence has a unique solution of $\lambda_{\bar N}, \lambda_N, s_N$. 

\[
\begin{pmatrix}
A_{[n]} & -\,B_{[n]} & -1\\[3pt]
A_S & -\,B_S & -1\\[3pt]
A_{S'} & -\,B_{S'} & -1
\end{pmatrix}
\begin{pmatrix}
\lambda_{\bar N}\\[2pt]
\lambda_N\\[2pt]
s_N
\end{pmatrix}
=
\begin{pmatrix}
y_{[n]}\\[2pt]
y_S\\[2pt]
y_{S'}
\end{pmatrix}
\quad
\]

We solve for $\lambda_{\bar N}, \lambda_N, s_N$. 
If $\lambda_N = 0$, $v_N = \exp s_N$.
If $\lambda_N > 0$, $c_N = \exp(\frac{s_N}{\lambda_N})$.

The linear system is constructed for each pair $(\bar{N}, N)$, so the anchor parameter $\lambda_{\bar{N}}$ is solved once per non-anchor nest $N \in \mathcal{N} \setminus \{\bar{N}\}$. 
Under the true model, all such systems yield the same $\lambda_{\bar{N}}$, since the equations are derived from exact choice probabilities of a single Nested Logit instance. 

\begin{remark}
The following computation verifies that \Cref{ass:two_by_two_nonsingular} (stated in terms of the original weights $v_i$) is equivalent to the nonsingularity of the $3\times 3$ coefficient matrix used above (stated in terms of the normalized weights $w_i$).
Under any normalization with $v_i = c_N w_i$, we have
\begin{align*}
 & \det \begin{pmatrix}
\log\!\left(\sum_{i\in N} w_i\right) & -\log\!\left(\sum_{i\in N'} w_i \right)  & -1\\[4pt]
\log\!\left(\sum_{i\in N\cap S} w_i\right) & -\log\!\left(\sum_{i\in N'\cap S} w_i\right)  & -1 \\[4pt]
\log\!\left(\sum_{i\in N\cap S'} w_i\right) & -\log\!\left(\sum_{i\in N'\cap S'} w_i\right)  & -1
\end{pmatrix}
\\
=& \det
\begin{pmatrix}
\log\!\left(\sum_{i\in N} v_i\right) - \log c_N & -\log\!\left(\sum_{i\in N'} v_i \right) + \log c_{N'} & -1 \\
\log\!\left(\sum_{i\in N\cap S} v_i\right) - \log c_N & -\log\!\left(\sum_{i\in N'\cap S} v_i\right)  + \log c_{N'} & -1 \\
\log\!\left(\sum_{i\in N\cap S'} v_i\right) - \log c_N & -\log\!\left(\sum_{i\in N'\cap S'} v_i\right) + \log c_{N'}  & -1
\end{pmatrix}
\\
=& 
\det
\begin{pmatrix}
\log\!\left(\sum_{i\in N} v_i\right) - \log c_N & -\log\!\left(\sum_{i\in N'} v_i \right) + \log c_{N'} & -1 \\
\log\!\left(\dfrac{\sum_{i\in N\cap S} v_i}{\sum_{i\in N} v_i}\right)  & \log\!\left(\dfrac{\sum_{i\in N'} v_i}{\sum_{i\in N'\cap S} v_i}\right)  & 0 \\
\log\!\left(\dfrac{\sum_{i\in N\cap S'} v_i}{\sum_{i\in N} v_i}\right) & \log\!\left(\dfrac{\sum_{i\in N'} v_i}{\sum_{i\in N'\cap S'} v_i}\right)   & 0
\end{pmatrix}
\\
=& 
\det
\begin{pmatrix}
\log\!\left(\dfrac{\sum_{i\in N\cap S} v_i}{\sum_{i\in N} v_i}\right) &
\log\!\left(\dfrac{\sum_{i\in N'\cap S} v_i}{\sum_{i\in N'} v_i}\right)\\[6pt]
\log\!\left(\dfrac{\sum_{i\in N\cap S'} v_i}{\sum_{i\in N} v_i}\right) &
\log\!\left(\dfrac{\sum_{i\in N'\cap S'} v_i}{\sum_{i\in N'} v_i}\right)
\end{pmatrix}
\end{align*}
\end{remark}

We emphasize that, in nested logit estimation, jointly maximizing the full-information likelihood (FIML) empirically outperforms the two-stage approach that first estimates preference weights and then estimates nest dissimilarity. 
Our contribution is not computational: we show that the full set of parameters is identifiable with only $O(\log n)$ experiments, which is new to our knowledge.

\section{$d$-level Nested Logit Model} \label{sec:dlevelLogit}

The two-level nested logit model introduced in Section~\ref{sec:theory}
is a special case of a more general \emph{$d$-level nested logit model}.
In a $d$-level nested logit model extension, the set of items is organized as the leaves of a rooted tree
with $d$ layers of \emph{nonoverlapping nests} (i.e., each layer partitions the items into disjoint nests).

Let $[n]$ be the set of items and let $S\subseteq[n]$ be the offered assortment.
Items are the leaves of a rooted tree $\mathcal T$ of depth $d$, and internal nodes are nests.
For any internal node (nest) $N$, let $\mathrm{Ch}(N)$ denote its set of children and let $\Leaf(N)$ denote the set of leaf items in the subtree rooted at $N$ (i.e., the descendant leaves of $N$). In particular, if $N$ is a leaf item, then $\Leaf(N)=\{N\}$.
For each item $i\in[n]$, let $r=a_0(i)\to a_1(i)\to\cdots\to a_d(i)=i$ be the unique root-to-leaf path from the root $r$ to leaf $i$.

In this section, we work with a tree structure rather than a partition as in \Cref{sec:theory} by recursively building the tree bottom up. 
Using our experimental assortments from Section~\ref{sec:ourExpDesign} and the exact market shares $\phi(\cdot\mid S)$, we explain how to do this for the 3-level Nested Logit model (treating the base model as having two levels of nesting) and explain why the same ideas extend to deeper trees.
The results apply both with and without an outside option.
Each item $i$ has a preference weight $v_i>0$, and each internal node $N$ has a nest dissimilarity parameter $\lambda_N\in(0,1]$ (rather than $\lambda_N\in[0,1]$ as in \Cref{sec:theory}).
For $\lambda_N = 0$, it is generally not possible to recover higher-level structure.

For a given assortment $S$, define the induced weight of a leaf $i$ as
\[
v_i(S)=
\begin{cases}
v_i, & i\in S,\\
0, & i\notin S,
\end{cases}
\qquad \forall i\in[n].
\]
For any internal node (nest) $N$, define its induced weight under $S$ recursively by
\[
v_{N}(S)
:=\left(\sum_{K\in \mathrm{Ch}(N)} v_K(S)\right)^{\lambda_N},
\qquad \forall \text{ internal nodes }N.
\]
For any child $K\in \mathrm{Ch}(N)$ (where $K$ may be an item/leaf or an internal node),
define the conditional probability of choosing $K$ given that the choice process is at $N$ as
\[
P(K\mid N,S)
:=
\frac{v_K(S)}{\sum_{L\in \mathrm{Ch}(N)} v_L(S)},
\quad \forall \text{ internal nodes }N,\ \forall K\in \mathrm{Ch}(N).
\]
Finally, the probability of choosing item $i\in S$ is the product of conditional probabilities
along the unique path from the root to $i$:
\[
\phi(i,S)
=
\prod_{h=0}^{d-1}P\!\left(a_{h+1}(i)\mid a_h(i),S\right),
\qquad \forall i\in S.
\]

We state a ``general position'' assumption, analogous to Assumption~\ref{ass:genPos} for the \(d\)-level tree, which rules out measure-zero degeneracies in which two distinct nests exhibit identical relative changes in their nest shares across an experimental assortment. 
\begin{assumption}[General Position D-level Tree]
\label{ass:genPos_dlevel}
For all $S\in\mathcal{S}$ and any two distinct nests $N\neq N'$ such that
$N\cap S\neq\emptyset$ and $N'\cap S\neq\emptyset$, and such that
at least one of the inclusions $N\cap S\subsetneq N$ or $N'\cap S\subsetneq N'$
holds, we have
$$\frac{P(N\mid S)}{P(N\mid [n])} \frac{\sum_{K\in \mathrm{Ch}(N)} v_K({[n]})}{\sum_{K\in \mathrm{Ch}(N)} v_K(S)}\neq\frac{P(N'\mid S)}{P(N'\mid [n])} \frac{\sum_{K\in \mathrm{Ch}(N')} v_K({[n]})}{\sum_{K\in \mathrm{Ch}(N')} v_K(S)}.$$
\end{assumption}
\Cref{ass:genPos_dlevel} is a generic condition.
Generally $P(N|S)$ is determined by the recursive aggregation up to the root, so this quantity is a non-trivially nonlinear function of all parameters in the tree.
Hence, the equality happens with measure zero. 

The key algebraic tool is a \textbf{sibling-ratio identity}, which holds without additional assumptions.
Assumption~\ref{ass:genPos_dlevel} will only be used later to distinguish different candidates when matching siblings across assortments.
Fix two sibling nodes \(N\) and \(N'\) in the augmented nesting tree, and let \(N^{\mathrm{par}}\) denote their common parent.
Recall that
\[
P(N\mid S)=\sum_{i\in \Leaf(N)} \phi(i,S).
\]
This quantity is observable. 
\begin{definition}[Sibling-ratio Identity]
For any assortment \(S\) such that \(N\cap S\neq\emptyset\) and \(N'\cap S\neq\emptyset\), if $N$ and $N'$ share the nest parent nest $ N^{\mathrm{par}}$, we have
\begin{align}\label{eq:sibling_ratio}
\frac{P(N\mid N^{\mathrm{par}},S)}{P(N'\mid N^{\mathrm{par}},S)}
&=
\frac{P(N\mid N^{\mathrm{par}})P(N^{\mathrm{par}}\mid S)}{P(N'\mid N^{\mathrm{par}})P(N^{\mathrm{par}}\mid S)}
\\
&=
\frac{P(N\mid N^{\mathrm{par}})}{P(N'\mid N^{\mathrm{par}})}
\\
&=
\frac{v_N(S)}{v_{N'}(S)}.
\end{align}
\end{definition}

\paragraph{Generic reduction (informal roadmap)}\label{roadmap:local_identification}
Suppose \Cref{ass:genPos_dlevel} holds and consider the experiment design $\cS$ in \Cref{sec:ourExpDesign}.
None of the inductive arguments depend on the outside option, hence the procedure below applies to both with and without the outside option. 
We treat each item as a leaf (terminal) node in the augmented nesting tree.
Suppose the substructure rooted at $N$ is identified (i.e., the partition of the subtree of $N$ into child components is known).
Then the following reductions apply.

\begin{enumerate}[(i)]
\item (\emph{One-scale representation of item weights under $N$.})
There exist known relative weights $\{w_i\}_{i\in \Leaf(N)}$ and an unknown scalar $c_N>0$ such that
\[
v_i = c_N\, w_i,\qquad \forall i\in \Leaf(N).
\]

\item (\emph{Known dissimilarity parameters for already identified child nests.})
For every (non-leaf) child nest $N^{\mathrm{ch}}$ of $N$ whose internal structure has been identified, the associated dissimilarity parameter $\lambda_{N^{\mathrm{ch}}}$ is known.

\item (\emph{Local identification via sibling ratios.})
Let $N^{\mathrm{par}}$ be the parent of $N$, and let $N'$ be any other child of $N^{\mathrm{par}}$ (so $N$ and $N'$ are siblings under $N^{\mathrm{par}}$).

The observable sibling-ratio statistic \Cref{eq:sibling_ratio} serves two purposes. 
First, it allows us to determine whether two candidate nodes $N$ and $N'$ are siblings (i.e., share the same parent $N^{\mathrm{par}}$). 
Second, once $N$ and $N'$ are confirmed to be siblings, the resulting equations identify the parameters needed to obtain (i)--(ii) for the parent node $N^{\mathrm{par}}$ (in particular, the scale $c_{N^{\mathrm{par}}}$ and the relevant dissimilarity parameters of $N^{\mathrm{par}}$'s children).

\end{enumerate}

We establish parts~(i)--(iii) rigorously for the leaf level (\Cref{sec:leafNode}) and the second-lowest level (\Cref{sec:second_lowest}).
For higher levels (\Cref{sec:higher_level}), we only outline the recursive argument, since the corresponding derivations are purely algebraic, substantially more notationally involved, and do not introduce additional conceptual insights.

\subsection{Item-level Nest Partitions}\label{sec:leafNode}
Let $\Npar(i)$ denote the parent node of item $i$.
In this section, we explain how to determine whether $\Npar(i) = \Npar(j)$ for items $i, j \in[n]$.

\begin{proposition}\label{prop:leaf_bf}
Suppose \Cref{ass:genPos_dlevel} holds, and take any experimental assortment
$S\in\cS$. For all $i, j\in S$:
\begin{enumerate}[I.]
\item If $j \in \Npar(i)$, then $\BF(i,S)=\BF(j,S)$.
\item If $j \notin \Npar(i)$ and $k \in \Npar(i)$, $k \notin S$, then $\BF(i,S)\ne \BF(j,S)$.
\end{enumerate}
\end{proposition}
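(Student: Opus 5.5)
The plan is to mimic the two-level computation of $\mathsf{BF}$ in \Cref{sec:boostFactor}, factoring the boost factor of a leaf through its parent nest. For $i\in S$, write $\phi(i,S)=P(i\mid \Npar(i),S)\cdot P(\Npar(i)\mid S)$, where $P(\Npar(i)\mid S)$ is the probability of reaching node $\Npar(i)$ along the root path (a product of conditional probabilities at ancestors). Since $P(i\mid \Npar(i),S)=v_i/\sum_{K\in\mathrm{Ch}(\Npar(i))}v_K(S)$, we get
\[
\BF(i,S)=\frac{P(\Npar(i)\mid S)}{P(\Npar(i)\mid [n])}\cdot\frac{\sum_{K\in\mathrm{Ch}(\Npar(i))}v_K([n])}{\sum_{K\in\mathrm{Ch}(\Npar(i))}v_K(S)}.
\]
Thus $\BF(i,S)=G(\Npar(i),S)$ for a quantity $G(N,S)$ depending only on the node $N$, and $G$ is exactly the expression compared in \Cref{ass:genPos_dlevel}. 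This factorization is the key step. It is routine but must be stated carefully because leaf children contribute $v_K(S)=v_K\bI(K\in S)$.

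\textbf{Part I.} If $j\in\Npar(i)$, I read this as $j\in\Leaf(\Npar(i))$ with $j$ a direct child, so $\Npar(j)=\Npar(i)$. Then the item weight $v_j$ cancels in the same way as $v_i$, and the identity above gives $\BF(j,S)=G(\Npar(i),S)=\BF(i,S)$ immediately.

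\textbf{Part II.} Suppose $j\notin \Npar(i)$, so $\Npar(j)=N'\neq N:=\Npar(i)$. Then $\BF(i,S)=G(N,S)$ and $\BF(j,S)=G(N',S)$, and we check the hypotheses of \Cref{ass:genPos_dlevel} for the pair $N,N'$:
\begin{itemize}
\item $N\cap S\ni i$ and $N'\cap S\ni j$, so both intersections are nonempty;
\item $k\in N\setminus S$ gives $N\cap S\subsetneq N$.
\end{itemize}
The assumption therefore yields $G(N,S)\neq G(N',S)$, i.e.\ $\BF(i,S)\neq\BF(j,S)$. Here $N\cap S$ means $\Leaf(N)\cap S$, and $N,N'$ are distinct internal nodes, as the assumption requires.

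\textbf{Main obstacle.} The only real subtlety is whether $N$ and $N'$ could be nested, e.g.\ one an ancestor of the other. That cannot happen here, since both are parents of leaves at the bottom layer of a depth-$d$ tree and hence lie at the same depth. Even if they were nested, \Cref{ass:genPos_dlevel} is stated for any two distinct nests, so no extra case analysis is needed. So the whole proof reduces to the factorization identity, and I expect it to be short.
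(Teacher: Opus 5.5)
Your proposal is correct and follows essentially the same route as the paper. Both arguments factor $\BF(i,S)$ through the parent node $\Npar(i)$, so that it depends only on $\Npar(i)$ and $S$; Part I then follows from $\Npar(j)=\Npar(i)$, and Part II follows by checking that $i\in S$, $j\in S$, and $k\in\Npar(i)\setminus S$ meet the hypotheses of \Cref{ass:genPos_dlevel} for the distinct nests $\Npar(i)\neq\Npar(j)$ (your factor written with $\sum_{K\in\mathrm{Ch}(\Npar(i))}v_K(S)$ matches that assumption's expression verbatim, and avoids the paper's typo of $P(\Npar(j)\mid S)$ in $\BF(i,S)$ in Part II).
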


\begin{proof}
Fix $S\in\cS$ and $i,j\in S$. Under the nested-logit factorization, for any $i\in S$,
\[
\phi(i,S)=P(\Npar(i)\mid S)\cdot P(i\mid \Npar(i),S)=
P(\Npar(i)\mid S)
\cdot
\frac{v_i}{\sum_{k\in \Npar(i)\cap S} v_k},
\qquad \forall i\in S,
\]
Then for any $i\in S$,
$$\BF(i,S)
=
\frac{
P(\Npar(i)\mid S)}{P(\Npar(i)\mid [n])} \cdot \frac{\sum_{k\in \Npar(i)} v_k}{\sum_{k\in \Npar(i)\cap S} v_k},
\qquad \forall i\in S.
$$

\emph{(I)} If $j\in \Npar(i)$, then $\Npar(j)=\Npar(i)=:N$. Therefore
\[
\BF(i,S)
=
\frac{
P(\Npar(i)\mid S)}{P(\Npar(i)\mid [n])} \cdot \frac{\sum_{k\in \Npar(i)} v_k}{\sum_{k\in \Npar(i)\cap S} v_k}
=
\BF(j,S)
\]

\emph{(II)} If $j\notin \Npar(i)$,  for some $k \in \Npar(i)$ and $k \notin S$, 
\begin{align*}
\BF(i,S)
&=
\frac{P(\Npar(j)\mid S)}{P(\Npar(i)\mid [n])}
\cdot
\frac{\sum_{k\in \Npar(i)} v_k}{\sum_{k\in \Npar(i)\cap S} v_k},
\\
\BF(j,S)
&=
\frac{P(\Npar(j)\mid S)}{P(\Npar(j)\mid [n])}
\cdot
\frac{\sum_{k\in \Npar(j)} v_k}{\sum_{k\in \Npar(j)\cap S} v_k}. 
\end{align*}

\Cref{ass:genPos_dlevel} ensures that whenever $\Npar(i) \ne \Npar(j)$,  $\BF(i,S)\neq \BF(j,S)$.
\end{proof}
This is a d-level nested logit analogue of \Cref{prop:outsideNew} and \Cref{prop:noOutsideNew}.

\begin{proposition}\label{prop:leaf_partition}
Recall that $\BF(i,S)$ denotes the boost factor defined in Section~\ref{sec:boostFactor}.
Let $i$ and $j$ be two items with $\Npar(i)\neq \Npar(j)$.
Assume that not both $|\Npar(i)|=1$ and $|\Npar(j)|=1$ hold.
Then at least one of the following holds:
\begin{itemize}
    \item There exists an assortment $S$ such that $\BF(i,S)\neq \BF(j,S)$.
    \item There exist an item $k$ and two distinct experimental assortments $S$ and $S'$ such that
    $\BF(j,S)\neq \BF(k,S)$ and $\BF(i,S')=\BF(k,S')$.
\end{itemize}
\end{proposition}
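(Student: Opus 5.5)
We will mirror the case analysis in the third bullet of the proof of Theorem~\ref{thm:mainResult}, with Proposition~\ref{prop:leaf_bf} standing in for Proposition~\ref{prop:outsideNew}. By symmetry between $i$ and $j$, assume $|\Npar(i)|\ge 2$ (the hypothesis rules out both parents being singletons). Pick $k\in\Npar(i)$ with $k\neq i$. Since $\Npar(i)\neq\Npar(j)$, we have $j\notin\Npar(i)$, so $k\neq j$. Since $j$ and $k$ have distinct encodings, fix a position $\ell$ with $\sigma_\ell(j)\neq\sigma_\ell(k)$. Then $\sigma_\ell(i)$ differs from at least one of $\sigma_\ell(j)$ and $\sigma_\ell(k)$, and this gives two cases.

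\textbf{Case $\sigma_\ell(i)\neq\sigma_\ell(k)$.} Take $S:=S_{\ell,-\sigma_\ell(k)}$. Then $i,j\in S$ and $k\notin S$, and $k\in\Npar(i)$. Proposition~\ref{prop:leaf_bf}(II) applies directly to the pair $i,j$ and gives $\BF(i,S)\neq\BF(j,S)$. This is the first alternative.

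\textbf{Case $\sigma_\ell(i)=\sigma_\ell(k)\neq\sigma_\ell(j)$.} We try two choices of assortment.
\begin{itemize}
\item First take $S':=S_{\ell,-\sigma_\ell(j)}$. It contains both $i$ and $k$, which lie in the same parent nest, so Proposition~\ref{prop:leaf_bf}(I) gives $\BF(i,S')=\BF(k,S')$.
\item Next we need a different assortment $S$ containing $j$ and $k$ in which their boost factors differ. Take $S:=S_{\ell,-\sigma_\ell(i)}$. Because $\sigma_\ell(i)=\sigma_\ell(k)$, this excludes both $i$ and $k$, so it does not contain $k$ and does not work.
\item Instead, use a position $\ell'$ where $\sigma_{\ell'}(i)\neq\sigma_{\ell'}(k)$. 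If $\sigma_{\ell'}(j)\neq\sigma_{\ell'}(i)$, then $S_{\ell',-\sigma_{\ell'}(i)}$ contains $j$ and $k$ but not $i\in\Npar(k)$. Proposition~\ref{prop:leaf_bf}(II), applied to the pair $k,j$ with witness $i$, gives $\BF(j,S)\neq\BF(k,S)$. This $S$ differs from $S'$ because it omits $i$.
\item If instead $\sigma_{\ell'}(j)=\sigma_{\ell'}(i)$, then $\sigma_{\ell'}(j)\neq\sigma_{\ell'}(k)$. We fall back to the first case with $\ell'$ in place of $\ell$: $S_{\ell',-\sigma_{\ell'}(k)}$ contains $i,j$ but not $k$, so again $\BF(i,S)\neq\BF(j,S)$.
\end{itemize}

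The main obstacle is this second case. The witness assortment $S$ for the pair $(j,k)$ must come from a separating position for $i$ and $k$, not for $j$ and $k$, and we must check that it is genuinely distinct from $S'$. A second subtlety is that Proposition~\ref{prop:leaf_bf}(II) is stated with the witness lying in the first item's parent. We therefore always orient the pair so that the omitted item belongs to the parent of the item whose nest is partially removed: the parent of $i$ in the first case, and the parent of $k$, which equals $\Npar(i)$, in the second. Throughout, we need all relevant items to lie in $S$ so that the boost factors are defined, and the chosen assortments are arranged to guarantee this.
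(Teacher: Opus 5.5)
Your proof is correct and is essentially the paper's argument. It uses the same reduction to $|\Npar(i)|\ge 2$, the same auxiliary item $k\in\Npar(i)$, and the same use of Proposition~\ref{prop:leaf_bf} on assortments $S_{\ell,-d}$ that separate $i,j,k$. The one difference is how you get $S'$: by starting from a position $\ell$ with $\sigma_\ell(j)\neq\sigma_\ell(k)$, you obtain $S'=S_{\ell,-\sigma_\ell(j)}\ni i,k$ directly. The paper instead finds an assortment containing $i$ and $k$ through a separate contradiction argument that splits on $b\ge 3$ versus $b=2$.

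One caveat, which the paper's proof shares: the second alternative is not symmetric in $i$ and $j$, so the \emph{by symmetry} reduction is not literally valid. For example, take $n=3$, $b=2$, nests $\{i\}$ and $\{j,m\}$, and encodings $\sigma(i)=00$, $\sigma(j)=11$, $\sigma(m)=01$. The experimental assortments are then $\{j\},\{i,m\},\{j,m\},\{i\}$, so no assortment in $\cS\cup\{[n]\}$ witnesses either alternative as written. Both arguments therefore really prove the version in which the roles of $i$ and $j$ may be exchanged.
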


\begin{proof}
Let $i$ and $j$ belong to different leaf-level nests. If there exists an assortment $S$ for which
\[
\BF(i,S)\neq \BF(j,S),
\]
then the first condition is satisfied, and there is nothing left to prove.
Hence, assume throughout that
\[
\BF(i,S)=\BF(j,S)\qquad \text{for all assortments } S \text{ with } i,j\in S.
\]

By the assumption that $i$ and $j$ are not both in singleton leaf-level nests, at least one of
$|\Npar(i)|$ and $|\Npar(j)|$ exceeds $1$. Without loss of generality, assume $|\Npar(i)|>1$
and pick $k\in \Npar(i)$ with $k\neq i$.

Since $i \neq k$, there must exist $\ell$ such that $\sigma_\ell(i) \ne\sigma_\ell(k) $. 
If $\sigma_\ell(j) \ne \sigma_\ell(k)$, then $i, j \in S_{\ell, -\sigma_\ell(k)}$ while $k \notin S_{\ell, -\sigma_\ell(k)}$, resulting in $\BF(i, S_{\ell, -\sigma_\ell(k)}) \ne \BF(j, S_{\ell, -\sigma_\ell(k)})$. 
Hence, we must have $\sigma_\ell(j) = \sigma_\ell(k)$ whenever $\sigma_\ell(i) \ne\sigma_\ell(k) $. 
Fix such an $\ell$, we have $i \notin S_{\ell, -\sigma_\ell(i)}$ and $j, k \in S_{\ell, -\sigma_\ell(i)}$, hence by \Cref{prop:leaf_bf}
$\BF(j, S_{\ell, -\sigma_\ell(i)}) \ne \BF(k, S_{\ell, -\sigma_\ell(i)})$. 

By \Cref{prop:leaf_bf}, any experimental assortment $S$ that contains both $i$ and $k$ satisfies
$\BF(i,S)=\BF(k,S)$. 
Hence, if such an assortment exists, the second bullet point of
\Cref{prop:leaf_partition} is satisfied, implying $N^{\mathrm{par}}(i)\neq N^{\mathrm{par}}(j)$.

We prove by contradiction.
Suppose that for all $\ell$ and $d$, at most one of $\{i, k\}$ are in $S_{\ell, -d}$.
This is only possible when $b=2$. 
Indeed, if $b\ge 3$, then for any $\ell$ we can pick
$d\in\{0,1,\dots,b-1\}\setminus\{\sigma_\ell(i),\sigma_\ell(k)\}$, so that
$\sigma_\ell(i)\neq d$ and $\sigma_\ell(k)\neq d$, implying $i,k\in S_{\ell,-d}$,
a contradiction.
Hence $b=2$ and necessarily $\sigma_\ell(i)\neq\sigma_\ell(k)$ for all $\ell$.
However, we have already shown that $\sigma_\ell(j) = \sigma_\ell(k)$ whenever $\sigma_\ell(i) \ne\sigma_\ell(k)$, which indicates $\sigma_\ell(j) = \sigma_\ell(k)$ for all $\ell$, that is $j = k$. 
This is a contradiction.
Therefore, there exist $\ell$ and $d$ such that $i,k\in S_{\ell,-d}$. Take $S':=S_{\ell,-d}$.
Then $\BF(i,S')=\BF(k,S')$ by \Cref{prop:leaf_bf}(I), completing the proof.
\end{proof}

Hence, by Proposition~\ref{prop:leaf_partition}, if neither of the two conditions above holds for a pair $(i,j)$ (and we are not in the degenerate case where both $\Npar(i)$ and $\Npar(j)$ are singletons), then $i$ and $j$ must belong to the same leaf-level nest. 

We treat the corner case below via an equivalent normalization.
If a node has both direct item-children and an internal-node child, we group those direct item-children into an auxiliary leaf-level nest $\mathtt{tmp}$ with $\lambda_{\mathtt{tmp}}=1$.
This is without loss of generality, since introducing a $\lambda=1$ intermediate nest is equivalent to flattening and does not affect the nested-logit choice probabilities (for any assortment). To illustrate,

\begin{center}
\begin{tikzpicture}[
  grow=down,
  level 1/.style={sibling distance=20mm},
  level 2/.style={sibling distance=10mm},
  level 3/.style={sibling distance=5mm},
  edge from parent/.style={draw,-latex}
]
\node {$\mathtt{root}$}
  child {node {$1$}}
  child {node {$2$}}
  child {node {$\mathtt{subtree}$}};
\end{tikzpicture}
\end{center}

\medskip
would be treated as:

\begin{center}
\begin{tikzpicture}[
  grow=down,
  level 1/.style={sibling distance=40mm},
  level 2/.style={sibling distance=25mm},
  level 3/.style={sibling distance=15mm},
  edge from parent/.style={draw,-latex}
]
\node {$\mathtt{root}$}
  child {node {$\mathtt{tmp}$}
    child {node {$1$}}
    child {node {$2$}}
  }
  child {node {$\mathtt{subtree}$}};
\end{tikzpicture}
\end{center}

\subsection{Second-Lowest Level Nests} \label{sec:second_lowest}
In this section, we analyze relationships among the second-lowest-level nests, i.e., nests whose children's children are items. 
Two nodes are called \emph{siblings} if they share the same parent in the nesting tree. 
In particular, a nest node and an item node (leaf) can also be siblings when they have the same parent.

Similar to \Cref{sec:param_recovery}, assume the item-level partitions are identified, we represent the item utilities with one scale parameter $c_N$ per partition. 
Within any item-level partition $N$, $|N| \ge 2$, for offered items $i,k\in N$,
$\frac{P(i\mid N)}{P(k\mid N)}=\frac{v_i}{v_k}.$
Fix $i_N\in N$, set $w_i:=v_i/v_{i_N}$ (so $w_{i_N}=1$ and $w_i$ are known), and write $v_i=c_{N}w_i$ for all $i\in N$, where $c_{N}>0$ is the only unknown scale parameter.

When $N$ is an item-level nest with item set $N\subseteq\cI$, and the scale parameter $c_N$, we parameterize
\begin{equation}\label{eq:leaf_weight}
v_N(S) \;=\; v_N(c_N,S) \;:=\; \left(c_N \sum_{k\in S\cap N} w_k\right)^{\lambda_N},
\end{equation}
where $\{w_k\}_{k\in N}$ are normalized item weights within $N$, and $c_N>0$ is a nest-specific scale parameter.
For higher-level nests, $v_N(S)$ is defined analogously through the model's aggregation over its child nests; we keep the notation $v_N(S)$ generic and make its explicit form available when needed.

\begin{lemma}[Nest--Nest case]\label{lem:case_sibling_leaf}
Let $N(i)$ and $N(j)$ denote the two distinct item-level nests containing items $i$ and $j$, respectively, and assume $N(i)$ and $N(j)$ are siblings.
Then, for any offered assortment $S$, the statistic
\[
\log\!\left[\frac{P(N(i) \mid S)}{P( N(j)  \mid S)}\right],
\]
which is computable from observed choice probabilities given the current identified node labels,
can be expressed as a linear equation in the three unknown scalars
\[
\lambda_{N(i)},\qquad \lambda_{N(j)},\qquad \lambda_{N(j)}\log c_{N(j)}.
\]
Moreover, under Assumptions ~\ref{ass:two_by_two_nonsingular} and \ref{ass:genPos_dlevel}, with the experiment design $\cS$ described in \Cref{sec:ourExpDesign}, there exist at least three assortments such that the resulting three equations are linearly independent.
It follows that these three unknowns are identifiable.
\end{lemma}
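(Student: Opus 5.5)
We will mirror the two-level argument of \Cref{sec:param_recovery}, replacing the anchor nest $\bar N$ by the sibling $N(i)$ and using the sibling-ratio identity \eqref{eq:sibling_ratio} to cancel everything above the common parent $N^{\mathrm{par}}$.

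\textbf{Step 1: the linear form.} Since $N(i)$ and $N(j)$ are siblings under $N^{\mathrm{par}}$, the identity \eqref{eq:sibling_ratio} gives, for any $S$ meeting both nests,
\[
\frac{P(N(i)\mid S)}{P(N(j)\mid S)}=\frac{P(N(i)\mid N^{\mathrm{par}},S)}{P(N(j)\mid N^{\mathrm{par}},S)}=\frac{v_{N(i)}(S)}{v_{N(j)}(S)} .
\]
The left side is observable as a ratio of sums of $\phi(k,S)$ over the two nests. By \eqref{eq:leaf_weight},
\[
\log\frac{v_{N(i)}(S)}{v_{N(j)}(S)}=\lambda_{N(i)}\log c_{N(i)}+\lambda_{N(i)}A_S-\lambda_{N(j)}B_S-\lambda_{N(j)}\log c_{N(j)},
\]
where $A_S:=\log\sum_{k\in N(i)\cap S}w_k$ and $B_S:=\log\sum_{k\in N(j)\cap S}w_k$ are known. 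To remove the extra unknown $\lambda_{N(i)}\log c_{N(i)}$, we normalize as in \Cref{sec:param_recovery}: $N(i)$ plays the role of the anchor $\bar N$ within the subtree of $N^{\mathrm{par}}$, and we fix $c_{N(i)}=1$. This is legitimate because only ratios of sibling weights enter $P(\cdot\mid N^{\mathrm{par}},S)$, and the overall scale of the parent is absorbed later at the next level, per roadmap item (iii). The expression is then linear in $(\lambda_{N(i)},\lambda_{N(j)},\lambda_{N(j)}\log c_{N(j)})$, with coefficient row $(A_S,-B_S,-1)$.

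\textbf{Step 2: three independent equations.} We take the control $[n]$ together with the two assortments $S,S'$ supplied by \Cref{lem:two_S_distinct_pairs} applied to $N=N(i)$ and $N'=N(j)$. If one of the nests is a singleton, its $\lambda$ equals $1$ and drops out, so a smaller system suffices. The lemma guarantees:
\begin{itemize}
\item all four intersections are nonempty, so every equation is defined;
\item each nest is partially cut by $S$ or $S'$;
\item the intersection pairs differ.
\end{itemize}
These are exactly the hypotheses (i)--(iii) of \Cref{ass:two_by_two_nonsingular}. The Remark after that assumption shows, by subtracting the control row, that the $3\times3$ determinant with rows $(A_T,-B_T,-1)$ for $T\in\{[n],S,S'\}$ equals the $2\times2$ log-fraction determinant, which is nonzero. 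Hence the system has a unique solution. \Cref{ass:genPos_dlevel} is used only to certify upstream that the nests and the sibling labels we condition on are the correct ones.

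\textbf{Main obstacle.} The delicate point is justifying the normalization in Step 1, namely why $\lambda_{N(i)}\log c_{N(i)}$ can be fixed rather than counted as a fourth unknown. I would handle it by arguing that within $N^{\mathrm{par}}$ only the relative scale between siblings is identifiable from sibling ratios, and that the absolute scale is recovered one level up. A secondary check is confirming that the lemma's assortments satisfy (i), including the nonempty intersections in Case 2. There I would lean on the freedom in the lemma's proof to swap $d$ between $\sigma_\ell(i)$ and $\sigma_\ell(i')$.
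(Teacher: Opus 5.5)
Your proposal is correct and follows essentially the same route as the paper. You normalize $c_{N(i)}=1$ and take logs of the sibling-ratio identity \eqref{eq:sibling_ratio} with \eqref{eq:leaf_weight} substituted, giving rows $(A_T,-B_T,-1)$; you then take the control $[n]$ together with the two assortments from \Cref{lem:two_S_distinct_pairs}, and get nonsingularity from \Cref{ass:two_by_two_nonsingular} via the row reduction in the Remark of \Cref{sec:param_recovery}. Your added remarks on why the normalization is harmless, on singleton nests, and on nonemptiness in Case~2 of that lemma are, if anything, more careful than the paper's one-line appeal to the two-level argument.
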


\begin{proof}
Let $N^{\mathrm{par}}$ denote their common parent nest. 
Accordingly, we write $c_{N(i)},c_{N(j)}$ and $\lambda_{N(i)},\lambda_{N(j)}$ for the corresponding nest-scale and nesting parameters.

Normalize $c_{N(i)}=1$ and take logs of \Cref{eq:sibling_ratio} yields
\begin{equation}\label{eqn:dlevel_params}
\log\!\left[\frac{P( N(i)  \mid S)}{P(N(j)  \mid S)}\right]
=\log v_{N(i)}(1,S)
-\log v_{N(j)}(c_{N(j)},S).
\end{equation}

Furthermore, substituting \Cref{eq:leaf_weight} into \eqref{eqn:dlevel_params} gives
\begin{align}
\underbrace{\log\!\left[\frac{P( N(i)  \mid S)}{P(N(j)  \mid S)}\right]}_{\text{known}}
&=
\underbrace{\lambda_{N(i)}}_{\text{unknown 1}}
\underbrace{\log\!\left(\sum_{k\in S\cap N(i)} w_k\right)}_{\text{known}}
\nonumber
\quad-
\underbrace{\lambda_{N(j)}}_{\text{unknown 2}}
\underbrace{\log\!\left(\sum_{k\in S\cap N(j)} w_k\right)}_{\text{known}}
-
\underbrace{\lambda_{N(j)}\log c_{N(j)}}_{\text{unknown 3}} .
\label{eq:linear_leafleaf}
\end{align}

The statistic yields a linear equation in the three unknown scalars
$\lambda_{N(i)}$, $\lambda_{N(j)}$, and $\lambda_{N(j)}\log c_{N(j)}$.
By \Cref{lem:two_S_distinct_pairs} and \Cref{ass:two_by_two_nonsingular}, a similar proof to \Cref{sec:param_recovery} shows that we can find two assortments $S$ and $S'$ such that the linear system above has a unique solution.
\end{proof}
In \Cref{fig:3level_example}, the blue item-level nests and green item-level nests are an example of the nest--nest case. 

\begin{lemma}[Nest--item case]\label{lem:case_leaf_item}
Assume $N(i)$ is an item-level nest, $j$ is an item (treated as a terminal node in the augmented tree), $N(i)$ and item $j$ are siblings. 
Then, for any offered assortment $S$, the statistic
\[
\log\!\left[\frac{P(N(i) \mid S)}{\phi(j,S)}\right],
\]
which is computable from observed choice probabilities given the current identified node labels (with $N'$ the common parent of $N(i)$ and $j$),
can be expressed as a linear equation in the two unknown scalars
\[
\lambda_{N(i)},\qquad \log (w_j).
\]
where $w_j$ is the preference weight $v_j$ normalized to the same scale as the items in $N(i)$. 
under Assumptions ~\ref{ass:two_by_two_nonsingular} and \ref{ass:genPos_dlevel}, with the experiment design $\cS$ described in \Cref{sec:ourExpDesign}, there exist at least two assortments such that the resulting two equations are linearly independent.
It follows that these two unknowns are identifiable.
\end{lemma}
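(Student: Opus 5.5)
The plan follows the proof of \Cref{lem:case_sibling_leaf}: first derive the linear equation from the sibling-ratio identity \eqref{eq:sibling_ratio}, then show two suitable experimental assortments exist.

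\textbf{The linear equation.} Let $N'$ be the common parent of $N(i)$ and $j$. By \eqref{eq:sibling_ratio}, for any $S$ with $N(i)\cap S\neq\emptyset$ and $j\in S$,
\[
\frac{P(N(i)\mid S)}{\phi(j,S)}=\frac{v_{N(i)}(S)}{v_j},
\]
because the leaf $j$ has induced weight $v_j$ and $\phi(j,S)=P(j\mid S)$. We normalize the common scale so that $c_{N(i)}=1$. This is the only normalization available among siblings, since within $N'$ only ratios matter. The items in $N(i)$ then have weights $w_k$ from \eqref{eq:leaf_weight}, and $w_j:=v_j/c_{N(i)}$. Taking logs gives
\[
\log\frac{P(N(i)\mid S)}{\phi(j,S)}=\lambda_{N(i)}\log\Big(\sum_{k\in S\cap N(i)}w_k\Big)-\log w_j .
\]
The left side is observable. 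The coefficient $\log\sum_{k\in S\cap N(i)} w_k$ is known once the item-level partition is identified, which is done in \Cref{sec:leafNode}. So the equation is linear in $(\lambda_{N(i)},\log w_j)$, and its coefficient row is $(A_S,-1)$ with $A_S:=\log\sum_{k\in S\cap N(i)}w_k$.

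\textbf{Two independent equations.} Two rows $(A_S,-1)$ and $(A_{S'},-1)$ are independent if and only if $A_S\neq A_{S'}$. I would take the control assortment $[n]$ as one of them; the paper also uses $[n]$ as a row in \Cref{sec:param_recovery}. Then I need some $S\in\mathcal S$ with $j\in S$, $\emptyset\neq N(i)\cap S\subsetneq N(i)$, and $A_S\neq A_{[n]}$. Since all $w_k>0$, a proper nonempty intersection gives $\sum_{k\in S\cap N(i)}w_k<\sum_{k\in N(i)}w_k$, so $A_S\neq A_{[n]}$ holds automatically. This strict monotonicity replaces the role \Cref{ass:two_by_two_nonsingular} plays in the nest--nest case, so that assumption is invoked only nominally.

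\textbf{Finding $S$.} $N(i)$ is a genuine nest, so $|N(i)|\ge2$ by \Cref{ass:identify} and the flattening convention. Pick distinct $i,i'\in N(i)$ and a position $\ell$ with $\sigma_\ell(i)\neq\sigma_\ell(i')$. There are two cases.

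\begin{itemize}
\item If $\sigma_\ell(j)\neq\sigma_\ell(i)$, take $S=S_{\ell,-\sigma_\ell(i)}$. It contains $j$ and $i'$ but not $i$.
\item Otherwise $\sigma_\ell(j)=\sigma_\ell(i)\neq\sigma_\ell(i')$, so take $S=S_{\ell,-\sigma_\ell(i')}$. It contains $j$ and $i$ but not $i'$.
\end{itemize}

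In both cases $N(i)\cap S$ is proper and nonempty, and $j\in S$, so the assortment mirrors the case analysis of \Cref{lem:two_S_distinct_pairs}. Solving the resulting $2\times2$ system identifies $\lambda_{N(i)}$ and $\log w_j$.

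The main obstacle is justifying that the ratio $P(N(i)\mid S)/\phi(j,S)$ is free of all higher-level quantities. The conditional factor $P(N'\mid S)$ must cancel for every $S$, including assortments where other children of $N'$ are partially removed. This cancellation holds because both terms share the same parent denominator, which is exactly the content of \eqref{eq:sibling_ratio}. I would state that cancellation explicitly before carrying out the construction above.
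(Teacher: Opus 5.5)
Your proposal is correct and is essentially the paper's proof. You take logs of the sibling-ratio identity \eqref{eq:sibling_ratio} with $c_{N(i)}=1$, use the control assortment $[n]$ as one row, and pick $S=S_{\ell,-\sigma_\ell(i)}$ or $S_{\ell,-\sigma_\ell(i')}$ so that $j\in S$ and $\emptyset\neq N(i)\cap S\subsetneq N(i)$; as in the paper, nonsingularity then follows from strict monotonicity of $\sum_{k\in S\cap N(i)}w_k$ rather than from \Cref{ass:two_by_two_nonsingular}. Your explicit case split on $\sigma_\ell(j)$ is slightly cleaner than the paper's write-up. So is your use of the known coefficient $A_S=\log\sum_{k\in S\cap N(i)}w_k$; the paper writes $\log v_{N(i)}(1,S)$, which equals $\lambda_{N(i)}A_S$ and so contains an unknown.
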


\begin{proof}
Let $N^{\mathrm{par}}$ denote their common parent nest. 
Accordingly, we write $c_{N(i)}$ and $\lambda_{N(i)}$ for the corresponding nest-scale and nesting parameters.
Substituting \Cref{eq:leaf_weight}  into \Cref{eq:sibling_ratio}, normalizing $c_{N(i)}=1$, and taking log yields a linear system in the two unknowns $\lambda_{N(i)}$ and $\log v_j$.
$$
\log\!\left[\frac{P(N(i) \mid S)}{\phi(j,S)}\right]
=
\underbrace{\lambda_{N(i)}}_{\text{1st unknown }} \ \underbrace{\log \; \Bigl(\sum_{k\in S\cap N(i)} w_k\Bigr)}_{\text{known constant}}  - \underbrace{\log(w_j)}_{\text{2nd unknown}}
$$

The control assortment $S=[n]$ provides one such value (with $N(i)\cap S=N(i)$).
Choose $i,i'\in N(i)$ with $\sigma_\ell(i)\neq \sigma_\ell(i')$ for some position $\ell$, and take an assortment of the form
$S:=S_{\ell,-\sigma_\ell(i)}$ or $S:=S_{\ell,-\sigma_\ell(i')}$ so that $j\in S$ and $N(i)\cap S\neq N(i)$.
This produces a second distinct value of $\log v_{N(i)}(1,S) < \log v_{N(i)}(1,[n])$.
The system 
$$
\begin{pmatrix}
    \log v_{N(i)}(1,[n]) & -1 
    \\
    \log v_{N(i)}(1,S) & -1
\end{pmatrix}
$$
is non-singular, as the determinant equals $ \log v_{N(i)}(1,S) - \log v_{N(i)}(1,[n]) \ne 0 $.
Therefore, the resulting $2\times2$
linear system has a unique solution for $\lambda_{N(i)}$ and $\log(w_j)$.
\end{proof}

In \Cref{fig:3level_example}, the pink item-level nests and purple item are an example of the nest--item case. 

Hence, to test whether two item-level nests share the same parent (i.e., are siblings),
we solve the associated \(3\times 3\) linear system. If the system is feasible, we declare the two nests
to be siblings; conversely, under genericity, non-siblings are not expected to satisfy these linear relations.
Similarly, to test whether an item-level nest and an item share the same parent,
we solve the associated \(2\times 2\) linear system; feasibility again serves as the sibling test.
Refer to \Cref{fig:3level_example} for an example of recursively recovering the structure from the bottom up.

\begin{figure}
    \centering
    \includegraphics[width=0.8\linewidth]{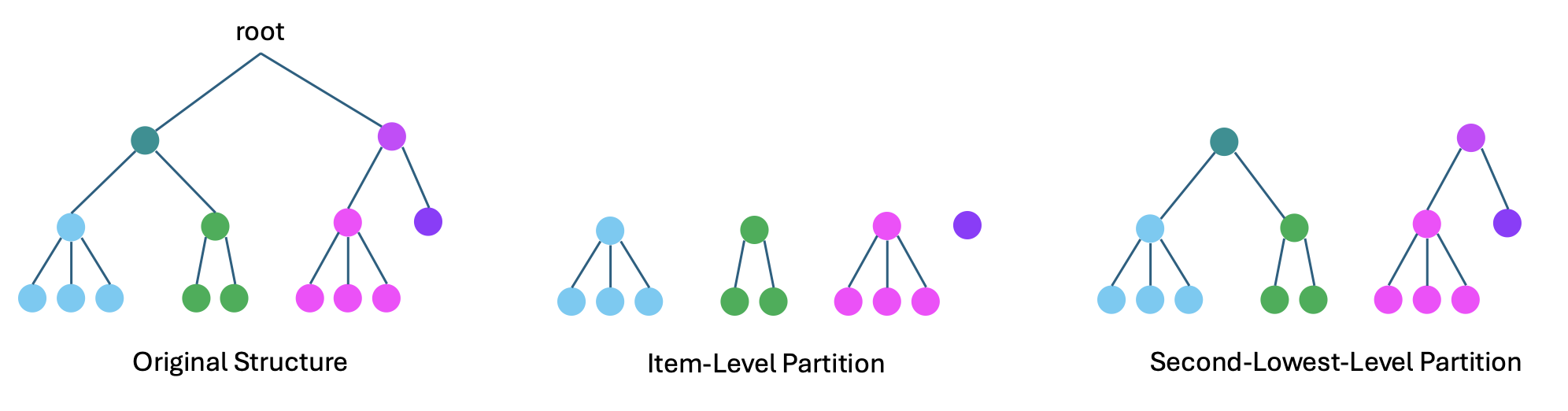}
    \caption{
Illustration of the tree-structure recovery process. Each leaf corresponds to an item.
\Cref{sec:leafNode} (Item-level Partition step) partitions items into item-level nests, and \Cref{sec:second_lowest} (Second-lowest level partition step) further partitions these item-level nests.
The blue and green nodes illustrate the nest--nest case, while the pink (nest) and purple (item) nodes illustrate the nest--item case.
}
\label{fig:3level_example}
\end{figure}

\subsection{Higher-level Nests}\label{sec:higher_level}
Further refinement of the nesting structure is possible by algebraically solving \Cref{eq:sibling_ratio}.
At higher levels, the induced weights $v_N(S)$ become nonlinear functions of the underlying scale parameters, so the resulting identification equations are no longer linear.
Under mild genericity (non-coincidence) and full-rank conditions on the experimental assortments, one can argue that the resulting system admits a unique solution by reducing it to a one-dimensional equation and invoking a standard uniqueness argument.
We omit the details, as they are lengthy and do not introduce additional conceptual insights beyond the bottom two levels.

\section{Supplement to \Cref{sec:numericExpDesign}} 

\subsection{Details of Mis-specified Instances (supplement to \Cref{ssec:misSpec})} \label{app:berbegliaInsts}

\citet[\S3.1.1]{berbeglia2022comparative} describes 1800 random instances in total, where there are 360 instances for each of 5 different values of $T$.
We consider the 360 instances for the 2 smallest and 2 largest values of $T$, resulting in 1440 total instances.
Our \Cref{fig:misSpecified} is analogous to \citet[Fig.~1]{berbeglia2022comparative}, noting that their Fig.~1 also only uses instances with the 2 smallest and 2 largest values of $T$.

\citet{berbeglia2022comparative} consider 9 choice models.
We estimate only 5 of them (EXP, MNL, LC, NL-given, and MKV), omitting MKVR, MKV2, MX, and RL.
To justify why, MKVR and MKV2 are simplified variants of the Markov Chain choice model that always underperformed the full model MKV.
Meanwhile, the MX and RL estimation procedures were really slow and often timed out in our settings with more items---in particular, MX requires simulation-based techniques for likelihood evaluation, while RL involves ranking structures that also led to more complex optimization procedures during estimation.  The performance of these methods was generally worse than MKV even when successfully estimated.  Therefore, we do not consider them in our paper, deferring to MKV as generally the best high-parameter model to estimate given sufficient data.

For the models we do estimate, we replicate the estimation methods implemented in \citet{berbeglia2022comparative}.  EXP is estimated following the Maximum Likelihood for Exponomial method described in \citet{alptekinoglu2016exp}.
MNL is estimated by standard Maximum Likelihood using convex optimization.
LC is estimated following the Conditional Gradient method of \citet{jagabathula2020conditional}.
NL is estimated by solving Maximum Likelihood using non-linear optimization packages (see \citet{berbeglia2022comparative} for details).
MKV is estimated following the Expectation Maximization method of \citet{csimcsek2018expectation}.
We note that their Nested Logit method arbitrarily partitions the items into two nests, without data-driven nest identification.

\subsection{Explanation of Confidence Intervals (supplement to \Cref{ssec:wellSpec})} \label{app:confidenceIntervals}

Letting $\mathcal{I}$ denote the set of instances being averaged together and $\RMSEsoft_I$ denote the soft RMSE between $\phi$ and $\phi^\text{est}$ on a particular instance $I\in\mathcal{I}$, the 95\% Confidence Interval is defined using
\begin{align*}
\overline{\mathrm{RMSE}}^{\mathrm{soft}}
= \frac{1}{|\mathcal{I}|}\sum_{I \in \mathcal{I}}\RMSEsoft_I ,
\qquad
\mathrm{STDEV}=\sqrt{\frac{1}{|\mathcal{I}|-1}\sum_{I \in \mathcal{I}}
\bigl(\RMSEsoft_I-\overline{\mathrm{RMSE}}^{\mathrm{soft}}\bigr)^2 } ,
\\ 95\% \text{ Confidence Interval}
= \overline{\mathrm{RMSE}}^{\mathrm{soft}}
\pm t_{\alpha/2,\,|\mathcal{I}|-1}\,\frac{\mathrm{STDEV}}{\sqrt{|\mathcal{I}|}},
\qquad \alpha = 0.05,
\end{align*}
where $t_{\alpha/2,\,|\mathcal{I}|-1}$ denotes the upper $\alpha/2$ quantile of the $t$-distribution with $|\mathcal{I}|-1$ degrees of freedom.

\subsection{Well-specified Estimation (supplement to \Cref{ssec:wellSpec})} \label{app:otherWellSpec}

We compare experiment designs in well-specified settings, plotting analogues of \Cref{fig:mkvMkv} (which was for the MKV choice model).
We consider the Exponomial model in \Cref{fig:expEXP}, and the MNL model in \Cref{fig:mnlMNL}.

\begin{figure}[!htb]
    \centering
    \includegraphics[width=0.7\linewidth]{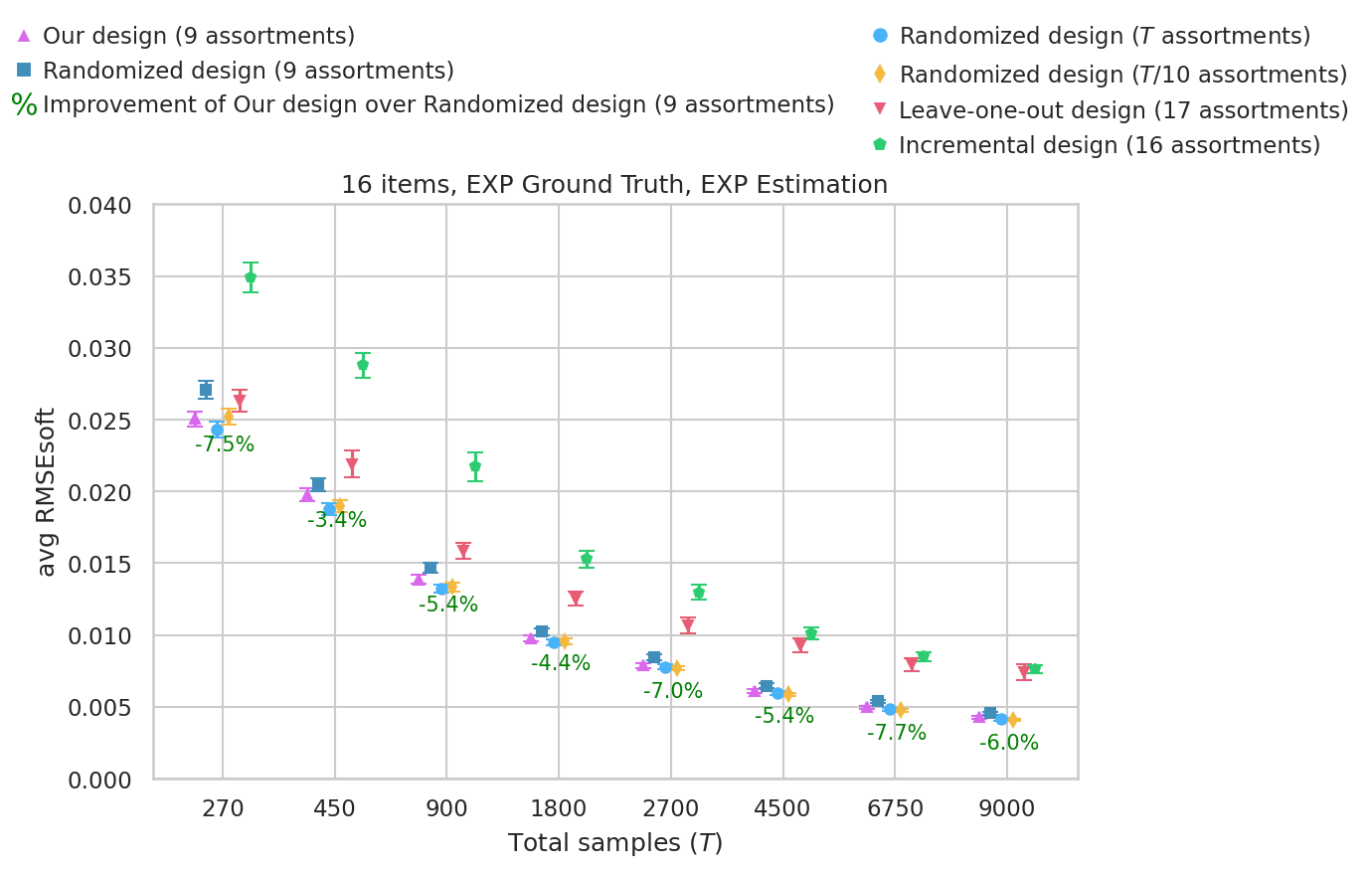}
    \caption{Average $\RMSEsoft$ over 500 Exponomial ground truths and Exponomial estimation}
    \label{fig:expEXP}
\end{figure}

\begin{figure}[!htb]
    \centering
    \includegraphics[width=0.7\linewidth]{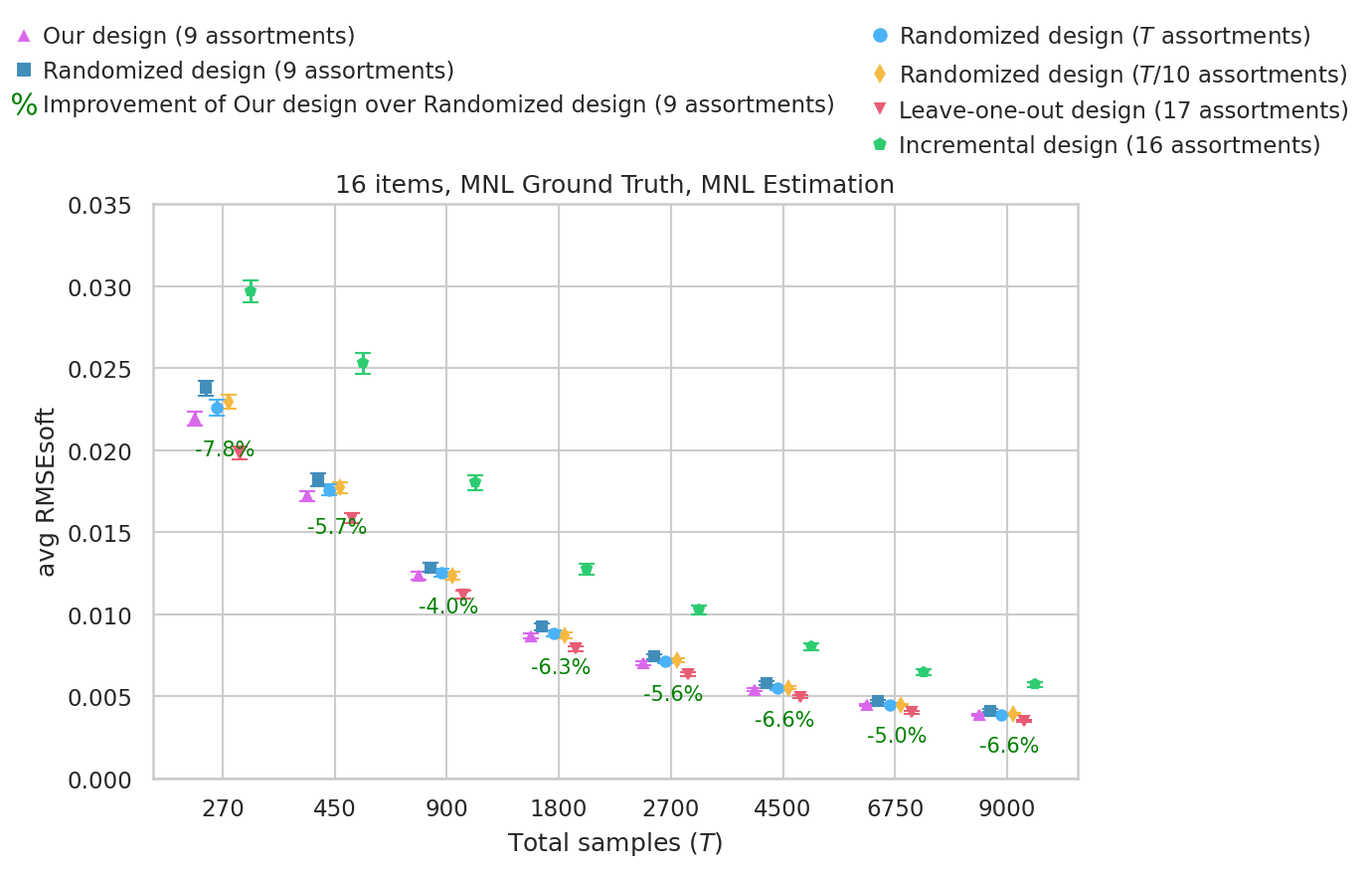}
    \caption{Average $\RMSEsoft$ over 500 MNL ground truths and MNL choice estimation}
    \label{fig:mnlMNL}
\end{figure}

\paragraph{Findings.}
Our experiment design consistently outperforms randomized design with the same number (9) of experimental assortments, by a significant margin that exceeds the margin in the mis-specified setting (\Cref{ssec:misSpec}).
Our experiment design is on par with a randomized design with far more experimental assortments, even as the data size gets larger, so it compares more favorably than in the mis-specified setting but less favorably than in the well-specified Markov Chain setting (\Cref{ssec:wellSpec}).
The Leave-one-out design performs well for MNL (\Cref{fig:mnlMNL}), but poorly in all of the other mis-specified (\Cref{fig:misSpecified}) and well-specified (\Cref{fig:mkvMkv,fig:expEXP}) settings.
This highlights the robustness of our experiment design compared to the other non-random designs like Leave-one-out.

\subsection{Asymptotic Markov Chain Estimation (supplement to \Cref{ssec:wellSpec})} \label{app:mkvAsymptotic}

\Cref{fig:mkvMkv} shows that the Leave-one-out design performs poorly for well-specified MKV estimation with $n=16$ items and $T$ up to $9000$, despite being proposed by \citet{blanchet2016markov} specifically for Markov Chain choice estimation.
Since \citet{blanchet2016markov} proved the asymptotic optimality of Leave-one-out for MKV, we investigate whether it catches up given substantially more data.

To make the asymptotic regime more accessible, we reduce to $n=8$ items, so that the Leave-one-out design has only 9 assortments (vs.\ 17 for $n=16$) and the MKV model has fewer parameters to estimate.
We extend the data range to $T$ up to $140{,}000$ (a 16$\times$ increase over \Cref{fig:mkvMkv}) and generate $500$ random MKV ground truths. 
The results are displayed in \Cref{fig:8mkvMKV}.

\begin{figure}
    \centering
    \includegraphics[width=0.7\linewidth]{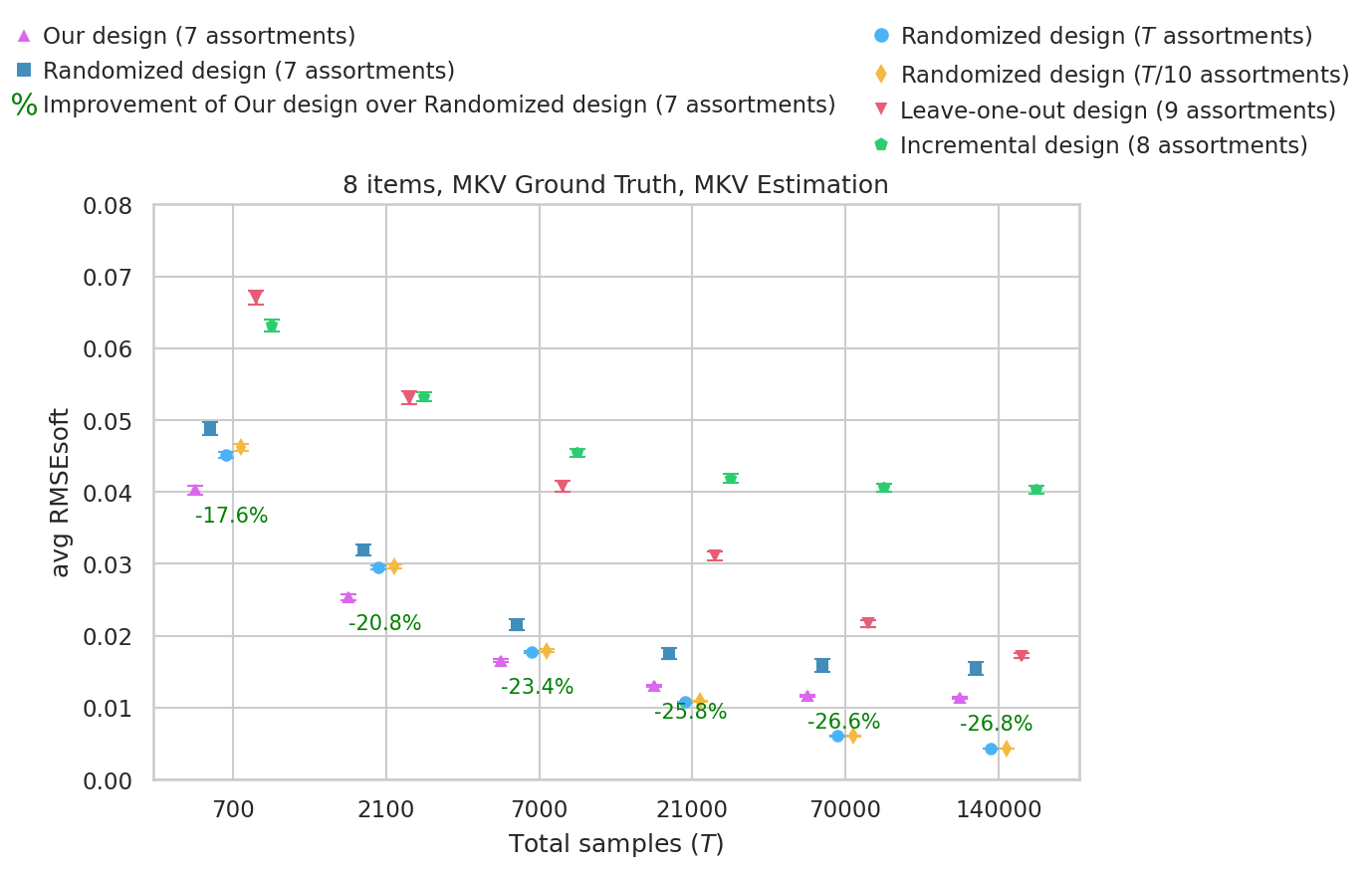}
    \caption{Average $\RMSEsoft$ over 500 MKV ground truths with $n=8$ items under MKV estimation, extending the data range to $T=140{,}000$}
    \label{fig:8mkvMKV}
\end{figure}

\paragraph{Findings.}
The Leave-one-out design improves substantially from its poor starting point as $T$ grows, but remains significantly worse than our design even at $T=140{,}000$---despite the favorable smaller $n=8$ setting.
This demonstrates that the asymptotic optimality of Leave-one-out for MKV does not translate to practical performance, even at data sizes 16$\times$ larger than those in \Cref{fig:mkvMkv} and with fewer items.
However, the randomized designs with more experimental assortments do outperform ours asymptotically, because our $O(\log n)$ experimental assortments are not quite enough for identifying the $n^2$ parameters in a Markov Chain choice model.

\section{Supplement to \Cref{sec:numericNestId}}

\subsection{Modifications to Nest Identification Algorithms} \label{app:algs3and4}

The sample-complexity bound in \Cref{thm:sampleComplexity} assumes error-free inference of all relations \(E[i,j]\), which in practice would require prohibitively large sample sizes.
This raises a natural question: with limited data, often far fewer samples than needed for perfect pairwise identification, can we still recover the nest structure?

Intuitively, yes.
Occasional errors in \(E[i,j]\) do not necessarily preclude accurate recovery, because the non-overlapping grouping constraint of Nested Logit provides strong structural signals.
As a simple illustration, suppose items \(i, j, k, \ldots\) lie in the same nest.
A single false negative can therefore invalidate the sufficient condition used in \Cref{thm:sampleComplexity} and eliminate its exact-recovery guarantee. 
However, such a localized error does not erase the global signal: if $i$ and $j$ each exhibit strong within-group relations with many of the same items, then the overall pattern can still support placing $i$ and $j$ in the same nest.

Motivated by this observation, we propose a \emph{robust} identification strategy that operates under limited data and tolerates small inference errors, implemented via simple post-processing from the inferred relations \(E[i,j]\) to the nest structure.
Our goal is to modify \Cref{alg:exactWithOutside} and \Cref{alg:exactWithoutOutside} to remain effective under realistic noise.

\paragraph{Construction of $E$ under noisy data.} \label{sec:noisyDeduction}

The finite-sample guarantee in \Cref{sec:finiteSample} assumes error-free statistical tests, under which inconsistencies do not arise.
Under limited samples, however, statistical test errors can induce systematic inconsistencies across pairwise relations.
We first define the standard statistical tests that will be used by our algorithms to handle noise.
Following \Cref{sec:finiteSample}, define
\begin{align}
\texttt{p-val}[\BF(i,S)=\BF(j,S)] &:=2(1-\Phi(|z(i\succ j,S)|)); \label{eqn:statTest1}
\\ \texttt{p-val}[\BF(i,S)\le\BF(0,S)] &:=1-\Phi(z(i\succ 0,S)) \label{eqn:statTest2}
\end{align}
which follow the standard definition of $p$-values in statistics.
A small value of $\texttt{p-val}[\BF(i,S)=\BF(j,S)]$ falling below some threshold $\alpha$ (e.g.,~$\alpha=0.05$) rejects the null hypothesis that $\BF(i,S)=\BF(j,S)$, suggesting that $\BF(i,S)\neq\BF(j,S)$.
A small value of $\texttt{p-val}[\BF(i,S)\le\BF(0,S)]$ falling below some threshold $\alpha$ rejects the null hypothesis that $\BF(i,S)\le\BF(0,S)$, suggesting that $\BF(i,S)>\BF(0,S)$.

We now present some examples below illustrating that pairwise decisions based on hypothesis tests need not be transitive and can vary across assortments under finite samples.
\begin{itemize}
\item Within-assortment Inconsistency:
Suppose items $i,j,k$ belong to the same nest.
In an experiment $S \in \mathcal{S}$, it may occur that
\[
\texttt{p-val}[\BF(i,S)=\BF(j,S)] > \alpha, \quad 
\texttt{p-val}[\BF(j,S)=\BF(k,S)] > \alpha, \quad 
\texttt{p-val}[\BF(i,S)=\BF(k,S)] \le \alpha .
\]
That is, at significance level $\alpha$, we do not reject $\BF(i,S)=\BF(j,S)$ and we do not reject $\BF(j,S)=\BF(k,S)$, but we do reject $\BF(i,S)=\BF(k,S)$.
Applying the pairwise decision rule within $S$ therefore yields conflicting classifications that violate transitivity.
\item Across-assortment Inconsistency:
Consider two experiments $S$ and $S'$.
In experiment $S$, it may occur that
\[
\texttt{p-val}[\BF(i,S)\le \BF(0,S)] \le \alpha, \quad
\texttt{p-val}[\BF(j,S)\le \BF(0,S)] \le \alpha, \quad
\texttt{p-val}[\BF(i,S)=\BF(j,S)] > \alpha,
\]
That is, at significance level $\alpha$, we reject the null hypotheses $\BF(i,S)\le \BF(0,S)$ and $\BF(j,S)\le \BF(0,S)$, but we do not reject the null $\BF(i,S)=\BF(j,S)$.
Under the decision rule in \Cref{prop:outsideNew}, this pattern of test outcomes would classify $i$ and $j$ as belonging to the same nest.

Yet in another experiment $S'$, we may obtain
\[
\texttt{p-val}[\BF(i,S')=\BF(j,S')] \le \alpha ,
\]
so we reject $\BF(i,S')=\BF(j,S')$ at level $\alpha$.
Under the same decision rule in \Cref{prop:outsideNew}, this would instead classify $i$ and $j$ as belonging to different nests.
\end{itemize}
In sum, under finite samples, different assortments can yield conflicting pairwise classifications for the same pair $(i,j)$, motivating the need to consolidate evidence across assortments when constructing $E[i,j]$.

Analogous to Algorithm~\ref{alg:exactWithOutside} and Algorithm~\ref{alg:exactWithoutOutside}, we initialize an empty edge table \(E\), where \(E[i,j]\) can now be fractional-valued and is an edge confidence score aggregated across assortments.
We then iteratively update \(E\) using observations from each experimental assortment \(S \in \mathcal{S}\), together with the statistical tests in~\eqref{eqn:statTest1}--\eqref{eqn:statTest2}.
We take the $\min$ across assortments $S\in\cS$ when updating the table $E$, trying to stay conservative in assigning probabilities that two items are in the same nest.

\paragraph{Recover nest structure from $E$.} \label{sec:contraHandle}

After obtaining the updated edge matrix $E$, we infer the nest structure.
In the noiseless case (as in \Cref{alg:exactWithOutside}), this is immediate: the graph induced by $E$ decomposes into disjoint cliques, each corresponding to a nest.
With noise, however, estimation errors may introduce erroneous edges and missing edges, so the graph induced by $E$ may no longer decompose cleanly into disjoint cliques.
We therefore use a more robust procedure to recover the underlying partition.

Let $G=(V,\mathcal{E})$ be a weighted graph with vertex set $V=[n]$, where the edge weight between $i$ and $j$ is $w_{ij}=E_{i,j}$.
As the sample size grows, we expect $E_{ij}$ to be stochastically larger for pairs within the same nest than for pairs across different nests.
Consequently, items within the same nest tend to form dense (high-weight) subgraphs, whereas items from different nests exhibit sparse or low-weight connectivity.

We cast nest recovery as a graph-based clustering problem (community detection). 
Given a symmetric matrix
$E\in[0,1]^{n\times n}$ with $E_{ii}=1$, we partition $[n]$ into groups whose within-group connectivity is
stronger than their between-group connectivity.
We adopt the Walktrap algorithm \citep{pons2005computingcommunitieslargenetworks}, which exploits short
random walks to capture local connectivity patterns and is well suited to recovering dense subgraphs under
noisy edge weights. We use the CDlib implementation \citep{rossetti2019cdlib} with walk length $t=4$, a
standard choice. We do not pre-specify the number of communities; instead, we select the partition that
maximizes modularity.

\Cref{fig:matrixE} illustrates the evolution of $E$ as the sample size increases in an example with $n=16$ items and $5$ nests.
Each plot visualizes the entries of $E$ using a grayscale colormap, where darker colors indicate smaller values and lighter colors indicate larger values (from $0$ to $1$). As the sample size increases, the entries of $E$ become more concentrated near $0$ and $1$, so the plots appear increasingly close to black-and-white.
Bright blue boxes denote the ground-truth nests, while bright pink boxes indicate the nests recovered by our procedure (no pink boxes indicates failure).

\begin{figure}
    \centering
    \includegraphics[width=0.8\linewidth]{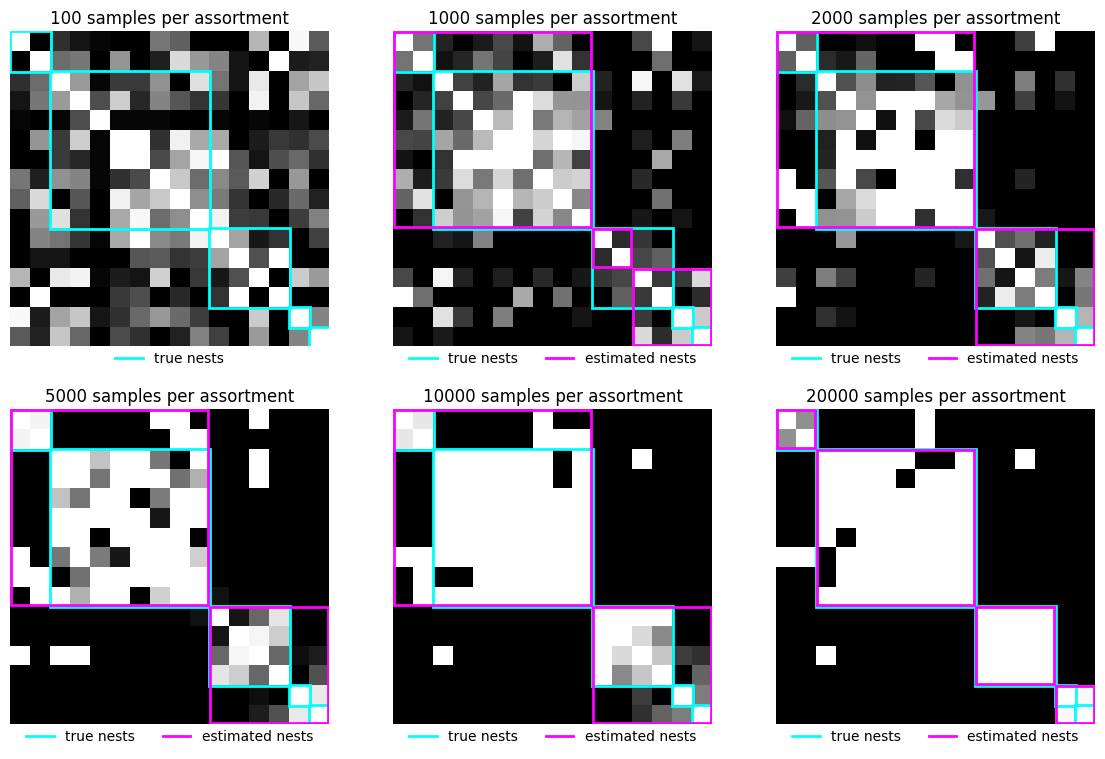}
    \caption{Evolution of $E$ as data increases.  For illustration, the sample sizes shown are selected so that the recovered nests appear along the diagonal; in general, the ordering of items is arbitrary and need not align with the ground-truth ordering.}
    \label{fig:matrixE}
\end{figure}

With $1{,}000$ samples per assortment, the block structure becomes visible, though the recovered nests remain inaccurate.
Starting around $20{,}000$ samples per assortment, the procedure recovers the correct nest structure even though $E$ has not yet become nearly binary.

\paragraph{Nest identification algorithms with statistical tests and community detection.} \label{sec:noisyAlgo}

We now present our two algorithms for nest identification under noisy data, corresponding to settings with and without an outside option.
Indeed, \Cref{alg:noisyNestWithOutside,alg:noisyNestWithoutOutside} correspond to the earlier \Cref{alg:exactWithOutside,alg:exactWithoutOutside}
Both algorithms reuse the terminology and update primitives introduced in \Cref{sec:theory}, and implement the edge aggregation and graph-based recovery described above. 
The updates in \Cref{alg:noisyNestWithoutOutside} are simpler, reflecting the more limited set of statistical tests available when the outside option is absent.

\begin{algorithm*}[!h]
\caption{\textsc{Inexact Nest Identification with Outside Option}}
\label{alg:noisyNestWithOutside}
{\setlength{\baselineskip}{0.8\baselineskip}
\begin{algorithmic}[1]
\State Initialize adjacency matrix $E[i, j] \gets 2$ for all $i, j \in [n]$ \Comment{{2 means $\texttt{null}$}}

\For{$S \in \mathcal{S}$}
\For{$i,j\in S$}
\If{$\texttt{p-val}[\BF(i, S) = \BF(j, S)] \le \alpha$} \Comment{{i.e.,~$\BF(i,S)\neq\BF(j,S)$}}
\State $E[i,j] \gets 0$
\Else \Comment{{null hypothesis $\BF(i, S) = \BF(j, S)$ is accepted}}
\If{$\max\{\texttt{p-val}[\BF(i,S)\le \BF(0,S)],\,\texttt{p-val}[\BF(j,S)\le \BF(0,S)]\}\le \alpha$}
\State $E[i,j] \gets\min(1,E[i,j])$ \Comment{{this is the case where $\BF(i, S) = \BF(j, S)>\BF(0,S)$}} \label{line:noisyOne}
\Else 
\State $E[i,j] \gets \min(\texttt{p-val}[\BF(i, S) = \BF(j, S)],E[i,j])$
\EndIf
\EndIf
\EndFor
\State $\texttt{NoBoost}\gets\{i\in S:\texttt{p-val}[\BF(i, S) \le \BF(0, S)] >\beta\}$ \Comment{{null hypothesis $\BF(i,S)=\BF(0,S)$ is accepted}}
\For{$i \in \texttt{NoBoost}, \; k \notin S$}
\State $E[i,k] \gets \min(1- \texttt{p-val}[\BF(i, S) \le \BF(0, S)], E[i,k])$
\State $E[k,i] \gets \min(1- \texttt{p-val}[\BF(i, S) \le \BF(0, S)], E[k,i])$
\EndFor

\EndFor
\State  $\text{OneHopTransitivity}\gets\{(i,j)\in[n]^2:E[i,j]\ne 0,E[i,k]=E[j,k]=1 \text{ for some } k\in[n]\}$

\Comment{{$E[i,k],E[j,k]=1$ is only possible if they were set in line~\eqref{line:noisyOne}}}
\State \indent $E[i,j]\gets 1$ for all $(i,j)\in \text{OneHopTransitivity}$
\State $E[i,j] \gets 0$  for all $(i,j)\in[n]$ such that $E[i,j] = 2$
\State Community Detection on $E$ to recover nests. 
\end{algorithmic}
}
\end{algorithm*}

\begin{algorithm*}[!h]
\caption{\textsc{Inexact Nest Identification without Outside Option}}
\label{alg:noisyNestWithoutOutside}
\begin{algorithmic}[1]
\State Initialize adjacency matrix $E[i, j] \gets 2$ for all $i, j \in [n]$ \Comment{{2 means $\texttt{null}$}}

\For{$S \in \mathcal{S}$; $i,j\in S$}
\If{$\texttt{p-val}[\BF(i, S) = \BF(j, S)] \le \alpha$}
\State $E[i,j] \gets 0$
\Else
\State $E[i,j] \gets \min \left(\texttt{p-val}[\BF(i, S) = \BF(j, S)], E[i,j]\right)$
\EndIf
\EndFor
\State $E[i,j] \gets 0$ for all $(i,j)\in[n]$ such that $E[i,j] = 2$
\State Community Detection on $E$ to recover nests. 
\end{algorithmic}
\end{algorithm*}

In our nest identification numerics in \Cref{ssec:numericNestIdWith,ssec:numericNestIdWO}, we always set $\alpha=0.05$ and $\beta=1-\alpha$.  In our nest identification numerics in \Cref{sec:deployment}, set $\alpha = 0.0005$ and try varying values of $\beta$. 

\subsection{Nested Logit Ground Truth Generation (supplement to \Cref{ssec:numericNestIdWith})} \label{app:randomNestGeneration}

To generate ground-truth Nested Logit instances for our simulation study, we first sample a random nesting structure over the $n$ items. Specifically, we uniformly permute the items, draw the number of nests $K$ uniformly from $\{1,2,\dots,\lfloor n/2\rfloor\}$, and then place $K-1$ dividers at uniformly random positions in the permuted list to obtain a partition into $K$ nests.

Given the nesting structure, we sample the item preference weights $\{v_i\}_{i=1}^n$ such that the ratio between the largest and smallest weight is bounded, i.e., $\frac{\max_{i \in [n]} v_i}{\min_{i \in [n]} v_i} \le 10.$

Finally, we draw the nest dissimilarity parameter $\lambda_N$ independently from the uniform distribution on $[0.3, 0.6]$.
Note, singleton nests will be reparameterized to $\lambda_N = 1$.

\subsection{Nest Identification Algorithm of \citet{benson2016relevance}} \label{app:BensonImplementation}

We first explain the high-level difference between our nest identification algorithm and that of \citet{benson2016relevance}.
For a pair of items $i,j$, our algorithm ends up comparing (see~\eqref{eqn:statComparison})
\begin{align} \label{eqn:noCrossLearn}
\frac{\hphi(i,[n])}{\hphi(i,[n])+\hphi(j,[n])}\qquad\text{to}\qquad\frac{\hphi(i,S)}{\hphi(i,S)+\hphi(j,S)},
\end{align}
\textit{separately} for each experimental assortment $S\in\cS$ such that $S\supseteq\{i,j\}$.
Meanwhile, their algorithm compares
\begin{align} \label{eqn:withCrossLean}
&\frac{\sum_{S\in\cS:S\supseteq\{i,j,k\}}X(i,S)}{\sum_{S\in\cS:S\supseteq\{i,j,k\}}(X(i,S)+X(j,S))}
\qquad\text{to}
\qquad 
\frac{\sum_{S\in\cS:S\supseteq\{i,j\}}X(i,S)}{\sum_{S\in\cS:S\supseteq\{i,j\}}(X(i,S)+X(j,S))},
\end{align}
where $k\notin\{i,j\}$ is another item and $X(i,S)$ counts the number of times that $S$ is offered and $i$ is chosen (recall that $\hphi(i,S)$ was the empirical probability of $i$ being chosen when $S$ is offered).
We defer full details to \citet[\S4.4]{benson2016relevance}, but the key difference is that unlike~\eqref{eqn:noCrossLearn}, they \textit{aggregate observations} across multiple $S\in\cS$ in~\eqref{eqn:withCrossLean}. 

This can be viewed as a bias-variance tradeoff.  If each experimental assortment $S\in\cS$ has been offered a small number of times, then our algorithm will suffer enormous variance when comparing the ratios in~\eqref{eqn:noCrossLearn}.
However, their comparison is biased depending on $\cS$, even if all assortments in $\cS$ have been offered an identical and asymptotic number of times.  For example, suppose $\cS=\{\{i,j\},\{i,j,k,\ell\}\}$ for distinct items $i,j,k,\ell$.  Then comparison~\eqref{eqn:withCrossLean}, which is supposed to measure whether $k$ is an "irrelevant alternative of $i$ and $j$", is symmetrically also measuring whether $\ell$ is an irrelevant alternative of $i$ and $j$ (see \citet{benson2016relevance} for more details).
This is why the assortments $S$ in their experiment design $\cS$ satisfy $|S|\le 3$---to avoid other items like $\ell$ that may be disproportionately more prevalent in $\{S\in\cS:S\supseteq\{i,j,k\}\}$ than in $\{S\in\cS:S\supseteq\{i,j\}\}$.
However, we demonstrate that adding this constraint to the experiment design $\cS$ actually drastically worsens empirical performance in non-asymptotic settings.

We implement the  "Greedy" version of their nest algorithm discussed in \citet[\S4.6,\S6.1]{benson2016relevance}, which is simpler and better at dealing with data sparsity.
Like in our algorithm, we fix $\alpha=0.05$ for the statistical tests, which are used by their algorithm to test whether the two proportions in~\eqref{eqn:withCrossLean} are equal.
To prevent our algorithm from having an advantage via community detection (see \Cref{app:algs3and4}), we also implement a community detection post-processing step for their algorithm, which helps with nest identification.
For their experiment design, we use the \textit{non-adaptive} version for simplicity (see \citet[\S5]{benson2016relevance}); we also test their nest identification algorithm in combination with our experiment design.

\section{Supplement to \Cref{sec:deployment}}

\subsection{Background} \label{sec:d11background}
We deployed our experiment design at Dream11, the world's largest Daily Fantasy Sports (DFS) platform boasting over 250 million users, of whom 70 million participated in our experiment. Like traditional fantasy sports leagues, DFS also involves constructing a ``fantasy team'' from real sports players with the goal of scoring the most ``fantasy points'', which are earned by the statistical achievements of chosen players in real sports events. However, DFS differs from traditional fantasy in 2 main aspects: (1) DFS competitions span individual sports matches rather than seasons (and consequently remove some of the more complex mechanics like in-season trading), and (2) are generally managed by platforms rather than individuals.

When a user enters the platform, they are greeted by the match selection page (\Cref{fig:d11_1}). Upon selecting a match, a user is then offered
different types of competitions (\Cref{fig:d11_2}), called ``contests'', for them to join and compete against others in. These contests vary along a number of key dimensions: format (depending on the sport), entry fee (if applicable), participant capacity, total prize pool, and prize structure. 

\begin{figure}[!htb]
    \centering
    \subfloat[]{%
        \includegraphics[width=0.3\textwidth]{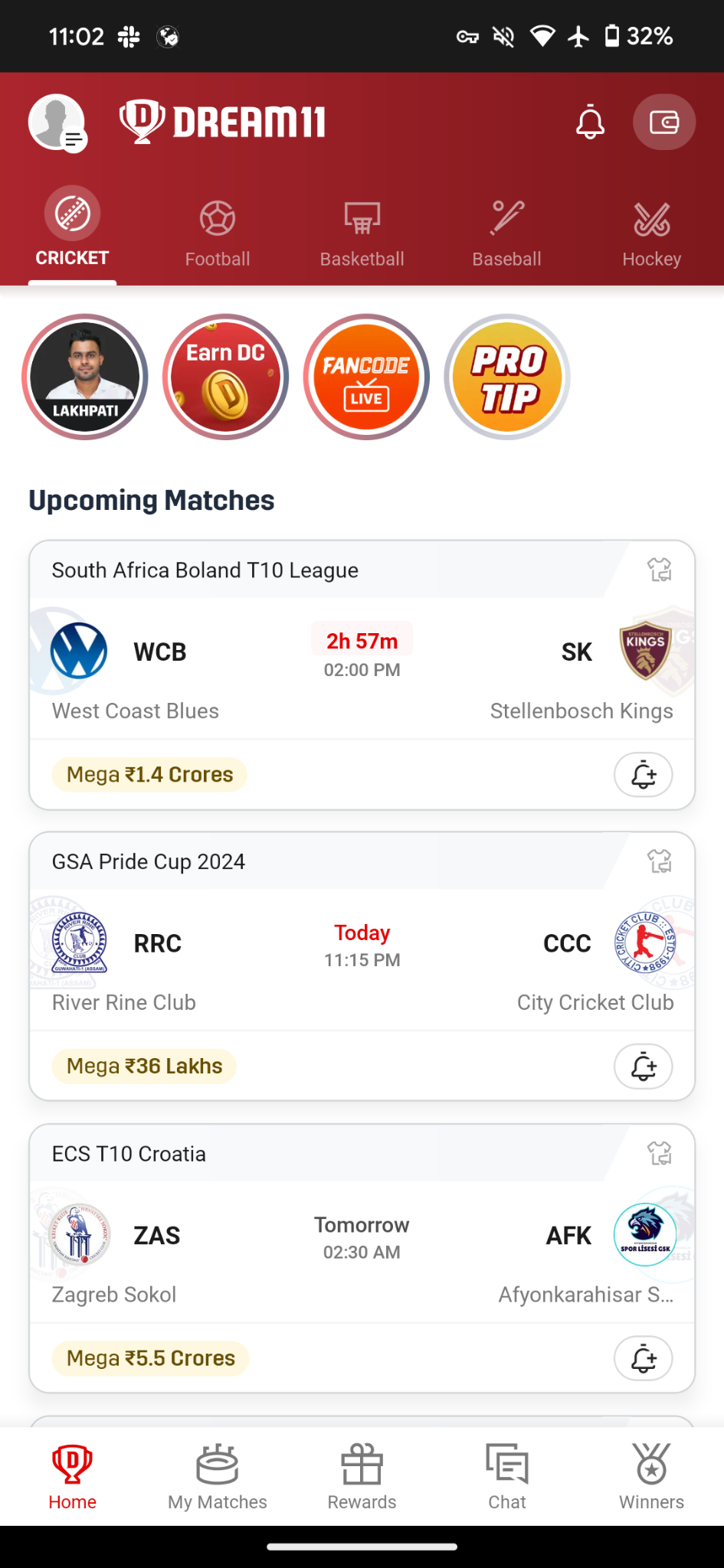}
        \label{fig:d11_1}
    }\hfill
    \subfloat[]{%
        \includegraphics[width=0.3\textwidth]{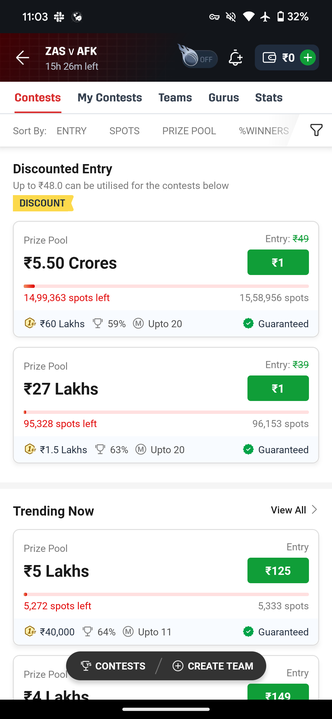}
        \label{fig:d11_2}
    }\hfill
    \subfloat[]{%
        \includegraphics[width=0.3\textwidth]{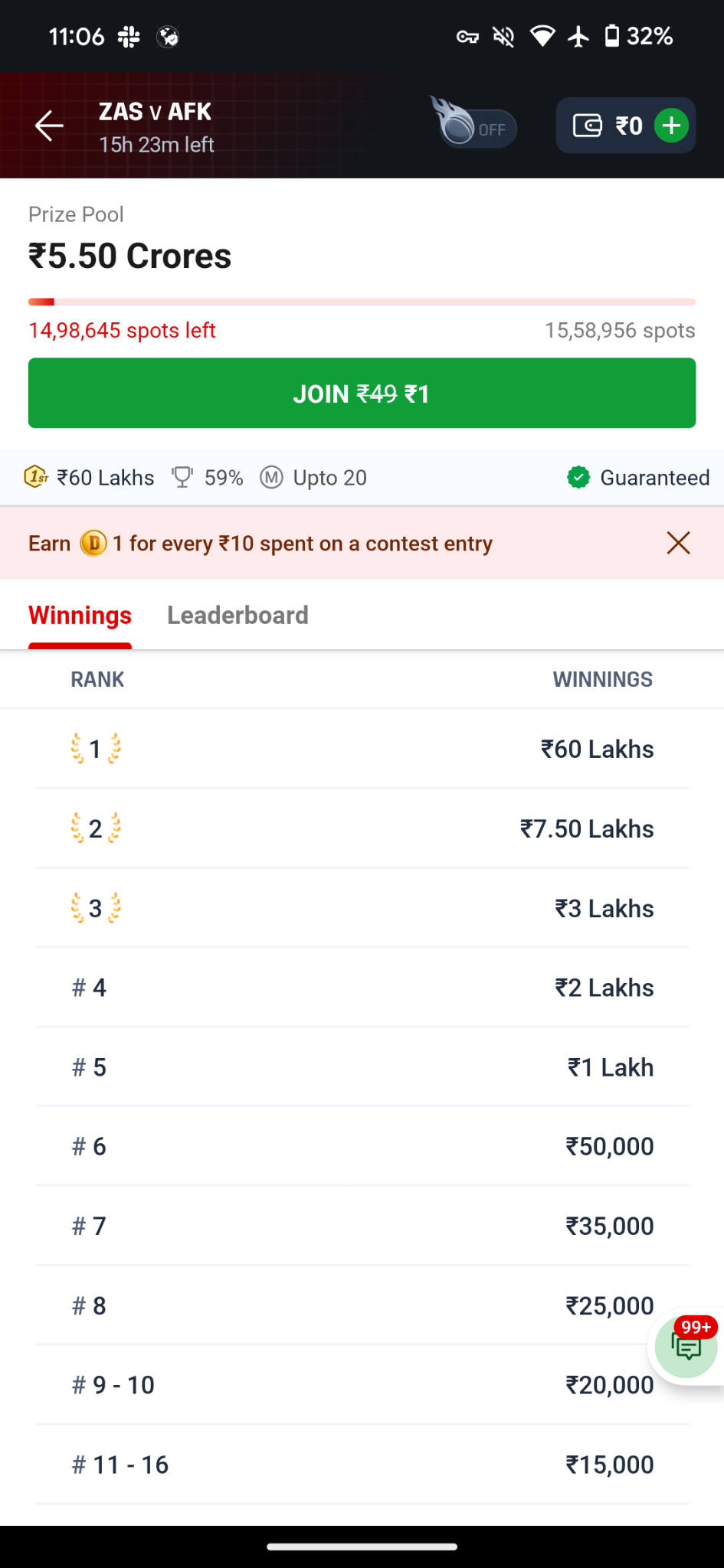}
        \label{fig:d11_3}
    }
    \caption{Screenshots of the Dream11 app, with currency denoted in ``Lakhs'' (one hundred thousand) and ``Crores'' (10 million). (a) shows the upcoming sports matches for which Dream11 hosts contests. (b) shows the currently available contests of a specific match (ZAS vs AFK), where each contest box contains information on the total prize pool, the top prize, the \% of winners, and the maximum number of entries per user. (c) is the contest details page of a specific contest, which further specifies the full prize structure.}
    \label{fig:d11}
\end{figure}

\subsection{Examples of Identified Nests}
\label{app:nest-examples}
The example nests below are drawn from the nest partitions identified on different days of the deployment. Since the partition is re-estimated daily, a given contest may belong to different nests across Tables~\ref{tab:essentially_same} to \ref{tab:irrational}.
\begin{table}[ht]
\centering
\caption{A nest consisting of two contests that are not directly close in the original feature space, but are similar once the Winner Ratio and Prize Ratio are considered.}
\begin{tabular}{lrrrrrr}
\toprule
Contest ID & Entry Fee & Prize Pool & \# of Contestants & \# of Winners & \textit{Winner Ratio} & \textit{Prize Ratio} \\
\midrule
5  & 21 & 35  & 2  & 1 & 0.50 & 0.83 \\
23 & 36 & 300 & 10 & 5 & 0.50 & 0.83 \\
\bottomrule
\end{tabular} 
\label{tab:essentially_same}
\end{table}

\begin{table}[ht]
\centering
\caption{A nest consisting of winner-take-all contests.}
\begin{tabular}{lrrrrrr}
\toprule
Contest ID & Entry Fee & Prize Pool & \# of Contestants & \# of Winners & \textit{Winner Ratio} & \textit{Prize Ratio} \\
\midrule
44 & 179  & 630  & 4 & 1 & 0.25 & 0.88 \\
51 & 1150 & 3000 & 3 & 1 & 0.33 & 0.87 \\
54 & 89   & 310  & 4 & 1 & 0.25 & 0.87 \\
76 & 525  & 1800 & 4 & 1 & 0.25 & 0.86 \\
\bottomrule
\end{tabular}
\label{tab:small_size}
\end{table}

\begin{table}[ht]
\centering
\caption{A nest consisting of low-entry-fee contests with similar Winner Ratios but different tradeoffs between Prize Ratio and number of winners.}
\begin{tabular}{lrrrrrr}
\toprule
Contest ID & Entry Fee & Prize Pool & \# of Contestants & \# of Winners & \textit{Winner Ratio} & \textit{Prize Ratio} \\
\midrule
10 & 21  & 7925 & 500 & 225 & 0.45 & 0.75 \\
20 & 45  & 810  & 21  & 9   & 0.43 & 0.86 \\
22 & 76  & 1368 & 21  & 9   & 0.43 & 0.86 \\
23 & 36  & 300  & 10  & 5   & 0.50 & 0.83 \\
33 & 125 & 5000 & 50  & 25  & 0.50 & 0.80 \\
\bottomrule
\end{tabular}
\label{tab:irrational}
\end{table}
\end{document}